\documentclass[12pt,a4paper]{article}

\usepackage[margin=1in]{geometry}

\pagenumbering{arabic}

\usepackage[english,activeacute]{babel}
\usepackage{etex}
\usepackage[latin1]{inputenc}
\usepackage[all]{xy}
\usepackage[latin1]{inputenc}
\usepackage{xypic,color}
\usepackage{latexsym}
\usepackage{mathtools}
\usepackage{amsmath}
\usepackage{amssymb}
\usepackage{wasysym}
\usepackage{mathrsfs}
\usepackage{euscript}
\usepackage{amsthm}
\usepackage{amsmath}
\usepackage{amsfonts}
\usepackage{cancel}
\usepackage{graphicx,color}
\usepackage{multicol}
\usepackage{color}
\usepackage{multirow}
\usepackage{makeidx}
\usepackage{upgreek}
\usepackage{setspace} 
\usepackage{import}
\usepackage{float}
\usepackage{mleftright}
\usepackage{stmaryrd}
\usepackage[table,xcdraw]{xcolor}
\restylefloat{table}
\usepackage{faktor}

\newcommand{\brbinom}[2]{\genfrac{[}{]}{0pt}{}{#1}{#2}}


\usepackage[T1]{fontenc}

\usepackage{array}
\newcommand{\PreserveBackslash}[1]{\let\temp=\\#1\let\\=\temp}
\newcolumntype{C}[1]{>{\PreserveBackslash\centering}p{#1}}
\newcolumntype{R}[1]{>{\PreserveBackslash\raggedleft}p{#1}}
\newcolumntype{L}[1]{>{\PreserveBackslash\raggedright}p{#1}}

\usepackage{bbold}


\usepackage{tikz}
\usepackage{tikz-cd}
\usetikzlibrary{matrix,arrows,calc,decorations.pathmorphing,fit,shapes.geometric,decorations.pathreplacing,positioning}

\usepackage{xcolor}
\definecolor{babyblueeyes}{rgb}{0.54, 0.81, 0.94}
\definecolor{blizzardblue}{rgb}{0.67, 0.9, 0.93}
\definecolor{blue(munsell)}{rgb}{0.0, 0.5, 0.69}
\definecolor{bluegray}{rgb}{0.4, 0.6, 0.8}
\definecolor{bondiblue}{rgb}{0.0, 0.58, 0.71}
\definecolor{cadetblue}{rgb}{0.37, 0.62, 0.63}
\definecolor{carolinablue}{rgb}{0.6, 0.73, 0.89}
\definecolor{cinnamon}{rgb}{0.82, 0.41, 0.12}
\definecolor{darkcandyapplered}{rgb}{0.64, 0.0, 0.0}
\definecolor{darkcyan}{rgb}{0.0, 0.55, 0.55}
\definecolor{darkmidnightblue}{rgb}{0.0, 0.2, 0.4}
\definecolor{darkpastelblue}{rgb}{0.47, 0.62, 0.8}
\definecolor{frenchblue}{rgb}{0.0, 0.45, 0.73}

\usepackage{url}
\usepackage[bookmarksopen,colorlinks,allcolors=frenchblue,urlcolor=frenchblue]{hyperref}

\makeindex

\usepackage{scalerel}

\usepackage{wrapfig}
\usepackage{ifpdf}

\usepackage{syntonly}

\DeclareMathOperator{\ob}{\mathsf{ob}}
\DeclareMathOperator{\id}{\mathsf{id}}

\DeclareMathOperator{\im}{Im}

\DeclareMathOperator{\col}{\mathsf{col}}
\DeclareMathOperator{\source}{\mathsf{s}}
\DeclareMathOperator{\target}{\mathsf{t}}

\DeclareMathOperator{\upk}{\textup{k}}
\DeclareMathOperator{\upf}{\textup{f}}
\DeclareMathOperator{\upg}{\textup{g}}
\DeclareMathOperator{\uph}{\textup{h}}

\DeclareMathOperator{\upj}{\textup{j}}
\DeclareMathOperator{\upn}{\textup{n}}

\DeclareMathOperator{\upr}{\textup{r}}
\DeclareMathOperator{\upb}{\textup{b}}
\DeclareMathOperator{\upo}{\textup{o}}

\DeclareMathOperator{\upu}{\textup{u}}
\DeclareMathOperator{\upv}{\textup{v}}

\DeclareMathOperator{\upd}{\textup{d}}

\DeclareMathOperator{\Aalg}{\mathsf{A}}
\DeclareMathOperator{\Balg}{\mathsf{B}}

\DeclareMathOperator{\Set}{\mathsf{Set}}
\DeclareMathOperator{\sSet}{\mathsf{Set}_{\Delta}}

\DeclareMathOperator{\Vrect}{\mathsf{V}}
\DeclareMathOperator{\Wrect}{\mathsf{W}}
\DeclareMathOperator{\Urect}{\mathsf{U}}

\DeclareMathOperator{\Op}{\EuScript{O}}

\DeclareMathOperator{\OpP}{\EuScript{P}}
\DeclareMathOperator{\OpN}{\EuScript{N}}
\DeclareMathOperator{\OpB}{\EuScript{B}}

\DeclareMathOperator{\M}{\EuScript{M}}

\DeclareMathOperator{\V}{\EuScript{V}}

\DeclareMathOperator{\Frect}{\mathsf{F}}
\DeclareMathOperator{\Grect}{\mathsf{G}}
\DeclareMathOperator{\Mrect}{\mathsf{M}}
\DeclareMathOperator{\Drect}{\mathsf{D}}

\DeclareMathOperator{\Crect}{\mathsf{C}}

\DeclareMathOperator{\Lrect}{\mathsf{L}}

\DeclareMathOperator{\Qrep}{\mathfrak{Q}}

\DeclareMathOperator{\Srect}{\mathsf{S}}

\DeclareMathOperator{\X}{\mathsf{X}}
\DeclareMathOperator{\Xrect}{\mathsf{X}}
\DeclareMathOperator{\Y}{\mathsf{Y}}

\DeclareMathOperator{\T}{\mathsf{T}}

\DeclareMathOperator{\Q}{\mathsf{Q}}
\DeclareMathOperator{\q}{\mathsf{q}}
\DeclareMathOperator{\Cof}{\mathsf{Cof}}
\DeclareMathOperator{\Fib}{\mathsf{Fib}}
\DeclareMathOperator{\QFT}{\mathsf{QFT}}

\DeclareMathOperator{\Mon}{\mathsf{Mon}}
\DeclareMathOperator{\Map}{\mathsf{Map}}
\DeclareMathOperator{\Qft}{\mathfrak{Qft}}

\DeclareMathOperator{\Disc}{\mathsf{Disc}}

\DeclareMathOperator{\Ho}{\mathsf{Ho}}
\DeclareMathOperator{\Fun}{\mathsf{Fun}}

\DeclareMathOperator{\h}{\mathsf{h}}

\DeclareMathOperator{\Ch}{\mathsf{Ch}}

\DeclareMathOperator{\pt}{\mathrm{pt}}

\DeclareMathOperator{\Alg}{\mathsf{Alg}}

\usepackage{titlesec}



\usepackage{float}

\usepackage{fancyhdr}
\pagestyle{fancy}


\lhead[]{}
\chead[]{}
\rhead[]{}




\usepackage{tikz}
\usetikzlibrary{calc,decorations.pathmorphing,shapes}

\newcounter{sarrow}

 \newtheorem{thm}{Theorem}[section]
 
 \newenvironment{taggedtheorem}[1]
 {\taggedtheoremx}
 {\endtaggedtheoremx}
 \newtheorem{cor}[thm]{Corollary}
 
 \newenvironment{taggedcor}[1]
 {\taggedcorx}
 {\endtaggedcorx}
 \newtheorem{lem}[thm]{Lemma}
 \newtheorem{prop}[thm]{Proposition}
 
 \newenvironment{taggedprop}[1]
 {\taggedpropx}
 {\endtaggedpropx}
 
 \newtheorem{hyp}[thm]{Hypothesis}
 \newtheorem{assump}[thm]{Assumption}
 
 \theoremstyle{definition}
 \newtheorem{defn}[thm]{Definition}
 \newtheorem{notat}[thm]{Notation}
 
 \theoremstyle{remark}
 \newtheorem{rem}[thm]{Remark}
 \newtheorem{rems}[thm]{Remarks}
 
 \newtheorem{ex}[thm]{Example}

 \bibliographystyle{plainurl}

\title{
\begin{Large}
\textsc{New model category structures for Algebraic Quantum Field Theory}	
\end{Large}
}
\author{Victor Carmona\thanks{The author was partially supported by the Spanish Ministry of Economy under the grant MTM2016-76453-C2-1-P (AEI/FEDER, UE), by the Andalusian Ministry of Economy and Knowledge and the Operational Program FEDER 2014-2020 under the grant US-1263032, by grant PID2020-117971GB-C21 of the Spanish Ministry of Science and Innovation, and grant FQM-213 of the Junta de Andaluc\'ia. He was also partly supported by Spanish Ministry of Science, Innovation and Universities grant FPU17/01871.}
}

\begin{document}

\maketitle
\vspace*{1mm}
\begin{abstract}
In this paper we define and compare several new Quillen model structures which present the homotopy theory of algebraic quantum field theories. In this way, we expand foundational work of Benini, Schenkel and collaborators \cite{benini_higher_2019} by providing a richer framework to detect and treat homotopical phenomena in quantum field theory. Our main technical tool is a new \emph{extension model} structure on operadic algebras which is constructed via (right) Bousfield localization.

This document also includes a correction to the published article in Appendix \ref{app_Corrigendum}. \vspace*{5mm}\\
\textbf{Keywords:} Algebraic quantum field theory, homotopy theory, model category, operad.	
\end{abstract}

\tableofcontents

\begin{section}{Introduction}

The problem of applying quantum mechanical principles to classical systems of fields was the foundational idea which motivates the emersion of Quantum Field Theories. However, a general formulation of what is precisely a Quantum Field Theory seems to be rather elusive. This fact gave rise to the appearance of a broad variety of competing definitions to deal with them formally. A very rough classification of such candidates could be: perturbative models, which use the knowledge about simpler free models \cite{rejzner_perturbative_2016}; functorial approaches, which are centered on the study of spaces of fields and propagators \cite{lurie_classification_2009,stolz_supersymmetric_2011}; and algebraic formulations, which deal with the possible algebraic structures that the collection of observables may carry \cite{fewster_algebraic_2015}\footnote{Of course, this classification is far from sharp.}. In this work, we focus on the algebraic picture, i.e.\ on the formal study of algebras of observables, which is commonly known as Algebraic Quantum Field Theory and usually denoted AQFT. 

Currently, the denomination AQFT should be seen as a great umbrella which covers a variety of different examples (see Section \ref{sect_Prelim}). Nonetheless, let us briefly recall a relaxed Haag-Kastler axiomatization of AQFTs \cite{haag_algebraic_1964} to get a grasp of the objects under study. 
 Fix a Lorentzian manifold $\Mrect$ which should be interpreted as the background spacetime. Then, an AQFT over $\Mrect$ is given by a functorial assignment of local operator algebras to suitable open subregions of $\Mrect$, i.e. a rule $\Aalg$ that assigns   
\begin{itemize}
	\item an unital associative algebra $\Aalg(\Urect)$ to each suitable open subset $\Urect\subseteq\Mrect$ and 
	\item an algebra morphism $\Aalg(\Urect)\to\Aalg(\Vrect)$ to each inclusion $\Urect\subseteq\Vrect$ in  $\Mrect$
\end{itemize}
subject to functoriality constraints and which satisfies \emph{causality} and a \emph{time-slice axiom}. On the one hand, causality encapsulates the relativistic nature of observables; it roughly says that observables coming from ``spacelike separated" open regions commute. On the other hand, the time-slice axiom asserts that observables on a region only depend on an arbitrary small neighborhood of any of its Cauchy surfaces, forcing time evolution to be controlled by well-defined initial value problems. 

Such a simple axiom system has lead to several applications in physics (see \cite{fredenhagen_quantum_2007}). To name one of them strongly related with our work, Fredenhagen in \cite{fredenhagen_global_1993} was able to compute global non-perturbative phenomena over the circle by gluing in a precise way observables supported on intervals. Therefore, Fredenhagen's contribution deserves the name of \emph{local-to-global principle} and it addresses the important question of whether or not global observables are fully determined by local observables and how one can reconstruct global observables from local data. 

Our starting point comes from a recent development in the field of AQFTs which has been undertaken in the program of Benini, Schenkel and collaborators \cite{benini_linear_2020, benini_algebraic_2018,benini_homotopy_2015,benini_homotopy_2019, benini_operads_2021}. Their fundamental contribution was to identify operadic constructions, and their importance, on previous axiomatizations and foundations, e.g.\ providing a simple axiom system that covers a variety of examples (orthogonal categories) or justifying Fredenhagen's local-to-global principle \cite{fredenhagen_generalizations_1990} from a structured point of view, as well as detecting its weakness when causality comes into play. Reasonably, a good deal of new questions arises by the introduction of operad theory into the branch of AQFTs, as mentioned in the survey \cite{benini_higher_2019}. 

Operad theory \cite{markl_operads_2002} has been proven to be a fruitful mathematical machinery to deal with algebraic structures. Its importance in mathematics is doubtless due to its multidisciplinarity, as it appears in homotopy theory, algebraic topology, geometry, category theory, mathematical physics or mathematical programming. For instance, this theory has also been a valuable tool in Kontsevich's work on quantization of Poisson manifolds \cite{kontsevich_deformation_2003} or Costello-Gwilliam's factorization algebras in perturbative field theories \cite{costello_factorization_2017}. In the AQFT world, one of the main advantages of working with operads is the possibility of introducing manageable and computable homotopical techniques in the form of Quillen model structures, as observed in \cite{benini_homotopy_2019}. On the one hand, embracing homotopical mathematics permits a better understanding of non discrete phenomena such as gauge symmetries or the presence of anti-fields and anti-ghosts in the BV-BRST formalism (e.g. \cite{benini_quantum_2017, benini_homotopy_2015}). These facts motivate a deeper immersion into the homotopy theory of AQFTs. On the other hand, the presence of a Quillen model structure brings to our disposal an assorted toolkit to understand a particular homotopy theory.  

In \cite{benini_operads_2021, bruinsma_algebraic_2019}, the algebraic structure of an AQFT is encoded via a naturally defined operad, and so it is reasonable to study the homotopy theory of AQFTs using a canonical model structure on operadic algebras \cite{white_bousfield_2018} (see Theorem \ref{thm_ProjectiveModelOnPlainAQFTs} and the previous discussion). However, this approach is not well suited to introduce further axioms on AQFTs which are not structural. For instance, it does not take into account dynamic axioms or local-to-global principles properly. The present work exploits the operadic description of AQFTs one step further to find new model structures which fix these problems to study more advanced homotopy theories.

Let us highlight the main results achieved in this work and their motivation. 

In the sequel of this introduction, readers that are not familiar with orthogonal categories (Definition \ref{defn_OrthogonalCategory}) may interpret an orthogonal category $\Crect^{\perp}$ as the category of suitable open regions of a Lorentzian manifold $\Mrect$, together with the information about causal disjointness. Also, we will denote by $\QFT(\Crect^{\perp})$ the ordinary category of AQFTs over $\Crect^{\perp}$ (valued in chain-complexes over a field of characteristic 0) in the remainder of this introduction.

When dealing with the homotopy theory of AQFTs, a dynamical axiom such as \emph{time-slice} (Definition \ref{defn_QFTsSatisfyingTimeSliceAxiom}) seems to be excessively demanding as the example provided by \cite[Theorem 6.8]{benini_quantum_2017} shows. Thus, a homotopy meaningful variant of it, the \emph{homotopy time-slice axiom} (Definition \ref{defn_WeakTimeSliceAxiom}), must be introduced. We propose two different model structures dealing with this new axiom, as it is summarized in the following theorem (see Subsection \ref{subsect_WeakTimeSliceAxiom}).
\begin{taggedtheorem}{\textbf{A}}\label{thm:A} Let $\Crect^{\perp}$ be an orthogonal category and $\Srect$ a set of maps in $\Crect$.
 Then,
 \begin{itemize}
 	\item the category of AQFTs over $\Crect^{\perp}$ satisfying the homotopy $\Srect$-time-slice axiom admits a Quillen model structure denoted by  $\Qft_{\h\Srect}(\Crect^{\perp})$ with weak equivalences and fibrations defined objectwise (Proposition \ref{prop_STRUCTUREDweakTimeSliceAQFTModel});
 	
 	\item the category $\QFT(\Crect^{\perp})$ admits a Quillen model structure denoted by $\mathsf{L}_{\mathcal{S}}\Qft(\Crect^{\perp})$ presenting the homotopy theory of AQFTs over $\Crect^{\perp}$ satisfying the homotopy $\Srect$-time-slice axiom (Proposition \ref{prop_WeakTimeSliceModel}).
 \end{itemize}
Moreover, both model structures are Quillen equivalent (Theorem \ref{thm_ComparisonWeakTimeSliceModels}). 
\end{taggedtheorem}
Each model category in Theorem \ref{thm:A} deals with the homotopy time-slice axiom emphasizing one aspect of it: as additional structure or as a property of AQFTs. This dual perspective is applied in the last part of Section \ref{sect_TimeLocalization} (from Definition \ref{defn_HomotopyConstancy} to the end of the cited section) to answer \cite[Open Problem 4.14]{benini_higher_2019}. Originally, Benini-Schenkel's problem was of technical nature (asking if a property that they defined on an AQFT captures our homotopy time-slice axiom). We expand on our theory to connect their problem with the \textit{strictification of the homotopy time-slice axiom}. That is, asking if the homotopy time-slice axiom is equivalent to the classical time-slice axiom. By providing Examples \ref{ex_HomotopySconstantTQFT} and \ref{ex_HomotopyConstancy}, we show that this is not the case in general and we identify an obstruction for a positive answer for this strictification problem.

%

As commented before, an important question about AQFTs is if its local behavior, let us say on causal diamonds, completely determines its global one. In this respect, a homotopical local-to-global principle (Definition \ref{defn_FredenhagenLocalToGlobalPrinciple}) that revisits that of Fredenhagen was proposed by Benini, Schenkel and collaborators \cite{benini_higher_2019}. We incorporate such a  principle into a new model structure whose construction is an instance of a general procedure explained in Section \ref{sect_ExtensionModelStructure} (which culminates in Theorem \ref{thm_ExistenceOfExtensionModelAndProperties}). 

In the remainder of this introduction we consider an additional orthogonal category $\Crect_{\diamond}^{\perp}$ which controls the local data in the local-to-global principle. If $\Crect$ consists on suitable open regions in a Lorentzian manifold as before, $\Crect_{\diamond}$ may be seen as the full subcategory of $\Crect$ spanned by causal diamonds.

\begin{taggedtheorem}{\textbf{B}}[Theorem \ref{thm_LocalToGlobalModel}]
Let $\Crect_{\diamond}^{\perp}\hookrightarrow\Crect^{\perp}$ be an inclusion of a full orthogonal subcategory. Then, the category $\QFT(\Crect^{\perp})$ admits a Quillen model structure denoted by $\Qft^{\Crect_{\diamond}}(\Crect^{\perp})$ presenting the homotopy theory of algebraic quantum field theories over $\Crect^{\perp}$ which satisfy the $\Crect_{\diamond}$-local-to-global principle. Moreover, $\Qft^{\Crect_{\diamond}}(\Crect^{\perp})$ is Quillen equivalent to the projective model structure on $\QFT(\Crect^{\perp}_{\diamond})$, denoted $\Qft(\Crect_{\diamond}^{\perp})$. 
\end{taggedtheorem}


Finally, for the combination of both axioms, homotopy time-slice plus homotopical local-to-global principle, we find and compare two new model structures.

\begin{taggedtheorem}{\textbf{C}}\label{thm:C} Let $\Crect_{\diamond}^{\perp}\hookrightarrow\Crect^{\perp}$ be an inclusion of a full orthogonal subcategory and $\Srect$ a set of maps in $\Crect$. Then,
	\begin{itemize}
		\item the category of AQFTs over $\Crect^{\perp}$ satisfying the homotopy $\Srect$-time-slice axiom admits a Quillen model structure denoted by  $\Qft_{\h\Srect}^{\Crect_{\diamond}}(\Crect^{\perp})$ presenting the homotopy theory of those of them adhering to the $\Crect_{\diamond}$-local-to-global principle (Theorem \ref{thm_MixingModelWithStructure});
		
		\item the category $\QFT(\Crect^{\perp})$ admits a Quillen model structure,   $\mathsf{L}_{\mathcal{S}}\Qft^{\Crect_{\diamond}}(\Crect^{\perp})$, presenting the homotopy theory of AQFTs over $\Crect^{\perp}$ satisfying the homotopy $\Srect$-time-slice axiom and the $\Crect_{\diamond}$-local-to-global principle (Theorem  \ref{thm_MixingModelWithProperties}).
	\end{itemize}
	Moreover, both model structures are Quillen equivalent (Theorem \ref{thm_EquivalenceOfMixingModels}). 
	
If the set of maps $\Srect$ lies completely in $\Crect_{\diamond}$, situation that we denote by $\Srect=\Srect_{\diamond}$, then both model structures are also Quillen equivalent to $\Qft_{\h\Srect_{\diamond}}(\Crect_{\diamond}^{\perp})$, which is the model structure established in Theorem \ref{thm:A} for the pair  $(\Crect_{\diamond}^{\perp},\Srect_{\diamond})$ (Theorem \ref{thm_ComparingMixingWithDIAMONDS}).	
\end{taggedtheorem}

Whereas the last part of Theorem \ref{thm:C} could be applied in general, there are physically motivated examples not covered by this statement. For this reason, we show that the analogous result holds for AQFTs over a spacetime with or without timelike boundary (Example \ref{ex_COpenMixingDiamonds}) and for locally covariant conformal AQFTs in two spacetime dimensions  (Example \ref{ex_CLocMixingDiamonds}).

Due to the number of constructions appearing in this paper, we have included a visual glossary of AQFT related model structures and their comparisons in Glossary  \ref{sect_ComparisonFieldTheories}.

The construction of new model structures to present the homotopy theory of AQFTs have interesting consequences other than allowing derived constructions (e.g.\ derived linear quantization  \cite{bruinsma_algebraic_2019}), which, in our view, is one of the main goals in Benini-Schenkel's program. For instance, they can be used to compare different axioms for AQFTs, as we do in this work, to canonically modify a field theory to impose desiderable properties (see Example \ref{ex_ChiralFTsToOrientedTFTs}) or even to establish equivalences between different axiomatizations of quantum field theories, e.g.\ to try to upgrade the main result of \cite{benini_model-independent_2020} to cover gauge theories. In fact, in \cite{benini_strictification_2022}, we make use of the model structures defined in the present work to address the strictification problem for the homotopy time-slice axiom. In contrast with our answer to \cite[Open Problem 4.14]{benini_higher_2019} alluded before, we show that this strictification problem can be solved in a variety of physically interesting examples, e.g.\ for Haag-Kastler-type AQFTs on a fixed globally hyperbolic Lorentzian manifold (with or without time-like boundary), locally covariant AQFTs in one spacetime dimension or locally covariant conformal AQFTs in two spacetime dimensions (see Remark \ref{rem_StrictificationPAPER}). 

Coming back to the general construction of Section \ref{sect_ExtensionModelStructure}, we should mention that it has its own importance in abstract homotopy theory and so several applications are expected. Indeed, in \cite{carmona_factorization_2021}, the homotopy theory of factorization algebras is investigated by these means. 

It is worth noting that most of the results in this work are valid or adaptable for a closed symmetric monoidal model category $\V$ satisfying some general requirements consisting of a combination of hypothesis appearing in Section \ref{sect_ExtensionModelStructure} and  \cite[Section 6]{carmona_factorization_2021}. However, we have chosen to state them for chain-complex valued algebraic field theories over a field $\mathbb{k}$ of characteristic 0 for simplicity and due to connections with the existing literature (see Remark \ref{rem:WhyComplexesChar0}). This restriction is not always made explicit to enforce the cited generality. For instance, Sections \ref{sect_Prelim} and  \ref{sect_ExtensionModelStructure} are written for a general $\V$, whereas the rest of the sections should be interpreted under the assumption $\V=\Ch_{\mathbb{k}}$.

\paragraph{Outline:}
We include a summary of the contents of this work. Section \ref{sect_Prelim} presents basic definitions in the theory of AQFTs and a straightforward construction of the canonical operad developed in \cite{benini_operads_2021} and \cite{bruinsma_coloring_2019}. The rest of the paper is organized depending on which further axiom of AQFTs is under study. Section \ref{sect_TimeLocalization} is devoted to the analysis and discussion of the time-slice axiom and its associated homotopy theories, going from the strict version to the homotopical one, which is seen as part of structure or as a property of AQFTs. The local-to-global principle introduced in the work of Benini, Schenkel and collaborators, and its interaction with the time-slice axiom, is the focus on Section \ref{sect_CausalityColocalization}. To fully understand the constructions in this last section, the content of Section  \ref{sect_ExtensionModelStructure} is necessary. It yields a machine to construct cellularizations of model categories of operadic algebras. Despite this fact, this homotopical discussion can be used as a necessary black box, since it reproduces the local-to-global principle model categorically. At the end, we include a table and a diagram (Glossary  \ref{sect_ComparisonFieldTheories}) condensing the field theory related material obtained in this paper.

\end{section}

\begin{section}{Algebraic quantum field theories}\label{sect_Prelim}
	
\begin{subsection}{Definition and examples}
	
Algebraic quantum field theory is an axiomatic approach to quantum physics which focuses on the assignation of algebras of observables to spacetime regions. The fundamental idea of such objects is that the algebraic structure on the observables is supposed to capture quantum information, so it is not commutative in general, although observables coming from ``spacelike separated" spacetime regions do commute. The notion of orthogonal category formalizes the concept of spacetime regions and the relation of being ``spacelike separated".

\begin{defn}{\cite[Definition 8.2.1]{yau_homotopical_2019}}\label{defn_OrthogonalCategory} An \emph{orthogonal category} $\Crect^{\perp}$ is a pair $(\Crect,\perp)$ given by a small category $\Crect$ with choices of sets of morphisms
	$$
	\perp\;=\Big(\perp(\{\Urect,\Wrect\};\Vrect)\subseteq\Crect(\Urect,\Vrect)\times\Crect(\Wrect,\Vrect)\Big)_{\Urect,\Wrect,\Vrect\in\ob\Crect}
	$$
which satisfy:
\begin{itemize}
\item (Symmetry) $\upf\perp\upg$ if and only if $\upg\perp \upf$,	
\item (Stability under composition) $\uph\upf\upk\perp \uph\upg\upk'$ for $\upf\perp \upg$ and composable maps $\uph,\upk,\upk'$,
\end{itemize}
where we make use of the notation  $\upf\perp\upg$, if $\upf\colon \Urect\to\Vrect\leftarrow\Wrect: \upg$ and $(\upf,\upg)\in\perp(\{\Urect,\Wrect\};\Vrect)$. 

It is said that two $\Crect$-morphisms $\upf$ and $\upg$ are \emph{orthogonal} if $\upf\perp\upg$.
\end{defn}	
	
\begin{notat}
	We call the objects of an orthogonal category  \emph{spacetime regions} to maintain the physical motivation. 
\end{notat}	
	
In turn, the assignment of algebras of observables to spacetime regions is captured by the following definition. Let us fix a target for field theories $\V$, meaning that observables over a spacetime region will belong to $\V$. For concreteness, one can consider $\V=\Ch_{\mathbb{k}}$ the symmetric monoidal category of chain complexes over $\mathbb{k}$. More abstractly, it suffices to assume that $\V$ is a (bicomplete and closed)\footnote{These properties are not essential for the definition of AQFTs but they will be so for later results.} symmetric monoidal category.

\begin{defn}\label{defn_AQFTs} Let $\Crect^{\perp}$ be an orthogonal category.
  An \emph{algebraic quantum field theory} over $\Crect^{\perp}$ is a $\Crect$-diagram of monoids  $\Aalg\colon\Crect\to\Mon(\V)$ such that for any pair of orthogonal maps $(\upf\colon \Urect \to \Vrect) \perp (\upg\colon \Wrect\to \Vrect)$ in $\Crect$, the following square commutes
		$$
		\begin{tikzcd}[ampersand replacement=\&]
		\& \Aalg(\Urect)\otimes\Aalg(\Wrect)  \ar[rd,"\Aalg(\upf)\otimes\Aalg(\upg)"]\ar[ld,"\Aalg(\upf)\otimes\Aalg(\upg)"'] \\ 
		\Aalg(\Vrect)\otimes\Aalg(\Vrect)\ar[rd,"\upmu_{\Vrect}"] \& \& 
		\Aalg(\Vrect)\otimes\Aalg(\Vrect),\ar[dl,"\upmu^{\textup{op}}_{\Vrect}"']\\
		\& \Aalg(\Vrect) 
		\end{tikzcd}	
		$$
		where $\upmu_{\Vrect}$ (resp.\ $\upmu_{\Vrect}^{\textup{op}}=\upmu_{\Vrect}\circ\,\textup{switch}$) denotes the  multiplication map of $\Aalg(\Vrect)$ (resp.\ opposite multiplication).
		The full subcategory of algebraic quantum field theories within $\Fun(\Crect,\Mon(\V))$ is denoted $\QFT(\Crect^{\perp})$.
\end{defn}

Benini, Schenkel and their collaborators proposed this formalization of algebraic quantum field theories in \cite{benini_operads_2021} in order to cover several proposals in the literature. We collect some examples on the following table depending on the underlying orthogonal category (see also \cite{yau_homotopical_2019} and references therein). \vspace*{4mm}

\begin{table}[H]
	\resizebox{\textwidth}{!}{%
		\begin{tabular}{|C{5.5cm}|C{5cm}|C{5cm}|}
			\hline
			\small{\textit{Underlying category} $\Crect$} &
		 \small{\textit{Orthogonality relation} $\perp$} &
	   \small{\textit{Algebraic quantum field theory}} \\
			\hline 
			\hline
			\small{bounded lattice} & \small{$\wedge$-divisors of 0, \break i.e. $\textup{u}\perp\textup{v}$ $\Leftrightarrow$ $\textup{u}\wedge\textup{v}=0$} & \small{Quantum field theories on bounded lattice}\\
			\hline
			\small{oriented n-manifolds with open embeddings preserving orientation} & \small{disjoint image,\break i.e. $\upf\perp\upg$ $\Leftrightarrow$ $\im\upf\cap\im\upg= \cancel{\textup{o}}$} & \small{Chiral conformal quantum field theories}\\
			\hline
			\small{oriented Riemannian n-manifolds with orientation-preserving isometric open embeddings} & \small{disjoint image} & \small{Euclidean quantum field theories}\\
			\hline 
			\small{oriented, time-oriented, globally hyperbolic Lorentzian manifolds with isometric open embeddings preserving orientation and time-orientation whose image is causally convex} & \small{causally disjoint images,\break i.e. $\upf\perp\upg$ if and only if\break there is no causal curve joining $\im\upf$ and $\im\upg$} & \small{Locally covariant quantum field theories} \\
			\hline
			\small{category of regions for a spacetime with timelike boundary} & \small{causally disjoint regions} & \small{Algebraic quantum field theories on spacetime with timelike boundary}\\
			\hline
		\end{tabular}
	}
\end{table}

\end{subsection}

 \begin{subsection}{Operads and homotopy theory}
 	Given an orthogonal category $\Crect^{\perp}$, there is a functorial construction of an operad $\Op_{\Crect}^{\perp}$ whose algebras are algebraic quantum field theories over $\Crect^{\perp}$. More concretely, there is an equivalence of categories
 	$$
 	\Op_{\Crect}^{\perp}\text{-}\Alg(\V)\simeq \QFT(\Crect^{\perp}).
 	$$
    In this section, we recall the definition of an operad and its algebras, how one can define  $\Op_{\Crect}^{\perp}$ (originally introduced in \cite{benini_operads_2021}) with some additional observations which makes this definition more transparent in our view and how one exploits operadic algebras to study the homotopy theory of AQFTs via model structures. 

 \begin{rem} A particularly comprehensible and complete presentation of operads, their algebras and model structures over such can be found in Fresse's two-volume monograph \cite{fresse_homotopy_2017,fresse_homotopy_2017-1}.
\end{rem}

    \paragraph{Operads real quick.} A \emph{(colored symmetric) operad} is a generalization of a category which admits morphisms with multiple inputs. That is, to give an operad $\Op$ is the same as specifying:
    \begin{itemize}
    	\item a collection of colors (also called objects) $\col\Op$,
    	\item Hom-sets $\Op\brbinom{\underline{\upu}}{\upv}$ for each tuple $(\underline{\upu},\upv)\in{\col\Op}^{\times r}\times \col\Op$ and $r\geq 0$,
    	\item permutation actions $\upsigma^*\colon \Op\brbinom{\upu_1,\dots,\upu_r}{\upv}\to \Op\brbinom{\upu_{\upsigma1},\dots,\upu_{\upsigma r}}{\upv}$ for each  $\upsigma\in \Upsigma_r$,
    	\item identities $\id_{\upu}\in\Op\brbinom{\upu}{\upu}$ and composition maps
    	$$
    	\hspace*{-10mm}\left(\circ_j\colon \Op\brbinom{\upu_1,\dots,\upu_r}{\upv}\times \Op\brbinom{\upo_1,\dots,\upo_m}{\upu_j}\to \Op\brbinom{\upu_1,\dots,\upu_{j-1},\upo_1,\dots,\upo_m,\upu_{j+1},\dots,\upu_r}{\upv} \right)_{1\leq j\leq r}.
    	$$
    \end{itemize}
    and these data are  subject to associativity, unitality and equivariance axioms (\cite[Chapter 4]{yau_homotopical_2019}). 
    
    \begin{rem} As for ordinary categories, it is important to take into account size issues by considering small/large operads, but we will not focus on this point. Also, one can consider $\V$-operads as the obvious generalization of $\V$-enriched categories. 
    \end{rem} 
    
    One of the most important reasons to introduce operads is to describe their algebras. An algebra over an operad is the generalization of a diagram or functor out of a category, or in other words, a representation. For example, one can think of a group $\Grect$ as a category with one object and functors out of this category into, let us say, $\mathsf{Vect}_{\mathbb{k}}$ are the same as $\Grect$-representations over $\mathbb{k}$. With this idea in mind, an \emph{algebra} $\Aalg$ over $\Op$ in $\V$ consists of the following data:
    \begin{itemize}
    	\item a collection $(\Aalg(\upu))_{\upu\in\col\Op}$ of objects in $\V$,
    	\item action maps 
    	$
    	\upmu\colon\Op\brbinom{\upu_1,\dots,\upu_r}{\upv}\otimes \bigotimes_{1\leq j\leq r}\Aalg(\upu_j)\rightarrow \Aalg(\upv)
    	$
    	in $\V$.
    \end{itemize} 
    It is required that this data also satisfies  associativity, unity and equivariance axioms. Of course, all algebras over $\Op$ (or $\Op$-algebras) in $\V$ together with structure preserving morphisms between them conform a category $\Op\text{-}\Alg(\V)$. For examples and more details, we refer to \cite{markl_operads_2002,yau_homotopical_2019}.

    \paragraph{Operad controlling  AQFTs in a nutshell.} Let us construct the operad $\Op_{\Crect}^{\perp}$ controlling AQFTs over an orthogonal category $\Crect^{\perp}$, i.e.\ the operad whose algebras are AQFTs over $\Crect^{\perp}$. 
   
 	First, observe that $\QFT(\Crect^{\perp})$ is a full subcategory of $\Fun(\Crect,\Mon(\V))$, which is itself the category of algebras over a certain operad. To find this first operad, note that the category of monoids in $\V$ is equivalent to the category of $\mathsf{uAss}$-algebras in $\V$, where $\mathsf{uAss}$ denotes the operad of unital associative algebras. Looking at $\Crect$ as an operad as well, we have the identification
 	$$
 	\Fun(\Crect,\Mon(\V))\cong \Crect\text{-}\Alg\big(\mathsf{uAss}\,\text{-}\Alg(\V)\big).
 	$$ 
 	The Boardman-Vogt tensor product (see \cite{weiss_operads_2011} or the original source \cite[Section II.3]{boardman_homotopy_1973}), denoted $\otimes_{\mathtt{BV}}$, combines $\Crect$ and $\mathsf{uAss}$ to obtain another operad, $\Crect\otimes_{\mathtt{BV}}\,\mathsf{uAss}$, which comes with the identification 
 	$$
 	\big(\Crect\underset{\mathtt{BV}}{\otimes}\,\mathsf{uAss}\big)\text{-}\Alg(\V)\cong\Crect\text{-}\Alg\big(\mathsf{uAss}\,\text{-}\Alg(\V)\big). 
 	$$ 
    In other words, the operad governing $\Crect$-diagrams of monoids in $\V$ is just $ \Crect\otimes_{ \mathtt{BV}}\,\mathsf{uAss}$.
 	
 	Within such diagrams, algebraic field theories are those for which certain operations coincide. More precisely, looking at $\Aalg\in\QFT(\Crect^{\perp})$ as an algebra over the Boardman-Vogt tensor product, it must satisfy
 	$$
 	\begin{tikzpicture}[level distance=6mm, sibling distance=8mm]
 	\draw (0,0.9) node {$\Aalg$};
 	\draw (1.2,0) node {\scriptsize $\Vrect$} [grow'=up] 
 	child{node (B) [rectangle, draw=black!50, fill=black!20, inner sep=0,outer sep =0, minimum size = 5mm]  {$\scriptstyle \upmu_{\Vrect}$}  edge from parent {} 
 		child{node (C) [circle, draw=black!50, fill=black!20, inner sep=0,outer sep =0, minimum size = 3mm]  {$\scriptstyle \upf$} 
 			child{node {\scriptsize $\Urect$}}
 		}        child{node (D) [circle, draw=black!50, fill=black!20, inner sep=0,outer sep =0, minimum size = 3mm]  {$\scriptstyle \upg$} 
 			child{node {\scriptsize $\Wrect$}}
 	}};
    \draw (2.4,0.9) node {$=$};
    \draw (3,0.9) node {$\Aalg$};
    \draw (4.2,0) node {\scriptsize $\Vrect$} [grow'=up] 
    child{node (B) [rectangle, draw=black!50, fill=black!20, inner sep=0,outer sep =0, minimum size = 5mm]  {$\scriptstyle \upmu^{\text{op}}_{\Vrect}$}  edge from parent {} 
    	child{node (C) [circle, draw=black!50, fill=black!20, inner sep=0,outer sep =0, minimum size = 3mm]  {$\scriptstyle \upf$} 
    		child{node {\scriptsize $\Urect$}}
    	}        child{node (D) [circle, draw=black!50, fill=black!20, inner sep=0,outer sep =0, minimum size = 3mm]  {$\scriptstyle \upg$} 
    		child{node {\scriptsize $\Wrect$}}
    }};
    \draw [decorate,decoration={brace,amplitude=10pt},xshift=-0.5cm,yshift=0pt]
    (1.2,-0.15) -- (1.2,2.05) node [black,midway,xshift=-0.6cm]
    {};
    \draw [decorate,decoration={brace,amplitude=10pt,mirror},xshift=+0.6cm, yshift=0pt]
    (1.2,-0.15) -- (1.2,2.05) node [black,midway,xshift=+0.6cm] {};
 	\draw [decorate,decoration={brace,amplitude=10pt},xshift=-0.5cm,yshift=0pt]
 	(4.2,-0.15) -- (4.2,2.05) node [black,midway,xshift=-0.6cm]
 	{};
 	\draw [decorate,decoration={brace,amplitude=10pt,mirror},xshift=+0.6cm, yshift=0pt]
 	(4.2,-0.15) -- (4.2,2.05) node [black,midway,xshift=+0.6cm] {};
 	\end{tikzpicture}
 	$$
 	for all $\upf\perp\upg$ in $\Crect$\footnote{The use of trees to described operads is quite useful and it is justified in, for example,  \cite{markl_operads_2002} or \cite{yau_homotopical_2019}.}. Therefore, one can present this algebraic structure by constructing a suitable quotient of $\Crect\otimes_{ \mathtt{BV}}\,\mathsf{uAss}$ which identifies those operations. This can be done by introducing a free operad $\mathfrak{F}(\mathsf{R}_{\perp})$ that selects these operations, and coequalizing the corresponding maps
 	$$
 \begin{tikzcd}[ampersand replacement=\&]
\mathfrak{F}(\mathsf{R}_{\perp})\ar[r, shift left=.75ex, "\upmu_{\perp}"]\ar[r, shift right=.75ex, "\upmu^{\text{op}}_{\perp}"'] \& \Crect\underset{\mathtt{BV}}{\otimes}\,\mathsf{uAss}\ar[r, "\text{coeq}"] \& \Op_{\Crect}^{\perp}.
 \end{tikzcd}
 	$$ 
 	The underlying symmetric sequence of the free operad on the left is simply
 	$$
 	\mathsf{R}_{\perp}(\{\Urect_1\dots \Urect_{\upn}\};\Vrect)=
 	\left\lbrace\begin{matrix}
 	\coprod_{\perp(\{\Urect_1,\Urect_{2}\};\Vrect)} \mathbb{I} & \text{ if }\;\;\upn=2\\\\
 	\cancel{\textup{o}} & \text{ otherwise}
 	\end{matrix} \right.,
 	$$
 	and by freeness, the map $\upmu_{\perp}$ corresponds to choosing the set of operations 
 		$$
 	\begin{tikzpicture}[level distance=6mm, sibling distance=8mm]
 	\draw (0,0) node {\scriptsize $\Vrect$} [grow'=up] 
 	child{node (B) [rectangle, draw=black!50, fill=black!20, inner sep=0,outer sep =0, minimum size = 5mm]  {$\scriptstyle \upmu_{\Vrect}$}  edge from parent {} 
 		child{node (C) [circle, draw=black!50, fill=black!20, inner sep=0,outer sep =0, minimum size = 3mm]  {$\scriptstyle \upf$} 
 			child{node {\scriptsize $\Urect$}}
 		}        child{node (D) [circle, draw=black!50, fill=black!20, inner sep=0,outer sep =0, minimum size = 3mm]  {$\scriptstyle \upg$} 
 			child{node {\scriptsize $\Wrect$}}
 	}};
   \draw (4.4,1.1) node {where $\upmu_{\Vrect}$ is associated to $\upmu_2\in\mathsf{uAss}(2)$};
   \draw (4.4,0.6) node {and $\upf\perp\upg$ are orthogonal in $\Crect$};
    \draw [decorate,decoration={brace,amplitude=10pt},xshift=-0.5cm,yshift=0pt]
   (-0.5,-0.15) -- (-0.5,2.05) node [black,midway,xshift=-0.6cm]
   {};
   \draw [decorate,decoration={brace,amplitude=10pt,mirror},xshift=+0.6cm, yshift=0pt]
   (7.5,-0.15) -- (7.5,2.05) node [black,midway,xshift=+0.6cm] {};
 	\end{tikzpicture}
 	$$
 	in $\Crect\otimes_{ \mathtt{BV}}\,\mathsf{uAss}$ (and analogously for $\upmu^{\text{op}}_{\perp}$, just replace $\upmu_{\Vrect}$ by $\upmu_{\Vrect}^{\text{op}}$ above).
 	
    The recognition of the Boardman-Vogt tensor product in the construction of $\Op_{\Crect}^{\perp}$ is not vacuous. It leads to a broad generalization of the functorial construction
 		$$
 		\mathsf{OrthCat}\times\mathsf{Operad}_{\{*\}}^{2\pt}(\V)\longrightarrow \mathsf{Operad}(\V), \;\;\; (\Crect^{\perp},\OpP)\mapsto \OpP^{(\upr_1,\upr_2)}_{\Crect^{\perp}}
 		$$
 	discussed in \cite{bruinsma_coloring_2019, bruinsma_algebraic_2019}, where we denote by $\mathsf{OrthCat}$ the category of orthogonal categories and by $\mathsf{Operad}(\V)$ that of $\V$-operads. Now, it can be seen as the composition of:
 	\begin{itemize}
 		\item the Boardman-Vogt tensor $\otimes_{\mathtt{BV}}\colon\mathsf{Cat}\times \mathsf{Operad}(\V)\to \mathsf{Operad}(\V)$, to construct the operad $\OpP_{\Crect}\cong \Crect\otimes_{\mathtt{BV}}\OpP$ encoding $\Crect$-diagrams of $\OpP$-algebras (see \cite{weiss_operads_2011});
 		\item a colimit construction to identify operations in $\OpP_{\Crect}\cong \Crect\otimes_{\mathtt{BV}}\OpP$ as ruled out by the orthogonality relation in $\Crect$.
 	\end{itemize}
    Hence, it is admissible to consider orthogonal categories enriched in $\V$ and more general quotients than those coming from a monochromatic bipointed operad (following the notation on \cite{bruinsma_algebraic_2019}). For example, the category $\mathsf{Loc}$ of globally hyperbolic Lorentzian manifolds \cite{fewster_algebraic_2015} could be endowed with some topological structure and the bipointed operad could have multiple colors. We expect to exploit this flexibility in subsequent work.
    
  \begin{rem}
The recognition of the Boardman-Vogt tensor product provides a canonical presentation by generators and relations of the operad (see \cite{weiss_operads_2011}). As expected, it coincides with the presentation by generators and relations  of $\Op_{\Crect}^{\perp}$ given in \cite{benini_operads_2021}.
  \end{rem}

 	 From now on, operads and algebras are considered in $\V=\Ch_{\mathbb{k}}$, with $\mathbb{k}$ a field of characteristic $0$, unless otherwise specified. This choice also affects our notation, e.g. $\Set$-operads are consider in $\Ch_{\mathbb{k}}$ via the strong monoidal functor $\Set\to\Ch_{\mathbb{k}}$, $\X\mapsto\bigoplus_{\X}\mathbb{k}$.

 	\paragraph{A brief recap on model structures and homotopy theory.} The recognition of AQFTs as operadic algebras was a fundamental step in the program of Benini-Schenkel and collaborators which allowed them to perform physically motivated constructions in a derived way, such as the linear quantization of \cite{bruinsma_algebraic_2019} (see also  \cite{benini_higher_2019} for more examples), that is, yielding the same result for \emph{(weakly) equivalent} AQFTs. Recall that two AQFTs over $\Crect^{\perp}$, $\Aalg$ and $\Balg$, are weakly equivalent if there exists a zigzag of morphisms of AQFTs
 	$$
 	\Aalg\leftarrow\bullet\rightarrow\bullet\leftarrow\cdots\rightarrow\bullet\leftarrow \Balg
 	$$ 
 	which are quasi-isomorphisms when evaluated on every spacetime region $\Urect\in\Crect$.
    This notion is more flexible than being isomorphic and it is necessary when homotopical/homological phenomena is considered, such as field theories with gauge symmetries, to identify theories with the same physical content. 
 	
 	An especially powerful mathematical device designed to deal with this situation: a category, e.g. $\QFT(\Crect^{\perp})$, in which we are interested on notions up to (a specified version of) \emph{weak equivalence} is that of \emph{model category}. The concrete definition of a model category is lengthy and not very inspirational in a first sight. For this reason, we will just briefly revisit some ideas about this notion and refer interested readers to \cite{barwick_left_2010,hirschhorn_model_2003,hovey_model_1999} for more details. 
 	
 	A \emph{model category} or \emph{model structure} $\M$  consists of a tuple $(\Mrect,\Wrect,\Cof,\Fib)$ where $\Mrect$ is a (bicomplete) category and $\Wrect,\Cof,\Fib$ are classes of morphisms in $\Mrect$ subject to several axioms. Arguably, the most important part of this data is $(\Mrect,\Wrect)$ because it dictates what objects are we interested in (given by $\Mrect$) and the specific notion of \emph{weak equivalence} between them that we are fixing (a map in $\Wrect$ is said to be a weak equivalence and two objects are weak equivalent if they can be connected by a zigzag of maps in $\Wrect$). We can refer to this as the \emph{homotopy theory associated to} $(\Mrect,\Wrect)$. In abstract terms, one is concerned about the category $\Ho\M$ obtained from $\Mrect$ by formally adding inverses for the maps in $\Wrect$.  
 	
 	The remaining classes of morphisms $\Cof$ and $\Fib$ yield additional data that allow us to perform constructions and to control notions up to weak equivalence. More concretely, the axioms of a model category mimic the classical lifting and factorization relations between (acyclic) cofibrations and (acyclic) fibrations encountered within the homotopy theory of topological spaces: very concisely, $(\Cof\cap \Wrect,\Fib)$ and $(\Cof,\Fib\cap\Wrect)$ are weak factorization systems in $\Mrect$; and those are enough to explore the \emph{homotopy theory associated to} $(\Mrect,\Wrect)$. 
 	
 	\begin{notat} We will say that the model category $\M$ \emph{presents the homotopy theory of} $(\Mrect,\Wrect)$.
 	\end{notat}
 	 
 	 We should also mention that model categories can be connected via \emph{Quillen adjunctions} (also called \emph{Quillen pairs}).
 	 
 	 \begin{defn}
 	  A Quillen adjunction $\Frect\colon \M\rightleftarrows \M':\Grect$ is a pair of adjoint functors $\Frect\dashv\Grect$ between the underlying categories associated to $\M$ and $\M'$ satisfying one of the following equivalent conditions:
 	 \begin{itemize}
 	 	\item $\Frect$ preserves cofibrations ($\Cof$) and acyclic cofibrations  ($\Cof\cap\Wrect$);
 	 	\item $\Grect$ preserves fibrations ($\Fib$) and acyclic fibrations ($\Fib\cap\Wrect$);
 	 	\item $\Frect$ preserves cofibrations and $\Grect$ preserves fibrations;
 	 	\item $\Frect$ preserves acyclic cofibrations and $\Grect$ preserves acyclic fibrations.
 	 \end{itemize}
 	 \end{defn}
 
 	 The main consequence of $\Frect\dashv\Grect$ being a Quillen adjunction is that there is an associated \emph{derived adjunction} $\mathbb{L}\Frect\dashv\mathbb{R}\Grect$. That is, there are canonical modifications $\mathbb{L}\Frect$ of $\Frect$ and $\mathbb{R}\Grect$ of $\Grect$ that respect  being weak equivalent (derived constructions as mentioned at the beginning of this discussion) and which formally conform an adjunction 
 	 $
 	 \mathbb{L}\Frect\colon \Ho\M\rightleftarrows\Ho\M':\mathbb{R}\Grect.
 	 $
 	 A \emph{Quillen equivalence} is a Quillen adjunction  whose associated derived adjunction is an adjoint equivalence $\Ho\M\simeq \Ho\M'$. Hence, when two model categories are connected by a (zigzag of) Quillen equivalence(s), they \emph{present the same homotopy theory}.
 	 
 	 A particularly useful tool in the world of model categories is that of \emph{left} or \emph{right Bousfield localization}. These processes consist of a modification of the classes of maps $(\Wrect,\Fib)$ (for left Bousfield localization) or $(\Wrect,\Cof)$ (for right Bousfield localization) of a given model category to present different homotopy theories. In this work we will make extensive use of them and we recommend the interested reader to consult \cite{barwick_left_2010}, \cite{beke_sheafifiable_2000} or \cite{hirschhorn_model_2003} for definitions, notation and motivation.
 	 
 	 \paragraph{Homotopy theory of AQFTs.} Now we want to equip the category $\QFT(\Crect^{\perp})$ with a model structure to study field theories up to weak equivalence, i.e.\ with the same physical content. Since we are considering $\V=\Ch_{\mathbb{k}}$ as the target of field theories, there is a first choice of model structure on $\V$ that will be fixed from now on.
 	 
 	  	\begin{assump} The category $\V=\Ch_{\mathbb{k}}$ of complexes over a field $\mathbb{k}$ of characteristic $0$  is endowed with the projective model structure, i.e.\ the class of weak equivalences is that of quasi-isomorphisms and the class of fibrations is that of epimorphisms (see \cite[Section 2.3]{hovey_model_1999}).
 	  \end{assump}
 	 
 	 The recognition of AQFTs as operadic algebras serves to use this model structure on $\V$ to construct another one on $\QFT(\Crect^{\perp})\cong\Op_{\Crect}^{\perp}\text{-}\Alg(\V)$ by \cite[Theorem 4.3.3]{fresse_homotopy_2017-1}: 
 	 
 	 \begin{thm}{\cite[Theorem 3.10]{benini_homotopy_2019}}\label{thm_ProjectiveModelOnPlainAQFTs} The category $\QFT(\Crect^{\perp})$ admits the projective model structure, denoted  $\Qft(\Crect^{\perp})$, which is completely  characterized by:
 	 \begin{itemize}
 	 \item its weak equivalences (called projective weak equivalences) are those maps in $\QFT(\Crect^{\perp})$ which are quasi-isomorphisms when evaluated over every spacetime region $\Urect\in\Crect$,  
 	 \item its fibrations (called projective fibrations) are those maps in $\QFT(\Crect^{\perp})$ which are epimorphisms when evaluated over every spacetime region $\Urect\in\Crect$.
   	 \end{itemize}
 	 \end{thm}
 	 
 	 \begin{rem}\label{rem:WhyComplexesChar0} We decided to work with complexes over a field $\mathbb{k}$ of characteristic 0 mainly for its connection with the foundational work of Benini, Schenkel and collaborators, and because of the simplicity of the homotopy theory of operads and algebras in this context. For instance, the strictification theorem for operadic algebras in $\Ch_{\mathbb{k}}$ is one of the principal advantages of working within this setting. This theorem asserts that operadic algebras over an operad presents the homotopy theory of their homotopy coherent version, because such operads are automatically $\Upsigma$-cofibrant by \cite[Proposition 2.11]{benini_homotopy_2019} and because weak equivalences of $\Upsigma$-cofibrant operads induce Quillen equivalences between the corresponding model categories of operadic algebras \cite[Theorem 2.10]{benini_homotopy_2019}. 
 
  	 Suitable replacements of the operad $\Op_{\Crect}^{\perp}$ worked out by Yau \cite{yau_homotopical_2019} and Benini-Schenkel-Woike \cite{benini_homotopy_2019} lead to homotopy coherent variants of algebraic quantum field theories, which may have a priori a richer homotopy theory than the strict version explored in this work.
  	 Their contribution consists on choosing a weak equivalent operad $\Op_{\Crect,\infty}^{\perp}\xrightarrow{\sim}\Op_{\Crect}^{\perp}$, which is more cofibrant than the original one in some precise sense, and looking at $\QFT_{\infty}(\Crect^{\perp})=\Op_{\Crect,\infty}^{\perp}\text{-}\Alg(\V)$. However, the strictification theorem states that no more homotopical information is gained if we work within $\V=\Ch_{\mathbb{k}}$, since we have a Quillen equivalence
  	 $$
  	 \Op_{\Crect,\infty}^{\perp}\text{-}\Alg(\V)\simeq \Op_{\Crect}^{\perp}\text{-}\Alg(\V).
  	 $$ 
  	 
  	 Nevertheless, as briefly commented in the introduction, our techniques could be adapted for homotopy coherent algebraic quantum field theories as well, leading to parallel results.  
  	 The importance of this generalization is that it is possible to replicate our homotopical results in more general contexts, for instance allowing complexes over a general base ring $\mathsf{R}$ such as $\mathbb{k}\llbracket\hbar\rrbracket$, some flavour of spectra, etc.  
 	 \end{rem}
 	 
%
%
%
%
%
%
%

 \end{subsection}

\end{section}

\begin{section}{Time-slice localization}\label{sect_TimeLocalization}
	 Given an algebraic quantum field theory, a way to introduce dynamics into the picture is via the so called time-slice axiom on the theory (see \cite{benini_higher_2019} or \cite{fewster_algebraic_2015} for a  motivation of the concept). Formally, one identifies a set of maps $\Srect$ in $\Crect$ that control ``evolution" between spacetime-regions, e.g.\ Cauchy morphisms. Field theories with a nice behaviour with respect to this ``evolution" are those $\Aalg\in\QFT(\Crect^{\perp})$ satisfying that $\Aalg(\upf)$ is an isomorphism for any $\upf\in\Srect$.  Such field theories are said to satisfy the $\Srect$\emph{-time-slice axiom}. In this section, we propose two methods to study the homotopy theory of algebraic quantum field theories satisfying the time-slice axiom (or its relaxed version, see Definition \ref{defn_WeakTimeSliceAxiom}). 
 
\begin{defn}\label{defn_QFTsSatisfyingTimeSliceAxiom} A field theory
$\Aalg\in \QFT(\Crect^{\perp})$ is said to satisfy the $\Srect$-\emph{time-slice axiom} or to be $\Srect$-\emph{constant} if $\Aalg(\upf)$ is an isomorphism for any $\upf\in\Srect$. The full subcategory of $\QFT(\Crect^{\perp})$ spanned by $\Srect$-constant field theories is denoted $\QFT_{\Srect}(\Crect^{\perp})$.
\end{defn}


\begin{subsection}{Strict time-slice axiom}\label{subsect_StrictTimeSliceAxiom}	
	
	Let us begin with a brief reminder of the time-slice axiom discussion in \cite{benini_higher_2019, benini_homotopy_2019, benini_operads_2021}. There, it is noticed that looking at an algebraic field theory as a diagram $\Aalg\colon\Crect\to\Mon(\V)$, $\Srect$-constancy can be expressed by saying that the diagram factors uniquely through the localization of $\Crect$ at $\Srect$, denoted $\Crect\to \Crect_{\Srect}$\footnote{The localization of $\Crect$ at $\Srect$ is the ``initial" functor $\Crect\to\Crect_{\Srect}$ among those $\Crect\to \Drect$ that send $\Srect$ to isomorphisms.}. Indeed, the localization $\Crect_{\Srect}$ can be endowed with a canonical orthogonality relation, the so-called pushforward orthogonality relation $\perp_{\Srect}$; it is constructed in \cite[Lemma 3.19]{benini_operads_2021}. The importance of $\Crect_{\Srect}^{\perp_{\Srect}}$ is that it encodes the algebraic structure of $\Srect$-constant field theories. More precisely, in \cite[Lemma 3.20, Proposition 4.4]{benini_operads_2021}, it is showed that the induced morphism $\ell\colon\Op_{\Crect}^{\perp}\to \Op_{\Crect_{\Srect}}^{\perp_{\Srect}}$ yields an equivalence of categories of algebraic field theories
	$$
	\ell^*\colon\QFT(\Crect_{\Srect}^{\perp_{\Srect}})\overset{\sim}{\longrightarrow} \QFT_{\Srect}(\Crect^{\perp}).
	$$  
    Thus, a change of orthogonal category permits the recognition of $\Srect$-constancy as structure and not as a property on field theories. Consequently, the homotopy theory of $\Srect$-constant field theories may be  presented by the projective model structure over $ \Op_{\Crect_{\Srect}}^{\perp_{\Srect}}\text{-}\Alg\simeq \QFT(\Crect_{\Srect}^{\perp_{\Srect}})$  (see \cite[Theorems 3.8 and 3.10]{benini_homotopy_2019}).
    \begin{notat}\label{notat_ProjModelonAQFTsWithStrictTimeSliceAxiom}
    	$\Qft_{\Srect}(\Crect^{\perp})$ denotes the projective model structure on $\Op_{\Crect_{\Srect}}^{\perp_{\Srect}}\text{-}\Alg$ of Theorem \ref{thm_ProjectiveModelOnPlainAQFTs}.
    \end{notat}
     Observe that this homotopy theory only deals with field theories satisfying the $\Srect$-time-slice axiom strictly, as introduced above, and this fact excludes algebraic field theories coming from gauge theory (see \cite[Section 3.4]{benini_higher_2019}). The way to include such examples will be a relaxation of the strict $\Srect$-time-slice axiom. In order to do so, we will need a more canonical description of $\Op_{\Crect_{\Srect}}^{\perp_{\Srect}}$, motivated by the algebraic field theories that it classifies.
    	
   \begin{prop}\label{prop_LocalizingSpacetimeRegionsOrOperadCompletely}
   	Let $\Crect^{\perp}$ be an orthogonal category and $\Crect_{\Srect}^{\perp_{\Srect}}$ its localization at $\Srect$. Then, the canonical morphism $\Op_{\Crect}^{\perp}\to \Op_{\Crect_{\Srect}}^{\perp_{\Srect}}$ induces an equivalence of operads
   	$$
   	\big(\Op_{\Crect}^{\perp}\big)_{\Srect}\overset{\sim}{\longrightarrow}\Op_{\Crect_{\Srect}}^{\perp_{\Srect}},
   	$$
   	where the left-hand side denotes the $\Set$-localization of the operad  $\Op_{\Crect}^{\perp}$ at $\Srect$.
   \end{prop}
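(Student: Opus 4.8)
The plan is to recognize $(\Op_{\Crect}^{\perp})_{\Srect}$ and $\Op_{\Crect_{\Srect}}^{\perp_{\Srect}}$ as two solutions of the same universal problem under $\Op_{\Crect}^{\perp}$, and then to upgrade the resulting comparison morphism from an equivalence of algebra categories to an honest isomorphism of operads. Recall from \cite{carmona_localization_2021} the universal property of the $\Set$-localization: the canonical morphism $q\colon\Op_{\Crect}^{\perp}\to(\Op_{\Crect}^{\perp})_{\Srect}$ is the identity on colours and is initial among operad morphisms out of $\Op_{\Crect}^{\perp}$ sending every unary operation represented by an element of $\Srect$ to an isomorphism; equivalently, restriction along $q$ identifies $(\Op_{\Crect}^{\perp})_{\Srect}\text{-}\Alg$ with the full subcategory of $\Op_{\Crect}^{\perp}\text{-}\Alg\simeq\QFT(\Crect^{\perp})$ on those $\Aalg$ for which $\Aalg(\upf)$ is invertible for all $\upf\in\Srect$, i.e. with $\QFT_{\Srect}(\Crect^{\perp})$ (Definition \ref{defn_QFTsSatisfyingTimeSliceAxiom}).

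On the other side, $\ell\colon\Op_{\Crect}^{\perp}\to\Op_{\Crect_{\Srect}}^{\perp_{\Srect}}$ is induced by the localization functor $\Crect\to\Crect_{\Srect}$, hence is the identity on colours ($\ob\Crect_{\Srect}=\ob\Crect$) and also sends every $\upf\in\Srect$ to an isomorphism, while by \cite[Lemma 3.20, Proposition 4.4]{benini_operads_2021} restriction along $\ell$ is fully faithful with essential image $\QFT_{\Srect}(\Crect^{\perp})$. Since $\ell$ inverts $\Srect$, the universal property of $q$ produces a unique operad morphism $\Phi\colon(\Op_{\Crect}^{\perp})_{\Srect}\to\Op_{\Crect_{\Srect}}^{\perp_{\Srect}}$, the identity on colours, with $\Phi\circ q=\ell$; this is the comparison morphism of the statement. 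From $q^{*}\circ\Phi^{*}=\ell^{*}$, and using that $q^{*}$ and $\ell^{*}$ are both fully faithful into $\Op_{\Crect}^{\perp}\text{-}\Alg$ with the common essential image $\QFT_{\Srect}(\Crect^{\perp})$ and both commute with the forgetful functors to $\V^{\ob\Crect}$, I would conclude that $\Phi^{*}$ is an equivalence over $\V^{\ob\Crect}$. Finally, a morphism of $\Set$-operads on a fixed colour set which is the identity on colours and induces such an equivalence on algebras is an isomorphism — one recovers $\Op(c_{1},\dots,c_{\upn};c)$ as the multilinear part of the free algebra on generators of colours $c_{1},\dots,c_{\upn}$, and free algebras are preserved by equivalences compatible with the forgetful functors (alternatively one invokes the recognition statement for $\Set$-localizations of \cite{carmona_localization_2021}, or plain uniqueness of localizations once $\ell$ is exhibited as a $\Set$-localization at $\Srect$). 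Since weak equivalences of $\Set$-operads in $\Ch(\mathbb{k})$ are isomorphisms, $\Phi$ is in particular the asserted equivalence of operads.

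The universal-property manipulations are soft; the main obstacle will be the bookkeeping that legitimizes them — namely checking that the equivalence $\QFT(\Crect_{\Srect}^{\perp_{\Srect}})\simeq\QFT_{\Srect}(\Crect^{\perp})$ of \cite{benini_operads_2021} is genuinely implemented by restriction along $\ell$ (so that it strictly respects underlying objects), pinning down the precise form of the universal property of the $\Set$-localization imported from \cite{carmona_localization_2021}, and verifying that $\ell$ really does invert $\Srect$, the point at which the pushforward orthogonality relation of \cite[Lemma 3.19]{benini_operads_2021} enters and makes $\Op_{\Crect_{\Srect}}^{\perp_{\Srect}}$ available in the first place. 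A more computational route would bypass the recognition step entirely: $\Set$-localization of an operad at a set of unary operations is a colimit-type construction (freely adjoin inverses, then impose the two-sided-inverse relations) and the Boardman--Vogt tensor product is cocontinuous in each variable, so applying it to the coequalizer presentation $\Op_{\Crect}^{\perp}=\mathrm{coeq}\big(\mathfrak{F}(\mathsf{R}_{\perp})\rightrightarrows\Crect\otimes_{\mathtt{BV}}\mathsf{uAss}\big)$ recalled in Section \ref{sect_Prelim} should reproduce the coequalizer defining $\Op_{\Crect_{\Srect}}^{\perp_{\Srect}}$; there the obstacle becomes matching the two coequalizer diagrams term by term, in particular identifying the localized $\mathfrak{F}(\mathsf{R}_{\perp})$ and legs with those built from $\perp_{\Srect}$.
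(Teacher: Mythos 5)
Your proposal is correct, but it takes a genuinely different route from the paper's. The paper argues by representability: it computes $\mathsf{Operad}(\Op_{\Crect_{\Srect}}^{\perp_{\Srect}},\OpP)$ directly, using the coequalizer presentation of the operad, the $\otimes_{\mathtt{BV}}\dashv[-,-]$ adjunction, the universal property of $\Crect\to\Crect_{\Srect}$ and the observation (from \cite[Lemmas 3.19, 3.20]{benini_operads_2021}) that $\perp_{\Srect}$-commutativity is generated by $\perp$-commutativity, to exhibit a natural isomorphism $\mathsf{Operad}(\Op_{\Crect_{\Srect}}^{\perp_{\Srect}},\OpP)\cong\mathsf{Operad}_{\Srect}(\Op_{\Crect}^{\perp},\OpP)$; Yoneda then gives the isomorphism of operads. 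You instead invoke the external fact (from \cite[Lemma 3.20, Proposition 4.4]{benini_operads_2021}) that $\ell^*$ is fully faithful with image $\QFT_{\Srect}(\Crect^{\perp})$, use the universal property of $q$ to produce the identity-on-colours comparison $\Phi$ with $\Phi q=\ell$, and then apply a monadicity-type recognition: an identity-on-colours operad morphism whose restriction functor is an equivalence over the forgetful functor to $\V^{\ob\Crect}$ must already be an isomorphism, since the unit of $\Phi_!\dashv\Phi^*$ evaluated on free algebras is $\Phi\circ\id$, hence an isomorphism in each arity. This step is correct but is exactly the kind of detail you should spell out, since it is the load-bearing lemma of your argument. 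The trade-off: your proof is more abstract and shorter modulo the recognition lemma, but it consumes the equivalence $\QFT(\Crect_{\Srect}^{\perp_{\Srect}})\simeq\QFT_{\Srect}(\Crect^{\perp})$ as an input, whereas the paper's representability computation is independent of it and in fact yields that equivalence as a corollary (see the remark immediately after the proposition). Your "more computational route" sketch is essentially the paper's argument; note, though, that the paper does not literally push the localization through the coequalizer via cocontinuity, but rather works on the level of hom-sets, which side-steps the need to identify the localized $\mathfrak{F}(\mathsf{R}_{\perp})$ explicitly.
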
 	
    	\begin{proof}
    	The claim follows from the following chain of isomorphisms:
    	\begin{align*}
    	\mathsf{Operad}\big(\Op_{\Crect_{\Srect}}^{\perp_{\Srect}},\OpP\big) & \overset{(i)}{\cong} 
        \mathsf{equalizer}\Big[\mathsf{Operad}(\Crect_{\Srect}\underset{\mathtt{BV}}{\otimes}\,\mathsf{uAss},\OpP)\rightrightarrows\mathsf{Operad}(\mathfrak{F}(\mathsf{R}_{\perp_{\Srect}}),\OpP)\Big]\\
         & \overset{(ii)}{\cong} 
        \mathsf{equalizer}\Big[\mathsf{Operad}(\Crect_{\Srect},\OpP^{\mathsf{uAss}})\rightrightarrows\mathsf{Operad}(\mathfrak{F}(\mathsf{R}_{\perp_{\Srect}}),\OpP)\Big]\\
        & \overset{(iii)}{\cong} 
        \mathsf{equalizer}\Big[\mathsf{Operad}_{\Srect}(\Crect,\OpP^{\mathsf{uAss}})\rightrightarrows\mathsf{Operad}(\mathfrak{F}(\mathsf{R}_{\perp}),\OpP)\Big]\\
        & \overset{(iv)}{\cong} 
        \mathsf{equalizer}\Big[\mathsf{Operad}_{\Srect}(\Crect\underset{\mathtt{BV}}{\otimes}\,\mathsf{uAss},\OpP)\rightrightarrows\mathsf{Operad}(\mathfrak{F}(\mathsf{R}_{\perp}),\OpP)\Big]\\
        & \overset{(v)}{\cong}
        \mathsf{Operad}_{\Srect}(\Op_{\Crect}^{\perp},\OpP).
    	\end{align*}
    	We have used: $(i)$ description of the operad $\Op_{\Crect}^{\perp}$ as a quotient of the Boardman-Vogt tensor product; $(ii)$ Boardman-Vogt tensor product is left-adjoint to internal hom in $\mathsf{Operad}$; $(iii)$ universal property of the localization $\Crect_{\Srect}$ on the first factor of the equalizer. On the second factor, we apply that the orthogonal relation $\perp_{\Srect}$ is generated by $\perp$. That is, by the explicit description for the pushforward orthogonal relation of \cite[Lemmas 3.19]{benini_operads_2021} and the proof of \cite[Lemma 3.20]{benini_operads_2021}, we see that the equalizer requiring $\perp$-commutativity is the same as the one that requires $\perp_{\Srect}$-commutativity; $(iv)$ same adjunction as in $(ii)$; $(v)$ universal property of localization together with $(i)$.
    	\end{proof}
    	
    	\begin{rem} Proposition \ref{prop_LocalizingSpacetimeRegionsOrOperadCompletely} reproves that the morphism $\ell\colon \Op_{\Crect}^{\perp}\to \Op_{\Crect_{\Srect}}^{\perp_{\Srect}}$ induces an equivalence of categories $\ell^*\colon\QFT(\Crect_{\Srect}^{\perp_{\Srect}})\overset{\sim}{\longrightarrow} \QFT_{\Srect}(\Crect^{\perp})$. See the first paragraph of Subsection \ref{subsect_StrictTimeSliceAxiom}.
    	\end{rem}
\end{subsection}

\begin{subsection}{Homotopy time-slice axiom}\label{subsect_WeakTimeSliceAxiom}
	
	Motivated by toy models explained in \cite{benini_higher_2019}, it is natural to relax the $\Srect$-time-slice axiom to a homotopical variant.
	\begin{defn}\label{defn_WeakTimeSliceAxiom} An algebraic quantum field theory $\Aalg\in\QFT(\Crect^{\perp})$ satisfies the \emph{homotopy $\Srect$-time-slice axiom}, or equivalently it is \emph{homotopy} $\Srect$-\emph{constant}, if $\Aalg(\upf)$ is a weak equivalence whenever $\upf\in\Srect$. We denote by $\QFT_{\h\Srect}(\Crect^{\perp})$ the full subcategory of $\QFT(\Crect^{\perp})$ spanned by homotopy $\Srect$-constant field theories. 
	\end{defn}
	
	We propose two approaches to study the homotopy theory of these algebraic quantum field theories and we show that these two approaches are equivalent in a precise sense. 
	
	\paragraph{Structural approach:} It consists on enhancing the operad $\Op_{\Crect}^{\perp}$ to encode the structure that turns the maps in $\Srect$ to weak equivalences. Due to Proposition \ref{prop_LocalizingSpacetimeRegionsOrOperadCompletely}, the idea is similar to the one used in \cite{benini_operads_2021} to study $\Srect$-constant field theories. The difference is that we use a homotopical variant of the localization:

	\begin{notat}
		We will denote by  $\mathfrak{L}_{\Srect}\Op_{\Crect}^{\perp}$ the \emph{homotopical localization} of $\Op_{\Crect}^{\perp}$ at $\Srect$.
	\end{notat}
	
    Its fundamental property and why we are interested in such a device is given by:
	 \begin{prop}\label{prop_OperadGoverningWeaklyConstantFieldTheories}
		There is a canonical equivalence
		$
	\Ho\QFT_{\h\Srect}(\Crect^{\perp})\simeq \Ho\big(\mathfrak{L}_{\Srect}\Op_{\Crect}^{\perp}\text{-}\Alg(\V)\big).
	$
	\end{prop}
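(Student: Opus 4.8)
The plan is to deduce this from the defining property of the homotopical localization $\mathfrak{L}_{\Srect}$ of operads together with the strictification theorem for operadic algebras over $\Ch(\mathbb{k})$. First I would recall what $\mathfrak{L}_{\Srect}\Op_{\Crect}^{\perp}$ is characterized by (from \cite{carmona_localization_2021}): there is a morphism $\Op_{\Crect}^{\perp}\to\mathfrak{L}_{\Srect}\Op_{\Crect}^{\perp}$ of operads, both $\Upsigma$-cofibrant, inducing a Quillen adjunction on algebras whose right adjoint (restriction along the morphism) is fully faithful on homotopy categories and identifies $\Ho\big(\mathfrak{L}_{\Srect}\Op_{\Crect}^{\perp}\text{-}\Alg\big)$ with the full subcategory of $\Ho\big(\Op_{\Crect}^{\perp}\text{-}\Alg\big)$ spanned by those algebras $\Aalg$ for which the induced maps $\Aalg(\upf)$ are weak equivalences for $\upf\in\Srect$ — i.e. the local objects. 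Concretely this is the left Bousfield localization of $\Qft(\Crect^{\perp})$ at the set $\Srect$ (viewed as maps of free algebras), and its fibrant objects are precisely the fibrant weakly $\Srect$-constant theories.

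The second ingredient is the identification $\Op_{\Crect}^{\perp}\text{-}\Alg\simeq\QFT(\Crect^{\perp})$ recalled in Subsection 2.2, which is an equivalence of categories compatible with weak equivalences (both are detected on underlying objects, hence on each $\Aalg(\Urect)\in\Mon(\V)$, hence in $\V$). Under this equivalence the condition ``$\Aalg(\upf)$ is a weak equivalence for all $\upf\in\Srect$" transported to $\Op_{\Crect}^{\perp}\text{-}\Alg$ is exactly the locality condition above. Therefore $\Ho\QFT_{\textup{w}\Srect}(\Crect^{\perp})$, being by Definition \ref{defn_WeakTimeSliceAxiom} the homotopy category of the full subcategory of $\QFT(\Crect^{\perp})$ on weakly $\Srect$-constant theories, is equivalent to the full subcategory of $\Ho\big(\Op_{\Crect}^{\perp}\text{-}\Alg\big)$ on local objects, which by the previous paragraph is $\Ho\big(\mathfrak{L}_{\Srect}\Op_{\Crect}^{\perp}\text{-}\Alg\big)$. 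Chaining these equivalences gives the claim.

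The main technical point to be careful about is the first step: ``$\Ho$ of the full subcategory of weakly $\Srect$-constant theories" is a priori just a $1$-categorical localization of a full subcategory, and one must argue it coincides with the full subcategory of $\Ho\QFT(\Crect^{\perp})$ on local objects. This is where one uses that $\Op_{\Crect}^{\perp}$ (and $\mathfrak{L}_{\Srect}\Op_{\Crect}^{\perp}$) are $\Upsigma$-cofibrant over $\Ch(\mathbb{k})$ by \cite[Proposition 2.11]{benini_homotopy_2019}, so the relevant model structures exist and the Bousfield localization $\mathsf{L}_{\Srect}\Qft(\Crect^{\perp})$ is well-defined; its underlying relative category has the same weak equivalences as $\Qft(\Crect^{\perp})$ between local objects, so its homotopy category is the full subcategory of $\Ho\Qft(\Crect^{\perp})$ on local (equivalently, weakly $\Srect$-constant) objects. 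I expect the verification that the class of local objects is exactly the essential image of the weakly $\Srect$-constant theories — including closure under weak equivalence, so that passing to $\Ho$ of the subcategory is harmless — to be the only place demanding genuine care; everything else is formal transport along the equivalence $\Op_{\Crect}^{\perp}\text{-}\Alg\simeq\QFT(\Crect^{\perp})$ and an appeal to the universal property of $\mathfrak{L}_{\Srect}$ from \cite{carmona_localization_2021}.
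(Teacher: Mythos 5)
The paper itself gives no proof of this proposition: it is presented as the ``fundamental property'' of the homotopical localization $\mathfrak{L}_{\Srect}\Op_{\Crect}^{\perp}$, with all content deferred to \cite{carmona_localization_2021}. Your reconstruction correctly unpacks exactly what that deferred universal property asserts --- that $\Ho\bigl(\mathfrak{L}_{\Srect}\Op_{\Crect}^{\perp}\text{-}\Alg\bigr)$ is the full subcategory of $\Ho\bigl(\Op_{\Crect}^{\perp}\text{-}\Alg\bigr)$ on $\Srect$-local objects --- and then transports it through the equivalence $\Op_{\Crect}^{\perp}\text{-}\Alg \simeq \QFT(\Crect^{\perp})$ and a standard ``$\Ho$ of a full subcategory closed under weak equivalence'' argument, so the approach is the one the paper intends. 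One small imprecision worth fixing: the phrase ``Concretely this is the left Bousfield localization of $\Qft(\Crect^{\perp})$ at $\Srect$'' conflates the localized \emph{operad} $\mathfrak{L}_{\Srect}\Op_{\Crect}^{\perp}$ with the localized \emph{model structure} $\mathsf{L}_{\mathcal{S}}\Qft(\Crect^{\perp})$ on $\Op_{\Crect}^{\perp}$-algebras at the representing set $\mathcal{S}$ (not $\Srect$ itself); in the paper these are distinct constructions, compared only later in Theorem \ref{thm_ComparisonWeakTimeSliceModels}, and your main argument does not in fact rely on that identification, so it is best not to lean on it here.
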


	\begin{rems}$\,$
		\begin{enumerate}
			\item[(1)] The existence of  $\mathfrak{L}_{\Srect}\Op_{\Crect}^{\perp}$ can be derived from the theory of either simplicial operads or $\infty$-operads in the sense of Lurie, since $\Op_{\Crect}^{\perp}$ is an operad in sets. For instance, it can be  constructed as a homotopy pushout in the category of simplicial operads or dg-operads as how it is constructed the homotopy localization of dg-categories in   \cite{toen_homotopy_2007}.
			\item[(2)]  The operad $\mathfrak{L}_{\Srect}\Op_{\Crect}^{\perp}$ satisfies a homotopical universal property, so it is unique up to equivalence: $\Op_ {\Crect}^{\perp}\to \mathfrak{L}_{\Srect}\Op_{\Crect}^{\perp}$ is initial (in the $\infty$-sense) among maps of (simplicial) operads $\Op_{\Crect}^{\perp}\to \OpP$ that send maps in $\Srect$ to equivalences.
			 Since we do not need any particular presentation, we prefer to state our results making no reference to specific models for  $\mathfrak{L}_{\Srect}\Op_{\Crect}^{\perp}$.
			\item[(3)]  The localization $\mathfrak{L}_{\Srect}\Op_{\Crect}^{\perp}$ is performed directly at the level of operads, in contrast to the strict case, where the localization is done for the orthogonal category $\Crect_{\Srect}^{\perp_{\Srect}}$, see Subsection \ref{subsect_StrictTimeSliceAxiom}. With this observation, we want to stress that the orthogonality relation on $\Crect$ does not need to be pushed forward.
		\end{enumerate}
	\end{rems}


  
	
    Since homotopy $\Srect$-constant field theories are presented as algebras over the operad  $\mathfrak{L}_{\Srect}\Op_{\Crect}^{\perp}$, the projective model structure on these algebras, which we denote appealingly by $\Qft_{\h\Srect}(\Crect^{\perp})$, presents the homotopy theory of those field theories. This model structure can be also constructed using \cite[Theorem 4.3.3]{fresse_homotopy_2017-1} as explained for Theorem \ref{thm_ProjectiveModelOnPlainAQFTs}, but it demands slightly more than that theorem because $\mathfrak{L}_{\Srect}\Op_{\Crect}^{\perp}$ is not an operad in sets (in particular, it is not of type $\Op_{\Drect}^{\perp}$ for an orthogonal category $\Drect^{\perp}$). 
	
	\begin{prop}\label{prop_STRUCTUREDweakTimeSliceAQFTModel}
	The homotopy theory of $\QFT_{\h\Srect}(\Crect^{\perp})$ is presented by the projective model structure $\Qft_{\h\Srect}(\Crect^{\perp})$ with weak equivalences and fibrations defined pointwise. 
	\end{prop}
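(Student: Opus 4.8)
The plan is to read the statement through the operadic dictionary. By construction $\Qft_{\textup{w}\Srect}(\Crect^{\perp})$ \emph{is} the projective model structure on $\mathfrak{L}_{\Srect}\Op_{\Crect}^{\perp}\text{-}\Alg$, so the assertion splits into two parts: first, one must verify that this projective model structure exists and that, as claimed, its weak equivalences and fibrations are the pointwise ones; second, one must identify its homotopy category with $\Ho\QFT_{\textup{w}\Srect}(\Crect^{\perp})$, and for this Proposition \ref{prop_OperadGoverningWeaklyConstantFieldTheories} is tailor-made.

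For the first part I would argue as follows. Homotopical localization only adjoins unary operations, so $\mathfrak{L}_{\Srect}\Op_{\Crect}^{\perp}$ is again a $(\ob\Crect)$-coloured operad in $\V=\Ch(\mathbb{k})$. Because $\mathbb{Q}\subseteq\mathbb{k}$, every operad in $\V$ is $\Upsigma$-cofibrant by \cite[Proposition 2.11]{benini_homotopy_2019}, hence admissible, and so the projective (transferred) model structure on $\mathfrak{L}_{\Srect}\Op_{\Crect}^{\perp}\text{-}\Alg$ exists along the free--forgetful adjunction with $\V^{\ob\Crect}$ by \cite[Theorem 3.8]{benini_homotopy_2019} (or \cite{white_bousfield_2018}). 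Right-induction is exactly what makes a map of algebras a weak equivalence, resp. a fibration, precisely when each component $\Aalg(\Urect)\to\Balg(\Urect)$ is a quasi-isomorphism, resp. a degreewise epimorphism, for every $\Urect\in\ob\Crect$; this is the pointwise description in the statement.

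For the second part, observe that $\Ho\Qft_{\textup{w}\Srect}(\Crect^{\perp})$ is by definition the localization of $\mathfrak{L}_{\Srect}\Op_{\Crect}^{\perp}\text{-}\Alg$ at its pointwise weak equivalences, i.e. precisely the homotopy category of the relative category that Proposition \ref{prop_OperadGoverningWeaklyConstantFieldTheories} identifies with $\Ho\QFT_{\textup{w}\Srect}(\Crect^{\perp})$. Concatenating the two equivalences yields $\Ho\Qft_{\textup{w}\Srect}(\Crect^{\perp})\simeq\Ho\QFT_{\textup{w}\Srect}(\Crect^{\perp})$, which is what it means for $\Qft_{\textup{w}\Srect}(\Crect^{\perp})$ to present the homotopy theory of $\QFT_{\textup{w}\Srect}(\Crect^{\perp})$.

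I do not foresee a serious obstacle: admissibility is automatic over a field of characteristic zero, and the substantive input --- the existence of $\mathfrak{L}_{\Srect}\Op_{\Crect}^{\perp}$ together with its homotopical universal property, and hence Proposition \ref{prop_OperadGoverningWeaklyConstantFieldTheories} itself --- is imported from \cite{carmona_localization_2021}. The one point I would treat with care is that the equivalence of Proposition \ref{prop_OperadGoverningWeaklyConstantFieldTheories} be realised by a functor that preserves and reflects pointwise weak equivalences, e.g. the restriction functor along $\Op_{\Crect}^{\perp}\to\mathfrak{L}_{\Srect}\Op_{\Crect}^{\perp}$, which is the right adjoint in a Quillen adjunction exactly because both sides carry pointwise fibrations and weak equivalences; this guarantees that ``presents the homotopy theory'' holds in the strong derived sense, and it is part of what the notion of homotopical localization of operads provides.
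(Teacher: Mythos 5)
Your argument is essentially identical to the paper's (implicit) reasoning: the paper deduces the proposition directly from Proposition \ref{prop_OperadGoverningWeaklyConstantFieldTheories} together with admissibility of operads over $\Ch(\mathbb{k})$ with $\mathbb{Q}\subseteq\mathbb{k}$, which grants the transferred projective model on $\mathfrak{L}_{\Srect}\Op_{\Crect}^{\perp}\text{-}\Alg$ with pointwise weak equivalences and fibrations. Your final caveat about the equivalence being realised by the restriction functor is a correct refinement, also packaged inside the cited construction of the homotopical localization.
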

	
	\begin{rem}
	Recall that a map of algebraic quantum field theories $\Aalg\to \Balg$ is said to be pointwise in some class of maps if $\Aalg(\Urect)\to\Balg(\Urect)$ belongs to this class for any space-time region $\Urect\in\Crect$.
	\end{rem}
	
	Let us now explain the second alternative.
	
	\paragraph{Property-based approach:} Field theories in $\QFT_{\h\Srect}(\Crect^{\perp})$ have underlying algebraic structure encoded by $\mathfrak{L}_{\Srect}\Op_{\Crect}^{\perp}$. Now we want to relax the algebraic structure by providing a model structure on $\QFT(\Crect^{\perp})$ which also presents the homotopy theory of homotopy $\Srect$-constant field theories. For this purpose, we use left Bousfield localization \cite{barwick_left_2010} for a suitable set of maps.
	
	A suitable choice of a localizing set of maps in $\QFT(\Crect^{\perp})$ is given in \cite[Section 6]{carmona_factorization_2021}. Using the same methods, it is possible to construct a set of maps $\mathcal{S}$ in $\QFT(\Crect^{\perp})$ \emph{representing} $\Srect\subseteq\Crect$ in such a way that the following result holds by the same arguments. Very concisely, one takes
	$$
	\mathcal{S}=\begin{Bmatrix}
		\mathfrak{t}_{\sharp}(\upf^*[r])\colon \mathfrak{t}_{\sharp}\big(\Crect(\Urect,\star)\otimes \mathbb{k}[r] \big)\to \mathfrak{t}_{\sharp}\big(\Crect(\Vrect,\star)\otimes \mathbb{k}[r] \big)\quad \text{for }\upf\colon \Vrect\to\Urect \text{ in }\Srect
	\end{Bmatrix},
	$$
	where $\mathfrak{t}_{\sharp}\colon \Fun(\Crect,\Ch_{\mathbb{k}})\rightleftarrows \QFT(\Crect^{\perp}):\mathfrak{t}^*$ is the adjoint pair associated to the obvious map  $\mathfrak{t}\colon\Crect\to\Op_{\Crect}^{\perp}$ and $\Crect(\Urect,\star)\otimes\mathbb{k}[r]$ denotes the functor $\Crect\to \Ch_{\mathbb{k}}$, $\Wrect\mapsto \bigoplus_{\Crect(\Urect,\Wrect)}\mathbb{k}[r]$.
	
	\begin{prop}\label{prop_CharacterizationOfTIMESLICELOCALObjects}
	A field theory $\Aalg\in\QFT(\Crect^{\perp})$ is $ \mathcal{S}$-local if and only if it satisfies the homotopy $\Srect$-time-slice axiom.
	\end{prop}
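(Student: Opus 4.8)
The plan is to show directly that, for the representing set $\mathcal{S}$, the condition ``$\Aalg$ is $\mathcal{S}$-local'' unwinds to exactly Definition \ref{defn_WeakTimeSliceAxiom}. First I would recall the shape of $\mathcal{S}$ built by the method of \cite{carmona_localization_2021} (see also \cite[Section 6]{carmona_factorization_2021}), now applied to $\Crect^{\perp}$ rather than to the situation treated there. For a space-time region $\Urect\in\Crect$, let $F_{\Urect}\colon\Ch(\mathbb{k})\to\Op_{\Crect}^{\perp}\text{-}\Alg$ be the left adjoint of the evaluation functor $\mathrm{ev}_{\Urect}\colon\Aalg\mapsto\Aalg(\Urect)$, i.e. the free $\Op_{\Crect}^{\perp}$-algebra on a chain complex placed in color $\Urect$. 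Each $\upf\colon\Urect\to\Vrect$ in $\Srect$ is a unary operation of $\Op_{\Crect}^{\perp}$, so $\Aalg\mapsto\Aalg(\upf)$ is a natural transformation $\mathrm{ev}_{\Urect}\Rightarrow\mathrm{ev}_{\Vrect}$; by Yoneda together with the two adjunctions it corresponds to a morphism of algebras $\upf_{K}\colon F_{\Vrect}(K)\to F_{\Urect}(K)$ corepresenting post-composition with $\Aalg(\upf)$, for any chain complex $K$. The representing set is then $\mathcal{S}=\{\upf_{K}\colon\upf\in\Srect,\ K\in\mathcal{K}\}$, where $\mathcal{K}$ is the small set of domains and codomains of the generating (trivial) cofibrations of $\Ch(\mathbb{k})$, e.g. $\mathcal{K}=\{0\}\cup\{\mathbb{k}[n]\}_{n\in\mathbb{Z}}\cup\{\mathrm{cone}(\id_{\mathbb{k}[n]})\}_{n\in\mathbb{Z}}$.

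Next I would record the homotopical input. Since $\V=\Ch(\mathbb{k})$ has all objects cofibrant and fibrant, $F_{\Urect}$ is left Quillen with cofibrant values and $\mathrm{ev}_{\Urect}$ preserves fibrations and weak equivalences of $\Qft(\Crect^{\perp})$; in particular every $\Aalg$ is fibrant. Hence there is a natural weak equivalence of derived mapping spaces $\Map^{h}_{\Op_{\Crect}^{\perp}\text{-}\Alg}(F_{\Urect}(K),\Aalg)\simeq\Map^{h}_{\Ch(\mathbb{k})}(K,\Aalg(\Urect))$, under which the map induced by $\upf_{K}$ becomes $\Map^{h}_{\Ch(\mathbb{k})}(K,\Aalg(\upf))$. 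Therefore $\Aalg$ is $\mathcal{S}$-local if and only if $\Map^{h}_{\Ch(\mathbb{k})}(K,\Aalg(\upf))$ is a weak equivalence of simplicial sets for every $\upf\in\Srect$ and every $K\in\mathcal{K}$. The last step is to observe that $\mathcal{K}$ detects quasi-isomorphisms: a map $g$ in $\Ch(\mathbb{k})$ is a weak equivalence iff $\Map^{h}_{\Ch(\mathbb{k})}(K,g)$ is a weak equivalence for all $K\in\mathcal{K}$. Over the field $\mathbb{k}$ this is elementary, since $\pi_{i}\Map^{h}_{\Ch(\mathbb{k})}(\mathbb{k}[n],C)\cong H_{n+i}(C)$ for $i\ge 0$, so letting $n$ range over $\mathbb{Z}$ recovers all homology groups (the objects $0$ and $\mathrm{cone}(\id_{\mathbb{k}[n]})$ contribute nothing new, but are kept so that $\mathcal{S}$ is genuinely manufactured from the cells of the model structure). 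Combining the two displayed equivalences yields: $\Aalg$ is $\mathcal{S}$-local $\iff$ $\Aalg(\upf)$ is a quasi-isomorphism for all $\upf\in\Srect$ $\iff$ $\Aalg$ is weakly $\Srect$-constant, which is the assertion.

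The hard part is not conceptual but a matter of careful bookkeeping: one must check that $\upf_{K}$ genuinely corepresents $\Aalg(\upf)_{*}$, and not merely something weakly equivalent to it, i.e. that the identification $\Map^{h}(F_{\Urect}(K),-)\simeq\Map^{h}(K,\mathrm{ev}_{\Urect}(-))$ is natural in $\Srect$ and not just in the single color; and one must know that the left Bousfield localization of $\Qft(\Crect^{\perp})$ at $\mathcal{S}$ exists at all, which rests on $\Qft(\Crect^{\perp})$ being left proper and combinatorial. Both points are precisely what is verified in \cite{carmona_localization_2021}, and since that argument uses only the free--evaluation adjunctions color by color together with the standard properties of $\Ch(\mathbb{k})$ over a field of characteristic $0$, no special feature of the operad treated there is needed; the present proposition follows by transcribing it verbatim with $\Op_{\Crect}^{\perp}$ in place of that operad. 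Accordingly, I would present the proof as that transcription, spelling out the two mapping-space identifications above and referring to \cite{carmona_localization_2021} for the model-categorical prerequisites.
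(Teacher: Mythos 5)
Your proof is correct and is essentially the argument the paper delegates to the cited references: the paper's own proof of this proposition is the single line ``See \cite{carmona_localization_2021} or \cite[Theorem 6.5]{carmona_factorization_2021}''. You have faithfully unpacked that citation, corepresenting $\Aalg\mapsto\Aalg(\upf)$ by free-algebra maps $\upf_K\colon F_{\Vrect}(K)\to F_{\Urect}(K)$, transporting $\mathcal{S}$-locality through the Quillen adjunctions $F_{\Urect}\dashv\mathrm{ev}_{\Urect}$ to mapping-space conditions in $\Ch(\mathbb{k})$, and observing that the spheres $\mathbb{k}[n]$ detect quasi-isomorphisms, with bifibrancy of all objects of $\Ch(\mathbb{k})$ dispatching the derived-versus-underived naturality issues you rightly flag at the end.
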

	\begin{proof} See \cite[Theorem 6.5]{carmona_factorization_2021}.
	\end{proof}

	 We are now ready to give the second model structure to present the homotopy theory of homotopy $\Srect$-constant field theories.  
	
	\begin{prop}\label{prop_WeakTimeSliceModel} 
		The model structure $\Qft(\Crect^{\perp})$ admits the left Bousfield localization at $\mathcal{S}$, denoted $\mathsf{L}_{\mathcal{S}}\Qft(\Crect^{\perp})$, whose fibrant objects are fibrant objects in $\Qft(\Crect^{\perp})$ that satisfy the homotopy $\Srect$-time-slice axiom.
	\end{prop}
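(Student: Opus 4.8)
The plan is to verify the hypotheses of the standard existence theorem for left Bousfield localizations (e.g. \cite{barwick_left_2010}) applied to the model category $\Qft(\Crect^{\perp})$ and the set $\mathcal{S}$, and then identify the fibrant objects. First I would recall that $\Qft(\Crect^{\perp})$ is the projective model structure on $\Op_{\Crect}^{\perp}\text{-}\Alg$ with $\V=\Ch(\mathbb{k})$; since $\Op_{\Crect}^{\perp}$ is a $\Set$-operad, it is automatically $\Upsigma$-cofibrant by \cite[Proposition 2.11]{benini_homotopy_2019}, and the category of its algebras is a combinatorial (indeed cofibrantly generated, locally presentable) model category which is moreover left proper, because $\Ch(\mathbb{k})$ is left proper and the transferred model structure on algebras over a $\Upsigma$-cofibrant operad inherits left properness. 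Left properness plus combinatoriality are exactly the inputs needed for Barwick's theorem to guarantee that the left Bousfield localization $\mathsf{L}_{\mathcal{S}}\Qft(\Crect^{\perp})$ exists, with cofibrations those of $\Qft(\Crect^{\perp})$, weak equivalences the $\mathcal{S}$-local equivalences, and fibrant objects precisely the $\mathcal{S}$-local objects that are already fibrant in $\Qft(\Crect^{\perp})$.

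Next I would identify the fibrant objects concretely. By the general theory, the fibrant objects of $\mathsf{L}_{\mathcal{S}}\Qft(\Crect^{\perp})$ are the objects $\Aalg \in \Op_{\Crect}^{\perp}\text{-}\Alg$ that are fibrant in $\Qft(\Crect^{\perp})$ (i.e. pointwise fibrant, hence pointwise fibrant chain complexes — automatic in $\Ch(\mathbb{k})$, so this condition is vacuous) and that are $\mathcal{S}$-local. By Proposition \ref{prop_CharacterizationOfTIMESLICELOCALObjects}, an object of $\QFT(\Crect^{\perp}) \simeq \Op_{\Crect}^{\perp}\text{-}\Alg$ is $\mathcal{S}$-local if and only if it is weakly $\Srect$-constant, i.e. satisfies the weak $\Srect$-time-slice axiom. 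Chaining these two facts gives exactly the asserted description: the fibrant objects of $\mathsf{L}_{\mathcal{S}}\Qft(\Crect^{\perp})$ are the fibrant objects of $\Qft(\Crect^{\perp})$ that satisfy the weak $\Srect$-time-slice axiom.

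The main obstacle is ensuring all the technical hypotheses of Barwick's localization theorem genuinely hold — chiefly \emph{left properness} of $\Qft(\Crect^{\perp})$, which does not come for free for arbitrary operadic algebras but does hold here because $\Op_{\Crect}^{\perp}$ is $\Upsigma$-cofibrant over $\Ch(\mathbb{k})$ (so that pushouts along cofibrations of algebras are homotopically well-behaved), and \emph{combinatoriality}, which follows from local presentability of $\Ch(\mathbb{k})$ and the fact that algebras over an operad in a locally presentable category again form a locally presentable category. A secondary point is that the set $\mathcal{S}$ must genuinely be a \emph{set} (not a proper class) and that its construction in \cite{carmona_localization_2021} is compatible with the model structure — but this is precisely what is imported via Proposition \ref{prop_CharacterizationOfTIMESLICELOCALObjects}. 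Once these are in place, the proof is a short citation-and-assembly argument with no substantive computation.

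\begin{proof}
The model category $\Qft(\Crect^{\perp})$ is the projective (transferred) model structure on $\Op_{\Crect}^{\perp}\text{-}\Alg$ over $\V = \Ch(\mathbb{k})$. Since $\Op_{\Crect}^{\perp}$ is a $\Set$-operad, it is $\Upsigma$-cofibrant by \cite[Proposition 2.11]{benini_homotopy_2019}; consequently $\Op_{\Crect}^{\perp}\text{-}\Alg$ is cofibrantly generated, locally presentable, hence combinatorial, and it is left proper because $\Ch(\mathbb{k})$ is left proper and pushouts along cofibrations of algebras over a $\Upsigma$-cofibrant operad are homotopy pushouts. By the existence theorem for left Bousfield localizations of combinatorial left proper model categories \cite{barwick_left_2010}, the localization $\mathsf{L}_{\mathcal{S}}\Qft(\Crect^{\perp})$ at the set $\mathcal{S}$ exists: its cofibrations are those of $\Qft(\Crect^{\perp})$, its weak equivalences are the $\mathcal{S}$-local equivalences, and its fibrant objects are exactly the objects that are fibrant in $\Qft(\Crect^{\perp})$ and $\mathcal{S}$-local. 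Since fibrations in $\Qft(\Crect^{\perp})$ are pointwise fibrations and every chain complex over $\mathbb{k}$ is fibrant, fibrancy in $\Qft(\Crect^{\perp})$ is exactly fibrancy in the projective model structure as stated. Finally, by Proposition \ref{prop_CharacterizationOfTIMESLICELOCALObjects}, an object of $\QFT(\Crect^{\perp}) \simeq \Op_{\Crect}^{\perp}\text{-}\Alg$ is $\mathcal{S}$-local if and only if it is weakly $\Srect$-constant, i.e. satisfies the weak $\Srect$-time-slice axiom. Combining these identifications, the fibrant objects of $\mathsf{L}_{\mathcal{S}}\Qft(\Crect^{\perp})$ are precisely the fibrant objects of $\Qft(\Crect^{\perp})$ that satisfy the weak $\Srect$-time-slice axiom.
\end{proof}
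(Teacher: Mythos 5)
Your proof follows the paper's own approach almost exactly: invoke Barwick's existence theorem for left Bousfield localizations, which requires a left-proper (combinatorial) model category, and then identify fibrant objects via Proposition~\ref{prop_CharacterizationOfTIMESLICELOCALObjects}. The paper's actual proof is essentially the same two-sentence assembly.

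The one place where you are looser than the paper is the justification of left properness of $\Qft(\Crect^{\perp})$. You assert that ``pushouts along cofibrations of algebras over a $\Upsigma$-cofibrant operad are homotopy pushouts,'' but this is just a restatement of left properness rather than a proof of it, and $\Upsigma$-cofibrancy of the operad does not by itself make it obvious. The paper does not treat this as free: it devotes Proposition~\ref{prop_ExtensionModelIsLeftProper} to proving left properness of the projective model on $\Alg_{\OpN}(\Ch(\mathbb{k}))$ (in fact for \emph{any} operad $\OpN$, not just $\Upsigma$-cofibrant ones) via a filtration argument drawn from \cite[Proposition 4.3.17]{white_bousfield_2018}, decomposing cofibrations into transfinite composites of colorwise cofibrations and checking that the relevant weak equivalences are preserved at each stage. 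So your proof is correct in outline, but to make it self-contained you would need to either cite a specific result establishing left properness of transferred model structures on operadic algebras over $\Ch(\mathbb{k})$ in characteristic zero, or reproduce an argument along the lines of the paper's Proposition~\ref{prop_ExtensionModelIsLeftProper}.
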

	\begin{proof} The existence of  left Bousfield localization is ensured by \cite[Theorem 4.7]{barwick_left_2010} since the projective model structure on operadic algebras valued in $\Ch_{\mathbb{k}}$ is left proper; this fact is proved in Proposition \ref{prop_ExtensionModelIsLeftProper}. The characterization of fibrant objects is the content of Proposition \ref{prop_CharacterizationOfTIMESLICELOCALObjects}.
	\end{proof}
	    
	A comparison with the model category $\Qft_{\h\Srect}(\Crect^{\perp})$ (Proposition \ref{prop_STRUCTUREDweakTimeSliceAQFTModel}) is now in order.
	
	\begin{thm}\label{thm_ComparisonWeakTimeSliceModels}
		The homotopical localization morphism $\ell\colon \Op_{\Crect}^{\perp}\to\mathfrak{L}_{\Srect}\Op_{\Crect}^{\perp}$ induces a Quillen equivalence 
		$
		\mathsf{L}_{\mathcal{S}}\Qft(\Crect^{\perp})\rightleftarrows\Qft_{\h\Srect}(\Crect^{\perp}).
		$
	\end{thm}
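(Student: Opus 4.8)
The plan is to descend the Quillen adjunction $\ell_{!}\dashv\ell^{*}$ attached to the operad morphism $\ell$ to a Quillen equivalence between the two models, and then to identify the resulting derived equivalence with the one already recorded in Proposition~\ref{prop_OperadGoverningWeaklyConstantFieldTheories}. First I would observe that, since $\ell$ is the identity on the color set $\ob\Crect$, the restriction functor $\ell^{*}$ leaves the underlying $\ob\Crect$-indexed family of chain complexes unchanged; hence it carries pointwise fibrations and pointwise trivial fibrations to pointwise fibrations and pointwise trivial fibrations, so $(\ell_{!},\ell^{*})$ is a Quillen adjunction $\Qft(\Crect^{\perp})\rightleftarrows\Qft_{\textup{w}\Srect}(\Crect^{\perp})$, with $\ell_{!}$ the operadic left Kan extension along $\ell$.

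Second, I would descend this adjunction along the left Bousfield localization (which exists by Proposition~\ref{prop_WeakTimeSliceModel}) to a Quillen adjunction $\mathsf{L}_{\mathcal{S}}\Qft(\Crect^{\perp})\rightleftarrows\Qft_{\textup{w}\Srect}(\Crect^{\perp})$. By the universal property of left Bousfield localization it suffices to check that $\ell^{*}$ carries the fibrant objects of $\Qft_{\textup{w}\Srect}(\Crect^{\perp})$ to $\mathcal{S}$-local objects of $\Qft(\Crect^{\perp})$ (equivalently, that $\mathbb{L}\ell_{!}$ sends the maps of $\mathcal{S}$ to weak equivalences). Every $\mathfrak{L}_{\Srect}\Op_{\Crect}^{\perp}$-algebra is fibrant, as all objects are fibrant in a projective model structure on chain complexes, and by construction of the homotopical localization $\mathfrak{L}_{\Srect}\Op_{\Crect}^{\perp}$ the unary operation coming from each $\upf\in\Srect$ acts invertibly up to homotopy on every algebra, so $W(\upf)$ is a quasi-isomorphism for every such algebra $W$; thus $\ell^{*}W$ is a weakly $\Srect$-constant field theory, hence $\mathcal{S}$-local by Proposition~\ref{prop_CharacterizationOfTIMESLICELOCALObjects}. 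The adjunction therefore descends.

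Third, I would prove that the descended adjunction is a Quillen equivalence by showing that $\mathbb{R}\ell^{*}\colon\Ho\Qft_{\textup{w}\Srect}(\Crect^{\perp})\to\Ho\mathsf{L}_{\mathcal{S}}\Qft(\Crect^{\perp})$ is an equivalence of categories, which by the standard characterization of Quillen equivalences in terms of total derived functors is all that is needed. The homotopy category of $\mathsf{L}_{\mathcal{S}}\Qft(\Crect^{\perp})$ is the full subcategory of $\Ho\Qft(\Crect^{\perp})$ on the $\mathcal{S}$-local objects, which by Proposition~\ref{prop_CharacterizationOfTIMESLICELOCALObjects} is precisely the full subcategory spanned by the weakly $\Srect$-constant field theories; and since each $\ell^{*}W$ is fibrant in $\mathsf{L}_{\mathcal{S}}\Qft(\Crect^{\perp})$, maps out of a cofibrant object into $\ell^{*}W$ are computed the same way in $\Ho\mathsf{L}_{\mathcal{S}}\Qft(\Crect^{\perp})$ and in $\Ho\Qft(\Crect^{\perp})$. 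Under these identifications $\mathbb{R}\ell^{*}$ becomes the derived restriction functor $\Ho\big(\mathfrak{L}_{\Srect}\Op_{\Crect}^{\perp}\text{-}\Alg\big)\to\Ho\QFT(\Crect^{\perp})$ with essential image the weakly $\Srect$-constant theories, which is exactly the equivalence furnished by Proposition~\ref{prop_OperadGoverningWeaklyConstantFieldTheories}; hence $\mathbb{R}\ell^{*}$ is an equivalence and $\ell_{!}\dashv\ell^{*}$ is a Quillen equivalence.

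The step I expect to be the main obstacle is this last identification, namely that the abstract equivalence of Proposition~\ref{prop_OperadGoverningWeaklyConstantFieldTheories} is genuinely implemented by $\mathbb{R}\ell^{*}$, equivalently that $\mathbb{R}\ell^{*}$ is homotopically fully faithful (the derived counit $\mathbb{L}\ell_{!}\mathbb{R}\ell^{*}\Rightarrow\id$ is an isomorphism). Argued directly, this is the statement that for cofibrant $X$ the derived unit $X\to\ell^{*}\ell_{!}X$ is an $\mathcal{S}$-local equivalence, i.e. that $\ell^{*}\ell_{!}X$ is an $\mathcal{S}$-localization of $X$; via the derived adjunction and the fact that every weakly $\Srect$-constant theory is, up to weak equivalence, of the form $\ell^{*}W$, this reduces once more to the homotopical fully faithfulness of $\mathbb{R}\ell^{*}$, which is built into the construction of $\mathfrak{L}_{\Srect}\Op_{\Crect}^{\perp}$ in \cite{carmona_localization_2021}. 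The remaining ingredients — the two Quillen adjunctions and the reflection of weak equivalences between fibrant objects by $\ell^{*}$ (immediate, since $\ell^{*}$ does not alter underlying objects and an $\mathcal{S}$-local equivalence between $\mathcal{S}$-local objects is a pointwise quasi-isomorphism) — are routine.
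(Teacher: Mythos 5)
Your proposal is correct and takes essentially the same route as the paper: descend the Quillen adjunction through the left Bousfield localization via its universal property, then conclude by identifying both homotopy categories with the homotopy theory of weakly $\Srect$-constant field theories. The only notable difference is that you spell out explicitly that the derived right adjoint $\mathbb{R}\ell^{*}$ implements the equivalence of Proposition~\ref{prop_OperadGoverningWeaklyConstantFieldTheories} — a subtlety the paper's proof handles implicitly by invoking Propositions~\ref{prop_STRUCTUREDweakTimeSliceAQFTModel} and~\ref{prop_WeakTimeSliceModel}, with both treatments ultimately delegating the homotopical fully-faithfulness of restriction along $\ell$ to the construction of $\mathfrak{L}_{\Srect}\Op_{\Crect}^{\perp}$ in \cite{carmona_localization_2021}.
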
 
	    \begin{proof}
	    	We have to show that the Quillen pair $\ell_{\sharp}\colon\Qft(\Crect^{\perp})\rightleftarrows\Qft_{\h\Srect}(\Crect^{\perp}):\ell^*$ induced by $\ell$ descends to the left Bousfield localization and that this descent gives rise to an equivalence at the level of homotopy categories. 
	    	
	    	The universal property of the left Bousfield localization proves the first claim since $\Srect$ goes to equivalences in $\mathfrak{L}_{\Srect}\Op_{\Crect}^{\perp}$ via $\ell$. Alternatively, one can show that $\ell^*$ preserves fibrations between fibrant objects and that suffices.
	    	
	    	To check that the Quillen pair establishes an equivalence for homotopy categories, use Propositions \ref{prop_STRUCTUREDweakTimeSliceAQFTModel} and \ref{prop_WeakTimeSliceModel} to obtain the chain of equivalences
	    	$$
	    	    \Ho\big(\mathsf{L}_{\mathcal{S}}\Qft(\Crect^{\perp})\big)\simeq \Ho\big(\mathsf{L}_{\mathcal{S}}\Qft(\Crect^{\perp})_{\text{fibrant}}\big)\simeq \Ho\big(\QFT_{\h\Srect}(\Crect^{\perp})\big)\simeq \Ho\big(\Qft_{\h\Srect}(\Crect^{\perp})\big).
	    	$$
	    \end{proof}
    
    \begin{ex}\label{ex_ChiralFTsToOrientedTFTs}
    Let $\mathsf{Mfld}^{\mathsf{or}}_{m}$ be the category of oriented $m$-manifolds with orientation preserving open embeddings equipped with the disjoint orthogonality relation, i.e.\ $\upf\perp \upg$ if and only if $\im\upf\cap\im\upg= \cancel{\textup{o}}$. In this case,  $\QFT(\mathsf{Mfld}^{\mathsf{or},\perp}_{m})$ is the category of chiral conformal quantum field theories in dimension $m$ (see \cite[Example 8.4.7]{yau_homotopical_2019}). Taking $\Srect$ to be the set of isotopy equivalences, i.e.\ those embeddings that are isotopic to diffeomorphisms, one obtains the Quillen equivalent model structures
    $
    \Lrect_{\mathcal{S}}\Qft(\mathsf{Mfld}_{m}^{\mathsf{or},\perp}) 
    $
    and
    $  \Qft_{\h\Srect}(\mathsf{Mfld}_{m}^{\mathsf{or},\perp})
    $
    presenting the homotopy theory of chiral conformal quantum field theories sending isotopy equivalences to quasi-isomorphisms. Thus, those quantum field theories are topological. In fact, $\Lrect_{\mathcal{S}}\Qft(\mathsf{Mfld}_{m}^{\mathsf{or},\perp})$ being a left Bousfield localization of $\Qft(\mathsf{Mfld}_{m}^{\mathsf{or},\perp})$ tells us that any chiral conformal quantum field theory has a canonical ``topological field theory" associated to it in this sense, given by fibrantly replacing the theory in $\Lrect_{\mathcal{S}}\Qft(\mathsf{Mfld}_{m}^{\mathsf{or},\perp})$. 
    
     It might be interesting to compare $\Lrect_{\mathcal{S}}\Qft(\mathsf{Mfld}_{m}^{\mathsf{or},\perp})$ with an alternative axiomatization of oriented topological quantum field theories. At the moment, we do not know how to do that.
    \end{ex}

    \paragraph{Strict versus homotopy time-slice axiom.}
    We close this section by explaining how Theorem \ref{thm_ComparisonWeakTimeSliceModels} answers \cite[Open Problem 4.14]{benini_higher_2019} and how such a problem is connected to the \emph{strictification} of the homotopy time-slice axiom. First, let us recollect the ingredients and the content of this problem.	
    	
    In \cite[Example 4.13]{benini_higher_2019}, the authors introduce a new concept in the theory of algebraic field theories that they called  \emph{homotopy} $\Srect$-\emph{constancy}, but that we will refer to as \emph{quasi-strict}  $\Srect$-\emph{constancy} by reasons that will be clear later.  
    To recall their notion, let us denote $\mathcal{L}\colon \Op_{\Crect}^{\perp}\to \Op_{\Crect_{\Srect}}^{\perp_{\Srect}}$ the canonical map of operads and 
    $$
    \mathbb{L}\mathcal{L}_{\sharp}\colon\Qft(\Crect^{\perp})\rightleftarrows\Qft_{\Srect}(\Crect^{\perp})\colon\mathcal{L}^*
    $$
    its associated derived adjoint pair. 
    \begin{defn}\label{defn_HomotopyConstancy}
    A field theory $\Aalg\in\QFT(\Crect^{\perp})$ is \emph{quasi-strict} $\Srect$-\emph{constant} if the canonical map $\Aalg\to\mathcal{L}^*\mathbb{L}\mathcal{L}_{\sharp}\Aalg$ is an equivalence.
    \end{defn}

     In particular, if $\Aalg$ is \emph{quasi-strict} $\Srect$-\emph{constant}, it is equivalent to a \emph{strictly} $\Srect$\emph{-constant} AQFT.
    
    Benini and Schenkel wondered if quasi-strict $\Srect$-constancy captures homotopy $\Srect$-constancy in our sense, as discussed at the end of \cite[Section 3.4]{benini_higher_2019}. However, it is not even clear if a strict $\Srect$-constant field theory (Definition  \ref{defn_QFTsSatisfyingTimeSliceAxiom}) is quasi-strict $\Srect$-constant, whereas strict $\Srect$-constancy clearly implies homotopy $\Srect$-constancy. The main obstruction for this to happen is that the derived counit $\mathbb{L}\mathcal{L}_{\sharp}\mathcal{L}^*\Rightarrow\id$ is not known to be a natural equivalence.
    Hence, we can reformulate \cite[Open Problem 4.14]{benini_higher_2019} as the following list of items:
    \begin{enumerate} 
    	\item[(Q1)] Is quasi-strict $\Srect$-constancy equivalent to homotopy $\Srect$-constancy?
    	\item[(Q2)] Does strict $\Srect$-constancy imply quasi-strict $\Srect$-constancy?
    	\item[(Q3)] Is the derived counit  $\mathbb{L}\mathcal{L}_{\sharp}\mathcal{L}^*\Rightarrow\id$ a natural equivalence?
    \end{enumerate}
    
    We solve these questions by considering homotopy $\Srect$-constancy instead of quasi-strict $\Srect$-constancy (which is actually the choice that one is pursuing at the end), by virtue of the results discussed in this section. Denoting by $\ell\colon\Op_{\Crect}^{\perp}\to \mathfrak{L}_{\Srect}\Op_{\Crect}^{\perp}$ the homotopical localization map and  
    $$
     \mathbb{L}\ell_{\sharp}\colon\Qft(\Crect^{\perp})\rightleftarrows \Qft_{\h\Srect}(\Crect^{\perp})\colon\ell^*
    $$
    its associated derived adjoint pair, we now explain how all the complications vanish if one replaces $\mathbb{L}\mathcal{L}_{\sharp}\dashv\mathcal{L}^*$ by $\mathbb{L}\ell_{\sharp}\dashv\ell^*$. Note that the homotopy category $\Ho\QFT_{\h\Srect}(\Crect^{\perp})$ is by Theorem \ref{thm_ComparisonWeakTimeSliceModels} a full reflective subcategory of $\Ho\QFT(\Crect^{\perp})$ with reflector $\mathbb{L}\ell_{\sharp}$, since $\Ho\QFT_{\h\Srect}(\Crect^{\perp})$ is presented in Proposition \ref{prop_WeakTimeSliceModel} as a left Bousfield localization of $\Qft(\Crect^{\perp})$. This fact may be reinterpreted as: (i) the derived counit $\mathbb{L}\ell_{\sharp}\ell^*\Rightarrow\id$ is an  equivalence and (ii) $\Aalg\in\Qft(\Crect^{\perp})$ is homotopy $\Srect$-constant iff $\Aalg\to\ell^*\mathbb{L}\ell_{\sharp}\Aalg$ is an equivalence. Moreover, strictly $\Srect$-constant field theories are canonically examples of homotopy $\Srect$-constant field theories.
    
    Now we compare quasi-strict $\Srect$-constancy and homotopy $\Srect$-constancy. By the universal property of $\ell$, there is a canonical factorization of $\mathcal{L}$ through $\ell$, $\mathcal{L}=\upxi\cdot\ell$, which yields a factorization of the right Quillen functor $\mathcal{L}^*$
    $$
    \Qft(\Crect^{\perp})\overset{\ell^*}{\longleftarrow} \Qft_{\h\Srect}(\Crect^{\perp})\overset{\upxi^*}{\longleftarrow} \Qft_{\Srect}(\Crect^{\perp})\colon \mathcal{L}^*.
    $$
    The best scenario would be that the right map $\Qft_{\h\Srect}(\Crect^{\perp})\leftarrow\Qft_{\Srect}(\Crect^{\perp}):\upxi^*$ were part of a Quillen equivalence, since this is equivalent to the statement (Q1): quasi-strict $\Srect$-constancy is equivalent to homotopy $\Srect$-constancy (because both are properties stable under equivalences for algebraic field theories in $\QFT(\Crect^{\perp})$). Thus, the original question (Q1) can be reformulated as a \emph{strictification problem} from homotopy $\Srect$-constancy to strict $\Srect$-constancy. However, such a result in general seems unlikely, as the following examples show.
    
   \begin{ex}\label{ex_HomotopySconstantTQFT} Let us provide an example that is motivated by topological field theories for which the answer to (Q1) is negative, i.e.\ there are AQFTs satisfying the homotopy $\Srect$-time-slice axiom that are not quasi-strict $\Srect$-constant AQFTs.
   \end{ex}
   	
   Choose a smooth $n$-manifold $\Xrect$ and consider the orthogonal category $\Disc(\Xrect)^{\perp}$ given by:
   \begin{itemize}
   	\item $\Disc(\Xrect)$ is the poset of open $n$-discs $\Urect\subseteq\Xrect$ and inclusions between them;
   	\item $\perp$ is the disjointness orthogonality relation, i.e.\ $(\Urect_1\subseteq\Urect)\perp(\Urect_2\subseteq\Urect)$ iff $\Urect_1\cap\Urect_2=\cancel{\textup{o}}$.
   \end{itemize}
   	Take $\Srect$ to be the set of isotopy equivalences. Then, \cite[Subsection 2.4]{ayala_factorization_2015} implies that the homotopy localization  $\mathfrak{L}_{\Srect}\Disc(\Xrect)$ of the category $\Disc(\Xrect)$ is the following topological category:
    \begin{itemize}
    	\item its objects are open embeddings $\upnu\colon \mathbb{R}^{n}\hookrightarrow \Xrect$;
    	\item its hom-space $\Map(\upnu,\upnu')$ is the homotopy fiber of
    	$
    	\upnu'_*\colon\mathsf{Emb}(\mathbb{R}^n,\mathbb{R}^n)\rightarrow\mathsf{Emb}(\mathbb{R}^n,\Xrect)
    	$
    	at $\upnu\colon\mathbb{R}^n\hookrightarrow \Xrect$,
    	where $\mathsf{Emb}$ denotes the space of smooth embeddings.
    \end{itemize}
    By \cite[Proposition 6.4]{horel_factorization_2017}, the map $\upnu'_*$ above can be identified with the map induced by the derivative of $\upnu'$
    $$
    \mathsf{O}(n)\simeq \mathsf{O}(n)\times \mathbb{R}^{n}\cong \mathsf{Fr}(\mathbb{R}^n)\overset{\T\upnu'}{\longrightarrow} \mathsf{Fr}(\Xrect),
    $$
    where the left hand side is the $n^{\text{th}}$-orthogonal group and the right hand side is the $\mathsf{O}(n)$-frame bundle associated to the tangent bundle $\T\Xrect$.
     Hence, there are manifolds $\Xrect$ for which the hom-spaces  $\mathsf{Map}(\upnu,\upnu')$ will have non-trivial homotopy groups, e.g.\ any parallelizable manifold with non-trivial homotopy groups due to the long exact sequence associated to a fiber sequence. As a consequence of this fact, we have:
     
     \begin{prop}\label{prop:NonstricficableTAQFTs}
     	There are AQFTs over $\Disc(\Xrect)^{\perp}$ satisfying the homotopy $\Srect$-time-slice axiom which cannot be equivalent to any AQFT satisfying the strict $\Srect$-time-slice axiom (in particular, they are not quasi-strict $\Srect$-constant). Therefore, the  Quillen adjunction 
     	$$
     	\upxi_{\sharp}\colon\Qft_{\h\Srect}(\Disc(\Xrect)^{\perp})\rightleftarrows \Qft_{\Srect}(\Disc(\Xrect)^{\perp}):\upxi^*
     	$$
     	is not a Quillen equivalence.
     \end{prop}
     \begin{proof} We first observe that an AQFT $\Balg$ satisfying the $\Srect$-strict-time slice axiom admits an (unique) action of $\mathfrak{L}_{\Srect}\Disc(\Xrect)$ by neglection of structure and the universal property of the homotopy localization, but this action must factor by the canonical action of the ordinary localization $\Disc(\Xrect)_{\Srect}$ on $\Balg$. Hence, any 
     AQFT $\Balg'$ over $\Disc(\Xrect)^{\perp}$ equivalent to $\Balg$ inherits a homotopy coherent action of $\mathfrak{L}_{\Srect}\Disc(\Xrect)$ factoring by a homotopy coherent action of $\Disc(\Xrect)_{\Srect}$. By the universal property of the homotopy localization, this action coincides (up to equivalence) with the unique $\mathfrak{L}_{\Srect}\Disc(\Xrect)$-action on $\Balg'$ coming from the fact that $\Balg'$ sends maps in $\Srect$ to quasi-isomorphisms. Since $\Disc(\Xrect)_{\Srect}$ has discrete hom-spaces (in fact, the connected components of hom-spaces of $\mathfrak{L}_{\Srect}\Disc(\Xrect)$ \cite[Corollary 4.2]{dwyer_function_1980}), any homotopy $\Srect$-constant AQFT whose underlying action over  $\mathfrak{L}_{\Srect}\Disc(\Xrect)$ does not only depend on the connected components of the hom-spaces $\Map(\upnu,\upnu')$ cannot be equivalent to a strictly $\Srect$-constant AQFT.
     
     Let us construct an example of such homotopy $\Srect$-constant AQFTs. Fix an open embedding $\upnu\colon \mathbb{R}^{n}\hookrightarrow \Xrect$ and its associated representable functor with values in spaces  $\Map(\upnu,\text{-})$. Taking $\mathbb{k}$-chains and using the symmetric algebra functor, we end up with 
     $$
     \Aalg\colon \Disc(\Xrect)\longrightarrow \mathsf{cdga}_{\mathbb{k}}, \quad \Urect\mapsto \mathsf{Sym}\big(\Crect_{*}(\Map(\upnu,\mathsf{inc}_{\Urect});\mathbb{k})\big),
     $$   
     where $\mathsf{cdga}_{\mathbb{k}}$ is the category of commutative differential graded $\mathbb{k}$-algebras and we have denoted by $\mathsf{inc}_{\Urect}\colon \Urect\subseteq\Xrect$ the open inclusion of $\Urect$ into $\Xrect$.	One can readily check that $\Aalg$ belongs to $\QFT_{\h\Srect}(\Disc(\Xrect)^{\perp})$. If there is some homotopy group of $\Map(\upnu,\mathsf{inc}_{\Urect})$ which is not completely torsion, $\Aalg$ satisfies the claim.

     
     
     \end{proof}
    
    Moreover, we provide a completely homotopical example that shows how remote a positive answer to (Q1) is by comparing how different $\mathfrak{L}_{\Srect}\Crect$ and $\Crect_{\Srect}$ can be in general.
    
\begin{ex}\label{ex_HomotopyConstancy} Let $\Crect^{\perp_0}$ be an orthogonal category with the initial orthogonality relation, i.e.\ the one for which there are no orthogonal maps. The category of algebraic quantum field theories $\QFT(\Crect^{\perp_0})$ is the category $\Fun(\Crect,\Mon(\V))$ of $\Crect$-diagrams of monoids. Then, for a set of maps $\Srect$ in $\Crect$, the canonical Quillen functor $\Qft_{\textup{w}\Srect}(\Crect^{\perp_0})\leftarrow\Qft_{\Srect}(\Crect^{\perp_0})$ yields a Quillen equivalence if and only if the canonical restriction functor
\begin{equation}\label{eqt_RestrictionBetweenWeakStrictConstanFieldTheories}
\Ho
\Fun(
\mathfrak{L}_{\Srect}\Crect, \Mon(\V))
\longleftarrow	
\Ho\Fun(
\Crect_{\Srect}, \Mon(\V))
\end{equation}
is an equivalence of categories. We note that it is quite unlikely that the functor (\ref{eqt_RestrictionBetweenWeakStrictConstanFieldTheories}) is in general an equivalence since $\mathfrak{L}_{\Srect}\Crect$ and $\Crect_{\Srect}$ can be potentially quite different.

Consider that $\V$ is the category of simplicial sets $\sSet$ with the Kan-Quillen model structure. Then, on the one hand, the ordinary localization $\Crect_{\Srect}$ coincides with the category $\pi_0\mathfrak{L}_{\Srect}\Crect$, obtained from $\mathfrak{L}_{\Srect}\Crect$ by taking connected components of its mapping spaces \cite[Corollary 4.2]{dwyer_function_1980}. On the other hand, any small simplicial category (or $\infty$-category) can be obtained, up to Dwyer-Kan equivalence, as $\mathfrak{L}_{\Srect}\Crect$ for a suitable choice of $(\Crect,\Srect)$ due to \cite{barwick_relative_2012}. Thus, one can take $(\Crect,\Srect)$ in such a way that $\mathfrak{L}_{\Srect}\Crect$ is a model for the classifying $\infty$-groupoid $\mathsf{BG}$ of a simply connected topological group $\mathsf{G}$. Under these conditions, the functor (\ref{eqt_RestrictionBetweenWeakStrictConstanFieldTheories}) becomes the diagonal functor
 $
\Ho
\Fun(
\mathsf{BG}, \Mon(\V))
\leftarrow	
\Ho \Mon(\V),
 $
which sends a monoid to the constant functor at that monoid. If we further consider that $\sSet$ is endowed with the cocartesian monoidal structure, the category of monoids is just $\sSet$ itself. The diagonal functor in this case can be identified with the composite
  $$
 \Ho\sSet_{/\mathsf{BG}}
 \overset{\sim}{\longleftarrow}
 \Ho
 \Fun(
 \mathsf{BG},\sSet)
 \longleftarrow	
 \Ho \sSet,
 $$
 defined by sending a space $\X$ to the second factor projection  $\X\times\mathsf{BG}\to\mathsf{BG}$, as a consequence of \cite[Section 8]{shulman_parametrized_2008} (or the references cited therein for the simplicial case). This functor cannot be an equivalence for a general $\mathsf{G}$; that is, the answer to (Q1) is negative. For instance, the universal fibration $\mathsf{EG}\to\mathsf{BG}$ does not belong to the essential image of this functor, since this would imply that $\mathsf{H}^*(\mathsf{BG})$ can be embedded into $\mathsf{H}^*(\mathsf{EG})\cong \mathsf{H}^*(\pt)$. Moreover, this functor is not fully faithful in general. One way to see it is by computing the (homotopy) mapping space between images of the functor in $\sSet_{/\mathsf{BG}}$ to see that 
 $$
 \Map_{\mathsf{BG}}(\X\times\mathsf{BG},\Y\times\mathsf{BG})\simeq \Map(\X\times\mathsf{BG},\Y).
 $$
 Taking connected components, the map 
 $$
 \Ho\sSet(\X,\Y)\longrightarrow \pi_{0}\Map(\X\times\mathsf{BG},\Y)\cong\Ho\sSet_{/\mathsf{BG}}(\X\times\mathsf{BG},\Y\times\mathsf{BG})
 $$
 is not a bijection in general. In fact, for $\X$ contractible, $\mathsf{G}$ the Eilenberg-MacLane space $\mathsf{K}(\mathbb{Z},2)$ and $\Y=\mathsf{BG}$, taking connected components we get $\pi_{0}(\Y)\cong 0$ and $\mathsf{H}^3(\mathsf{K}(\mathbb{Z},3),\mathbb{Z})\cong\mathbb{Z}.$ 
\end{ex}

 Note that the failure of fully-faithfulness above states that  
$$
\upxi^*\colon\Ho\QFT_{\Srect}(\Crect^{\perp})\longrightarrow \Ho\QFT_{\h\Srect}(\Crect^{\perp}) 
$$
is not fully faithful,  or equivalently that the derived counit $\mathbb{L}\upxi_{\sharp}\upxi^*\Rightarrow\id$ is not a natural equivalence, where $\upxi$ is the remaining factor in the factorization $\mathcal{L}=\upxi\cdot \ell$. Observe that one can write the derived counit in (Q3) as the following composition
$$
\begin{tikzcd}[ampersand replacement=\&]
\mathbb{L}\mathcal{L}_{\sharp}\mathcal{L}^*=\mathbb{L}\upxi_{\sharp}\mathbb{L}\ell_{\sharp}\ell^*\upxi^*\ar[rr, Rightarrow] \&\& \mathbb{L}\upxi_{\sharp}\upxi^*\ar[rr, Rightarrow] \&\& \id.
\end{tikzcd}
$$
The natural transformation on the left is a natural equivalence (apply $\mathbb{L}\upxi_{\sharp}$ to the natural equivalence in (i) above), and thus $\mathbb{L}\mathcal{L}_{\sharp}\mathcal{L}^*\Rightarrow\id $ is a natural equivalence iff $\mathbb{L}\upxi_{\sharp}\upxi^*\Rightarrow\id$ is so. Therefore, (Q3) is negative in general.


\begin{rem}\label{rem_StrictificationPAPER} The discussion above points towards a main obstruction for homotopy $\Srect$-constancy to coincide with (quasi-)strict $\Srect$-constancy in general. However, physical examples may be blind to this issue for several reasons. For instance, in \cite{bruinsma_relative_2021}, the authors show that linear homotopy AQFTs admit a strictification for the homotopy time-slice axiom when $\Crect^{\perp}$ is the relative Cauchy evolution category, i.e.\ it exists a weakly equivalent AQFT satisfying the strict time-slice axiom. The recent preprint \cite{benini_strictification_2022} addresses this kind of question in much generality and it provides stronger results with simpler proofs (due to all the constructions developed in the present work). In particular, we prove that 
	$
	\Qft_{\h\Srect}(\Crect^{\perp})\leftarrow\Qft_{\Srect}(\Crect^{\perp})
	$
is a Quillen equivalence for: (i) Haag-Kastler-type AQFTs on a fixed globally hyperbolic Lorentzian manifold (with or without time-like boundary), (ii) locally covariant AQFTs in one spacetime dimension, (iii) locally covariant conformal AQFTs in two spacetime dimensions and (iv) general AQFTs when $\Crect^{\perp}$ is the relative Cauchy evolution category. That is, the homotopy time-slice axiom can be strictified in those situations. In fact, it seems to be a general phenomenon that AQFTs in Lorentzian settings could admit strictifications for the time-slice axiom.  
\end{rem}

\end{subsection}
\end{section}

\begin{section}{Extension model structure}\label{sect_ExtensionModelStructure}
	
Our goal is to produce the extension model structure for operadic algebras, which, aside its intrinsic interest, will be crucial in the main results of next section. Given an "inclusion" of operads, the formal idea of the construction consists of finding a cellularization, also known as right Bousfield localization, of the projective model structure on algebras for the bigger operad, which is Quillen equivalent to the projective model structure for the smaller one.
 
 The discussion in this section is more technical, and related to the theory of Quillen model stuctures. Therefore, it could be skipped in a first lecture and consulted when referred in the sequel.
 
 Let us fix a closed symmetric monoidal model category $\V$ (see \cite{fresse_homotopy_2017-1}). We think about  $\V$ as a sufficiently structured homotopical cosmos in which our constructions hold. Several hypotheses must be satisfied for the extension model structure to exist, and they will be stated in due time, when needed.
 
 \begin{notat}
 	We say that a functor, between relative  categories \cite{barwick_relative_2012}, is homotopical if it preserves equivalences.
 \end{notat}
 
 Let $\upiota\colon \OpB\to \OpN$ be a morphism of $\V$-operads. Then, the induced restriction functor $\upiota^*$ between algebras admits a left adjoint
 $$
 \upiota_{\sharp}\colon \Alg_{\OpB}(\V)\rightleftarrows \Alg_{\OpN}(\V)\colon \upiota^*.
 $$
 We want to deal with homotopy theories on these categories, so we assume:
 \begin{hyp}\label{hyp_VNiceEnough}
 	$\V$ is cofibrantly generated \cite[Definition 11.1.2]{hirschhorn_model_2003} and $\OpB$, $\OpN$ are admissible operads, i.e.\ the categories $\Alg_{\OpB}(\V)$ and $\Alg_{\OpN}(\V)$ admit the projective model structure.
 \end{hyp}
 
Endowing both categories with the projective model structure, the adjunction $\upiota_{\sharp}\dashv\upiota^*$ is a Quillen pair with $\upiota^*$ being a homotopical functor. With this observation, we can restate our goal in this section as the search of a sufficiently broad condition on $\upiota\colon\OpB\to\OpN$ such that the induced derived adjunction
$$
\mathbb{L}\upiota_{\sharp}\colon \Ho\Alg_{\OpB}(\V)\rightleftarrows\Ho\Alg_{\OpN}(\V)\colon \upiota^*
$$
identifies $\Ho\Alg_{\OpB}(\V)$ with a full coreflective subcategory of $\Ho\Alg_{\OpN}(\V)$, which moreover can be modelled by what we call the extension model structure on $\Alg_{\OpN}(\V)$. We achieve this intent in two steps: firstly, we provide a model structure on $\Alg_{\OpN}(\V)$ which exists in great generality, but which does not always produce what we expect; secondly, we introduce a condition on $\upiota\colon\OpB\to\OpN$ and we check that our model category  fulfills our plan assuming it.

\begin{prop}\label{prop_ExtensionModelCofibrantlyGenerated} The projective model structure on $\Alg_{\OpN}(\V)$ admits a cellularization (or right Bousfield localization) called extension model structure with the following properties:
	\begin{itemize}
		\item A map $\upf\colon\Aalg\to\Balg$ is a weak equivalence iff $\upf_{\upiota\textup{b}}\colon \Aalg(\upiota\upb)\to\Balg(\upiota\upb)$ is an equivalence for any $\upb\in\col(\Balg)$, i.e. $\upiota^*\upf$ is an equivalence.
	    \item The class of fibrations in the extension model structure equals that of proj-fibrations.
	    \item It is cofibrantly generated (resp. combinatorial if $\V$ is so).
	\end{itemize}
\end{prop}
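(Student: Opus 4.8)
The plan is to realize the extension model structure as a right Bousfield localization (cellularization) of the projective model on $\Alg_{\OpN}(\V)$ with respect to a well-chosen set $K$ of cofibrant objects, and then to check that the resulting class of $K$-colocal weak equivalences coincides with the class of maps inverted by $\upiota^{*}$.

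For the localizing set, I would use that $\V$ is cofibrantly generated to fix a set $\mathcal{G}$ of cofibrant objects of $\V$ that jointly detect weak equivalences (e.g.\ the domains of the generating cofibrations, replaced by cofibrant objects if necessary). For $\upb\in\col(\OpB)$ and $G\in\mathcal{G}$, let $F_{\OpN}(G_{\upiota\upb})$ be the free $\OpN$-algebra generated by $G$ placed at the color $\upiota\upb$. Since $F_{\OpN}(G_{\upiota\upb})\cong\upiota_{\sharp}F_{\OpB}(G_{\upb})$, the set $K$ of all such objects is precisely the image under $\upiota_{\sharp}$ of a set of cofibrant homotopy generators of $\Alg_{\OpB}(\V)$, and each element of $K$ is projectively cofibrant. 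I would then invoke the existence of cellularizations: the projective model on $\Alg_{\OpN}(\V)$ is right proper because (co)limits, fibrations and weak equivalences are all computed pointwise and $\V$ is right proper; it is cofibrantly generated by Hypothesis \ref{hyp_VNiceEnough}, and combinatorial whenever $\V$ is. Hence the $K$-cellularization exists by the standard theory of right Bousfield localization \cite{barwick_left_2010} (Barwick's theorem in the combinatorial case; Hirschhorn's cellularization in the cellular one). By construction its fibrations are the projective fibrations, and it inherits cofibrant generation — and combinatoriality when $\V$ is combinatorial — from that theory; this settles the second and third bullets. Its weak equivalences are, by definition, the $K$-colocal equivalences.

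It then remains to identify the $K$-colocal equivalences with the maps $\upf$ such that $\upiota^{*}\upf$ is an equivalence, i.e.\ the first bullet. For this I would combine the derived adjunction $\mathbb{L}\upiota_{\sharp}\dashv\upiota^{*}$ with the free--forgetful adjunction $F_{\OpN}\dashv U_{\OpN}$ and the projective cofibrancy of the elements of $K$ to compute, for $\Aalg\in\Alg_{\OpN}(\V)$,
$$
\mathbb{R}\Map_{\Alg_{\OpN}(\V)}\!\bigl(F_{\OpN}(G_{\upiota\upb}),\,\Aalg\bigr)\;\simeq\;\mathbb{R}\Map_{\V}\!\bigl(G,\,\Aalg(\upiota\upb)\bigr)\;=\;\mathbb{R}\Map_{\V}\!\bigl(G,\,(\upiota^{*}\Aalg)(\upb)\bigr).
$$
Consequently $\upf\colon\Aalg\to\Balg$ is a $K$-colocal equivalence iff $\mathbb{R}\Map_{\V}(G,\Aalg(\upiota\upb))\to\mathbb{R}\Map_{\V}(G,\Balg(\upiota\upb))$ is an equivalence for all $\upb\in\col(\OpB)$ and all $G\in\mathcal{G}$; since $\mathcal{G}$ detects weak equivalences in $\V$, this is equivalent to $\Aalg(\upiota\upb)\to\Balg(\upiota\upb)$ being a weak equivalence for every $\upb$, that is, to $\upiota^{*}\upf$ being an equivalence.

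The part I expect to be most delicate is arranging $K$ so that this computation is valid in the stated generality: one has to choose generators of $\V$ that are simultaneously cofibrant and weak-equivalence-detecting, and one has to make sure the hypotheses needed for the cellularization to exist and be cofibrantly generated are available outside the combinatorial world. Over $\V=\Ch(\mathbb{k})$ all of this is automatic — every object is fibrant, the domains of the generating cofibrations are cofibrant and detect weak equivalences, and the projective model on operadic algebras is combinatorial and right proper — so no extra care is needed in the setting used in the rest of the paper.
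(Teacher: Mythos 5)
Your proof takes a genuinely different route from the paper's. You invoke the off-the-shelf theory of right Bousfield localization at an explicit set $K$ of cofibrant free algebras and then identify the $K$-colocal equivalences with the $\upiota^*$-equivalences via the free--forgetful adjunction and derived mapping spaces. The paper instead applies Kan's recognition theorem (Hirschhorn, Theorem 11.3.1) directly: it takes $\textup{J}$ = generating acyclic projective cofibrations, and $\textup{I}\sqcup\textup{J}$ where $\textup{I}$ is the image along $\upiota_{\sharp}(\OpB\circ\star)$ of the generating cofibrations of $[\col(\OpB),\V]$, and then verifies the recognition hypotheses. This sidesteps mapping spaces and colocal objects entirely.

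Your route is correct over $\V=\Ch(\mathbb{k})$, but it proves a weaker statement than the Proposition as written. Three things you use are \emph{not} contained in Hypothesis \ref{hyp_VNiceEnough} (which only asks for cofibrant generation and admissibility). (a) Right properness: the cellularization theorems of Hirschhorn and Barwick you appeal to require the ambient model structure to be right proper; the paper's recognition-theorem argument does not. (b) Detecting generators: your identification of the colocal equivalences needs a set $\mathcal{G}$ of cofibrant objects of $\V$ whose derived mapping spaces jointly detect weak equivalences. That the domains of the generating cofibrations do this is not a consequence of cofibrant generation alone --- it is an extra property of $\V$. (c) Cofibrant generation of the localized model: this is the most serious point, and the one you do not flag. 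Outside the combinatorial setting, Hirschhorn-style cellularization is not known to produce a cofibrantly generated model structure; this is a well-known subtlety of right Bousfield localization. The paper gets cofibrant generation essentially for free because it builds the model structure by exhibiting explicit generating sets $\textup{I}\sqcup\textup{J}$ and $\textup{J}$ and checking the recognition theorem. Your argument would need a supplementary step to conclude that the $K$-cellularization is cofibrantly generated when $\V$ is merely cofibrantly generated. So, as a proof of the Proposition for $\V=\Ch(\mathbb{k})$, what you wrote is fine and pleasantly conceptual; as a proof of the Proposition under Hypothesis \ref{hyp_VNiceEnough} alone, it has these three gaps, only two of which you acknowledge.
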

\begin{proof} First note that if it exists, the extension model structure is a cellularization of the projective model structure since its class of weak equivalences contains that of proj-equivalences. The existence of this cellularization will be an application of Kan's recognition theorem \cite[Theorem 11.3.1]{hirschhorn_model_2003} of model categories. Thus, we must find candidates of generating (trivial) cofibrations that fulfill the requirements in loc.\ cit. 
	
	Since the fibrations of the extension model structure are the proj-fibrations, it suffices to take as generating acyclic cofibrations the set $\textup{J}$ of generating acyclic proj-cofibrations. 
	
	In order to find a set of generating cofibrations, note that they should detect trivial fibrations, which are proj-fibrations that are also equivalences when restricted to $\col(\OpB)$ along $\upiota$. The proj-fibration condition is fulfilled if we simply add $\textup{J}$ to our generating cofibrations. Being additionally an equivalence in the above sense is detected by right lifting property against a set $\textup{I}$. Such a set $\textup{I}$ is just the image along the composite functor $\upiota_{\sharp}\Frect_{\OpB}$ of the set of generating cofibrations in $\V^{\times \col(\OpB)}$ due to adjunction, where $\Frect_{\OpB}$ denotes the free $\OpB$-algebra functor. Note that the composite functor $\upiota_{\sharp}\Frect_{\OpB}$ fits into the commutative square of left adjoint functors
	$$
	\begin{tikzcd}[ampersand replacement=\&]
    \V^{\times \col(\OpB)} \ar[r,"\col(\upiota)_!"]\ar[d,"\Frect_{\OpB}"']\& \V^{\times \col(\OpN)}\ar[d, "\Frect_{\OpN}"]\\
	\Alg_{\OpB}(\V)\ar[r,"\upiota_{\sharp}"']\&\Alg_{\OpN}(\V)
	\end{tikzcd}\quad .
	$$ 
	
	Summarizing, $(\textup{I}\sqcup\textup{J},\textup{J})$ are the candidates for generating (acyclic) cofibrations of the extension model structure. So, we must check if they fulfill the requirements in \cite[Theorem 11.3.1]{hirschhorn_model_2003}. Both sets admit the small object argument, by the same reasons given to ensure the existence of the projective model structure on $\Alg_{\OpN}(\V)$. Hence, we are reduced to check that $\textup{I}\sqcup\textup{J}$-fibrations (also called $\textup{I}\sqcup\textup{J}$-injective maps) are proj-fibrations and equivalences in the extension model structure and that $\textup{J}$-cofibrations are equivalences in the extension model structure. Both facts are easy consequences of the definitions. 
	
    What is missing to conclude the proof is checking that the extension model structure is combinatorial when $\V$ is so. This follows from the fact that $\V$ being presentable implies that $\Alg_{\OpN}(\V)$ is presentable.
\end{proof}

In general, it is not clear if the adjunction $\upiota_{\sharp}\colon \Alg_{\OpB}(\V)\rightleftarrows \Alg_{\OpN}(\V)\colon \upiota^*$ yields a Quillen equivalence between the extension model structure on the right and the projective model structure on the left; this property is the one that we want at the end. However, we will prove that the only additional hypothesis for this to hold is the following.
 \begin{hyp}\label{hyp_OperadicLanHomotopicallyFF}
	The unit $\id\to\upiota^*\upiota_{\sharp}$ is an equivalence on proj-cofibrant algebras\footnote{We interpret this condition as $\mathbb{L}\upiota_{\sharp}$ being homotopically fully faithful. It is ensured if $\upiota$ is fully faithful in the ordinary operadic sense.}.
\end{hyp}
 The proof consists of identifying more precisely cofibrant objects and cofibrations between cofibrant objects in the extension model structure. We do so by constructing part of the extension model structure by other means. In particular, we apply the dual of  \cite[Theorem 3.6]{carmona_Bousfield_2022} to the projective model structure on $\Alg_{\OpN}(\V)$ and a particular augmented endofunctor $(\Qrep,\upepsilon)$ that we now define.
 
  Fix a functorial cofibrant replacement $(\Q,\q)$ on $\Alg_{\OpB}(\V)$. Then, one gets the derived pair between projective model categories 
 $$
 \upiota_{\sharp}\Q\colon \Alg_{\OpB}(\V)\rightleftarrows\Alg_{\OpN}(\V)\colon \upiota^*,
 $$
 and the composite endofunctor $\Qrep=\upiota_{\sharp}\Q\upiota^*$ on $\Alg_{\OpN}(\V)$ admits an augmentation that should be seen as the derived counit of the derived pair,
 $$
 \begin{tikzcd}[ampersand replacement= \&]
 \upepsilon\colon \Qrep=\upiota_{\sharp}\Q\upiota^*\ar[r, "\upiota_{\sharp}\q\upiota^*"]\& \upiota_{\sharp}\upiota^*\ar[r, "\text{counit}"]\&\id.
 \end{tikzcd}
 $$
 
 In order to apply loc.\ cit.\ to $(\Qrep,\upepsilon)$ and relate its consequent homotopical structure with the extension model structure of Proposition \ref{prop_ExtensionModelCofibrantlyGenerated}, one is reduced to check the following easy results.
 
 \begin{lem}\label{lem_QrepIsCellularizator} Assume Hypothesis \ref{hyp_OperadicLanHomotopicallyFF} holds. Then, the augmented endofunctor $(\Qrep,\upepsilon)$ satisfies: $\Qrep$ is homotopical endofunctor and both $\upepsilon_{\Qrep}, \,\Qrep\upepsilon$ are natural weak equivalences.
 \end{lem}
 \begin{proof}
 	It is clear that $\Qrep$ is homotopical since it is a composition of homotopical functors. Under Hypothesis \ref{hyp_OperadicLanHomotopicallyFF}, let us see that $\upepsilon_{\Qrep}$ is an equivalence. We deduce this fact from the commutative diagram
 	$$
 	\begin{tikzcd}[ampersand replacement=\&]
 	\upiota_{\sharp}\Q\upiota^*\upiota_{\sharp}\Q\upiota^*\ar[rr,"\upiota_{\sharp}\q\upiota^*\upiota_{\sharp}\Q\upiota^*"]\ar[rrrr,"\upepsilon_{\Qrep}", bend left=30] \&\& \upiota_{\sharp}\upiota^*\upiota_{\sharp}\Q\upiota^*\ar[rr,"\text{counit}\cdot\upiota_{\sharp}\Q\upiota^*"] \&\& \upiota_{\sharp}\Q\upiota^*.\\\\
 	\upiota_{\sharp}\Q\Q\upiota^*\ar[rr,"\sim" , "\upiota_{\sharp}\q\Q\upiota^*"']\ar[uu,"\wr"', "\upiota_{\sharp}\Q\cdot\text{unit}\cdot\Q\upiota^*"] \&\& \upiota_{\sharp}\Q\upiota^* \ar[uu, "\upiota_{\sharp}\cdot\text{unit}\cdot\Q\upiota^*"'] \ar[rruu, equal, bend right=20]
 	\end{tikzcd} 
 	$$ 
 	The commutativity comes from the naturality of the unit and the triangular identity of the adjunction $\upiota_{\sharp}\dashv\upiota^*$; the conclusion follows from 2-out of-3 for equivalences. The claim for $\Qrep\upepsilon$ follows from a similar diagram chasing.
 \end{proof}
\begin{rem} In the notation of  \cite{carmona_Bousfield_2022}, Lemma \ref{lem_QrepIsCellularizator} shows that the augmented endofunctor $(\Qrep,\upepsilon)$ is a BF-coreflector (see \cite[Definition 3.3]{carmona_Bousfield_2022}). This fact will allow us to apply the dual of \cite[Theorem 3.6]{carmona_Bousfield_2022} to identify cofibrant objects of the extension model structure in Theorem \ref{thm_ExistenceOfExtensionModelAndProperties}.
\end{rem}

 \begin{lem}\label{lem_ColocalEquivalences} Assume Hypothesis \ref{hyp_OperadicLanHomotopicallyFF} holds. Then,
 	the class of maps in $\Alg_{\OpN}(\V)$
 	$$
 	\left\{\upf\colon\Aalg\to \Balg\text{ such that }\Qrep\upf\colon \Qrep\Aalg\xrightarrow{}\Qrep\Balg\text{ is an equivalence}\right\}
 	$$
 	coincides with the weak equivalences in the extension model structure (Proposition \ref{prop_ExtensionModelCofibrantlyGenerated}).
 \end{lem}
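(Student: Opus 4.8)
The plan is to verify the two inclusions of the asserted equality of classes separately, reducing each one to a single naturality square together with 2-out-of-3 for equivalences. Throughout I use the description of the weak equivalences of the extension model provided by Proposition~\ref{prop_ExtensionModelCofibrantlyGenerated}, namely that $\upf$ is an extension-weak-equivalence exactly when $\upiota^{*}\upf$ is an equivalence in $\Alg_{\OpB}(\V)$, and I work under Hypothesis~\ref{hyp_OperadicLanHomotopicallyFF}.

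First I would treat the implication ``$\upiota^{*}\upf$ an equivalence $\Rightarrow$ $\Qrep\upf$ an equivalence'', which does not even require Hypothesis~\ref{hyp_OperadicLanHomotopicallyFF}. Since the functorial cofibrant replacement $\Q$ on $\Alg_{\OpB}(\V)$ is homotopical, $\Q\upiota^{*}\upf$ is an equivalence between the cofibrant $\OpB$-algebras $\Q\upiota^{*}\Aalg$ and $\Q\upiota^{*}\Balg$; applying Ken Brown's lemma to $\upiota_{\sharp}$, which is left Quillen for the projective structures (these exist by Hypothesis~\ref{hyp_VNiceEnough}), shows that $\upiota_{\sharp}\Q\upiota^{*}\upf=\Qrep\upf$ is an equivalence.

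For the converse, suppose $\Qrep\upf=\upiota_{\sharp}\Q\upiota^{*}\upf$ is an equivalence. Since $\upiota^{*}$ is homotopical, $\upiota^{*}\upiota_{\sharp}\Q\upiota^{*}\upf$ is an equivalence. Now $\Q\upiota^{*}\upf$ is a map between the cofibrant algebras $\Q\upiota^{*}\Aalg$ and $\Q\upiota^{*}\Balg$, and on cofibrant algebras the unit $\id\to\upiota^{*}\upiota_{\sharp}$ is an equivalence by Hypothesis~\ref{hyp_OperadicLanHomotopicallyFF}; naturality of this unit at $\Q\upiota^{*}\upf$ then produces a square whose two horizontal legs are equivalences and whose right vertical leg is the equivalence $\upiota^{*}\upiota_{\sharp}\Q\upiota^{*}\upf$, so $\Q\upiota^{*}\upf$ is an equivalence by 2-out-of-3. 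Finally, naturality of the transformation $\q\colon\Q\to\id$ at the map $\upiota^{*}\upf$ gives a square whose horizontal legs $\q_{\upiota^{*}\Aalg}$ and $\q_{\upiota^{*}\Balg}$ are equivalences and whose left vertical leg $\Q\upiota^{*}\upf$ has just been shown to be an equivalence; hence $\upiota^{*}\upf$ is an equivalence, which is what the extension model demands.

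The argument is essentially bookkeeping with naturality squares, so I do not expect a serious obstacle, in line with this being one of the ``easy results'' announced before the statement; the only genuine input is Hypothesis~\ref{hyp_OperadicLanHomotopicallyFF}, used exactly once, in the forward implication. The single point that needs care is the temptation to try to show that the augmentation $\upepsilon$, or the unit $\id\to\upiota^{*}\upiota_{\sharp}$ evaluated at a general (non-cofibrant) algebra, is an equivalence: it is not — this is precisely why ``$\Qrep\upf$ an equivalence'' is strictly weaker than ``$\upf$ an equivalence'' — so one must keep every naturality argument confined to cofibrant $\OpB$-algebras, where Hypothesis~\ref{hyp_OperadicLanHomotopicallyFF} applies.
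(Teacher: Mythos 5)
Your proof is correct and follows essentially the same route as the paper's: one direction via Ken Brown's lemma applied to $\upiota_{\sharp}$ on the cofibrant replacement, the other by chasing the naturality squares of the unit $\id\to\upiota^{*}\upiota_{\sharp}$ and of $\q\colon\Q\to\id$ together with 2-out-of-3, Hypothesis~\ref{hyp_OperadicLanHomotopicallyFF} at cofibrant $\OpB$-algebras, and the homotopicality of $\upiota^{*}$ --- exactly the two squares the paper merges into a single commutative rectangle. Note one small slip in your closing paragraph: Hypothesis~\ref{hyp_OperadicLanHomotopicallyFF} is used in the direction ``$\Qrep\upf$ equivalence $\Rightarrow\upiota^{*}\upf$ equivalence'' (what you call the converse), not in the first implication you treat, which, as you correctly argued earlier, does not require it.
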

 \begin{proof}
 	Let us assume that $\upf$ is a weak equivalence in the extension model structure. Then, $\Qrep\textup{f}$ is an equivalence since it is obtained from the equivalence $\upiota^*\textup{f}$ via the homotopical functor $\upiota_{\sharp}\Q$.
 	
 	Conversely, assume that $\upf$ is such that $\Qrep\textup{f}$ is an equivalence. Then we have a commutative diagram
 	$$ 
 	\begin{tikzcd}[ampersand replacement=\&]
 	\upiota^*\Qrep\Aalg\ar[r,equal] \ar[d,"\upiota^*\Qrep\textup{f}"',"\wr"]\& \upiota^*\upiota_{\sharp}\Q\upiota^*\Aalg \ar[d,"\upiota^*\upiota_{\sharp}\Q\upiota^*\textup{f}"']\& \Q\upiota^*\Aalg\ar[l, "\text{unit}"',"\sim"] \ar[r,"\q","\sim"']\ar[d,"\Q\upiota^*\textup{f}"']\& \upiota^*\Aalg \ar[d,"\upiota^*\textup{f}"]\\
 	\upiota^*\Qrep\Balg\ar[r,equal] \& \upiota^*\upiota_{\sharp}\Q\upiota^*\Balg \& \Q\upiota^*\Balg\ar[l, "\text{unit}", "\sim"'] \ar[r,"\q"',"\sim"]\& \upiota^*\Balg
 	\end{tikzcd}
 	$$
 	which by 2-out of-3, Hypothesis \ref{hyp_OperadicLanHomotopicallyFF} and the fact that $\upiota^*$ is homotopical allows us to conclude the result. 
 \end{proof}
 
 These preparations lead to the fundamental result in this section. But first and for future reference, we introduce the following notion.
 
 \begin{defn}\label{defn:ColocalAlgebra}
  We say that an $\OpN$-algebra is $\OpB$-\emph{colocal} if the augmentation $\upepsilon$ on it is an equivalence, i.e. $\Aalg$ is $\OpB$-colocal if $\Qrep\Aalg\xrightarrow{\sim}\Aalg$. 
 \end{defn}
  
 \begin{thm}\label{thm_ExistenceOfExtensionModelAndProperties}
  Assuming Hypothesis \ref{hyp_OperadicLanHomotopicallyFF},	the extension model structure on $\Alg_{\OpN}(\V)$ constructed in Proposition \ref{prop_ExtensionModelCofibrantlyGenerated} satisfies:
  \begin{itemize}
 		\item A cofibrant object is a proj-cofibrant algebra which is $\OpB$-colocal.
 		\item A map between cofibrant objects is a cofibration iff it is a proj-cofibration.
 	\end{itemize}
 	Furthermore, the Quillen pair $\upiota_{\sharp}\dashv\upiota^*$ descends to a Quillen equivalence between the projective model structure on $\Alg_{\OpB}(\V)$ and the extension model structure on $\Alg_{\OpN}(\V)$.
 \end{thm}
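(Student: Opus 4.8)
The plan is to deduce the theorem from the machinery of right Bousfield localizations via augmented endofunctors, specifically the dual of \cite[Theorem 3.6]{carmona_Bousfield_2022}, applied to the augmented endofunctor $(\Qrep,\upepsilon)$ on $\Alg_{\OpN}(\V)$. First I would invoke Lemmas \ref{lem_QrepIsCellularizator} and \ref{lem_ColocalEquivalences}: the former says that, under Hypothesis \ref{hyp_OperadicLanHomotopicallyFF}, $(\Qrep,\upepsilon)$ is a BF-coreflector, so the cited theorem applies and produces a model structure whose weak equivalences are the $\Qrep$-equivalences; the latter identifies that class with the weak equivalences of the extension model of Proposition \ref{prop_ExtensionModelCofibrantlyGenerated}. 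Since both models are cellularizations of the same projective model with the same weak equivalences and the same fibrations (the proj-fibrations), they coincide. Then the structural description of cofibrant objects and cofibrations supplied by \cite[Theorem 3.6]{carmona_Bousfield_2022} transfers verbatim: a cofibrant object is a proj-cofibrant algebra on which $\upepsilon$ is an equivalence, i.e. a proj-cofibrant $\OpB$-colocal algebra, and a map between cofibrant objects is a cofibration precisely when it is a proj-cofibration.

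For the Quillen equivalence, I would first observe that $\upiota_{\sharp}\dashv\upiota^*$ is a Quillen pair from the projective model on $\Alg_{\OpB}(\V)$ to the extension model on $\Alg_{\OpN}(\V)$: right-properness aside, $\upiota^*$ preserves fibrations (these are proj-fibrations on both sides) and preserves the weak equivalences of the extension model since by definition a map $\upf$ is an extension-equivalence exactly when $\upiota^*\upf$ is a proj-equivalence. To upgrade to a Quillen equivalence I would check the standard criterion: for a proj-cofibrant $\Aalg\in\Alg_{\OpB}(\V)$ and a fibrant $\Balg\in\Alg_{\OpN}(\V)$, a map $\upiota_{\sharp}\Aalg\to\Balg$ is an extension-equivalence iff its adjunct $\Aalg\to\upiota^*\Balg$ is a proj-equivalence. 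Unwinding the definition, $\upiota_{\sharp}\Aalg\to\Balg$ is an extension-equivalence iff $\upiota^*\upiota_{\sharp}\Aalg\to\upiota^*\Balg$ is a proj-equivalence; composing with the unit $\Aalg\to\upiota^*\upiota_{\sharp}\Aalg$, which is an equivalence on proj-cofibrant algebras by Hypothesis \ref{hyp_OperadicLanHomotopicallyFF}, and using 2-out-of-3, this is equivalent to $\Aalg\to\upiota^*\Balg$ being a proj-equivalence, as desired.

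Equivalently, and perhaps more cleanly, I would phrase the equivalence through the derived adjunction $\mathbb{L}\upiota_{\sharp}\dashv\upiota^*$: Hypothesis \ref{hyp_OperadicLanHomotopicallyFF} says the derived unit is a natural equivalence, so $\mathbb{L}\upiota_{\sharp}$ is homotopically fully faithful and identifies $\Ho\Alg_{\OpB}(\V)$ with the full coreflective subcategory of $\Ho\Alg_{\OpN}(\V)$ spanned by the objects in its essential image; but the extension model was built precisely so that its homotopy category is the full subcategory of $\OpB$-colocal objects, which is this essential image (every $\upiota_{\sharp}\Q\upiota^*\Balg$ is $\OpB$-colocal by Lemma \ref{lem_QrepIsCellularizator}, and conversely an $\OpB$-colocal object is equivalent to $\Qrep$ of itself, hence in the image of $\mathbb{L}\upiota_{\sharp}$). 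A derived adjunction that is a coreflection onto a subcategory, together with a model presenting exactly that subcategory on the target side, is the same as a Quillen equivalence. The main obstacle I anticipate is not any single deep step but the careful bookkeeping needed to verify that the model produced by \cite[Theorem 3.6]{carmona_Bousfield_2022} is genuinely the \emph{same} model structure as that of Proposition \ref{prop_ExtensionModelCofibrantlyGenerated} — one must check agreement of all three classes, and in particular that the cofibrations (not just the cofibrations between cofibrant objects) match, which requires knowing that a cellularization is determined by its class of weak equivalences given a fixed underlying model category, so that having the same weak equivalences and being a cellularization of the same model forces the classes to coincide.
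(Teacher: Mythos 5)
Your proposal is correct and follows essentially the same route as the paper: both parts are established by applying the dual of \cite[Theorem 3.6]{carmona_Bousfield_2022} to $(\Qrep,\upepsilon)$, supported by Lemmas \ref{lem_QrepIsCellularizator} and \ref{lem_ColocalEquivalences} to match weak equivalences and fibrations with the extension model, and then noting that the derived unit and counit are equivalences. Your explicit unwinding of the Quillen-equivalence criterion and the alternative coreflection phrasing are just more spelled-out versions of the paper's terse concluding sentence; the only small imprecision is calling the output of the cited theorem a model structure rather than an ``almost'' model structure, which is harmless here because Proposition \ref{prop_ExtensionModelCofibrantlyGenerated} already supplies a genuine model structure with the same fibrations and weak equivalences.
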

 \begin{proof}	
 	On the one hand, the recognition of cofibrant objects and cofibrations between them follows from the dual of \cite[Theorem 3.6]{carmona_Bousfield_2022} applied to $(\Qrep,\upepsilon)$. The reason is that loc.\;cit.\;can be applied due to Lemma \ref{lem_QrepIsCellularizator}, yielding almost a model structure $\Alg_{\OpN}(\V)_{\Qrep}$ (see \cite[Definition 2.6]{carmona_Bousfield_2022}) which shares fibrations and equivalences with the extension model structure by definition and Lemma \ref{lem_ColocalEquivalences}.
 
 	On the other hand, the Quillen pair $\upiota_{\sharp}\dashv\upiota^*$ descends to a Quillen pair for the extension model structure since $\upiota^*$ preserves fibrations and weak equivalences. Moreover, by Hypothesis \ref{hyp_OperadicLanHomotopicallyFF} and construction, the derived unit and counit are equivalences, and so we have the desired Quillen equivalence.
 \end{proof}
 
 We close this section by showing that the extension model structure is left proper when $\V=\Ch_{\mathbb{k}}$ with $\mathbb{k}$ field of characteristic zero. This fact is fundamental to further localize the extension model structure using \cite[Theorem 4.7]{barwick_left_2010} or \cite[Theorem 4.1.1]{hirschhorn_model_2003}.
 
 \begin{prop} \label{prop_ExtensionModelIsLeftProper}
 	The projective and the extension model structures on $\Alg_{\OpN}(\Ch_{\mathbb{k}})$ are left proper for any operad $\OpN$.
 \end{prop}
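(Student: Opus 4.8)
The plan is to reduce the left properness of the extension model to that of the projective model, and the left properness of the projective model on $\Alg_{\OpN}(\Ch(\mathbb{k}))$ to a statement we can locate in the literature or deduce from the characteristic-zero hypothesis. Recall that a model category is left proper if the pushout of a weak equivalence along a cofibration is again a weak equivalence. For the extension model, both the cofibrations and the weak equivalences are built from the projective ones: cofibrations in the extension model are contained in proj-cofibrations (by Proposition \ref{prop_ExtensionModelCofibrantlyGenerated}, since the extension model is a cellularization), and a weak equivalence in the extension model is a map $\upf$ with $\upiota^*\upf$ a proj-equivalence. So first I would observe that it suffices to prove left properness of the projective model on $\Alg_{\OpN}(\Ch(\mathbb{k}))$: given a pushout square with $\upf$ an extension-equivalence and $\upi$ an extension-cofibration, I factor nothing — I simply note $\upi$ is a proj-cofibration, apply left properness of the projective model to the underlying proj-equivalence $\upiota^*\upf$... but one must be careful, because $\upiota^*$ does not obviously commute with the pushout in $\Alg_{\OpN}$. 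The clean route is instead: extension-equivalences are in particular detected after applying $\upiota^*$, and $\upiota^*$ is a right adjoint, so it does not preserve pushouts; hence I should prove left properness directly at the level of $\Alg_{\OpN}$ with respect to proj-equivalences, and then note that proj-equivalences are extension-equivalences, so a pushout of an extension-equivalence along an extension-cofibration is a pushout of a proj-equivalence (every extension-equivalence that we need to push out will in the relevant lemma be reduced to the projective case) — more simply, any extension-equivalence is not a proj-equivalence in general, so this reduction fails and one genuinely must argue with the extension weak equivalences.

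**The real argument.** The correct and cleaner approach: prove left properness of the \emph{projective} model on $\Alg_{\OpN}(\Ch(\mathbb{k}))$ first, and then deduce left properness of the extension model from the general fact that a \emph{right} Bousfield localization (cellularization) of a left proper model category need not be left proper — so this does NOT work formally either, and one must use the explicit description from Theorem \ref{thm_ExistenceOfExtensionModelAndProperties}. Here is the plan I would actually follow. For the projective model on $\Alg_{\OpN}(\Ch(\mathbb{k}))$: this is a standard fact when the monoidal model category $\Ch(\mathbb{k})$ has all objects fibrant (true for chain complexes) and cofibrant objects behave well; more precisely, in characteristic zero every operad is $\Upsigma$-cofibrant, the forgetful functor to $\prod_{\col(\OpN)}\Ch(\mathbb{k})$ creates weak equivalences and fibrations, and pushouts of algebras along maps of algebras can be analyzed via the filtration of the underlying chain complex of a pushout (the "bar-type" or "free product" filtration of $\Aalg \amalg_{\Crect} \Balg$). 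Since $\Ch(\mathbb{k})$ is left proper and, crucially, \emph{all objects are cofibrant as $\Upsigma$-modules in characteristic zero} (or at least the relevant coinvariants agree with invariants so no derived-tensor obstruction appears), each stage of this filtration is a pushout along a cofibration of chain complexes of a weak equivalence, hence a weak equivalence, and passing to the colimit preserves the equivalence because the maps are levelwise cofibrations. This gives left properness of the projective model.

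**From projective to extension left properness.** For the extension model, I would use Theorem \ref{thm_ExistenceOfExtensionModelAndProperties}: weak equivalences of the extension model are exactly the maps $\upf$ with $\Qrep\upf$ an equivalence (Lemma \ref{lem_ColocalEquivalences}), equivalently $\upiota^*\upf$ a proj-equivalence. Take a pushout square in $\Alg_{\OpN}$ with $\upf \colon \Aalg \to \Balg$ an extension weak equivalence and $\upi\colon \Aalg \to \Crect$ an extension cofibration, with pushout $\Drect = \Balg \amalg_{\Aalg} \Crect$ and induced map $\upg \colon \Crect \to \Drect$. An extension cofibration is in particular a proj-cofibration, so by left properness of the projective model $\upg$ is a proj-equivalence, hence a fortiori $\upiota^*\upg$ is a proj-equivalence, i.e. $\upg$ is an extension weak equivalence. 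That is the whole point: the cofibrations shrink but the weak equivalences \emph{grow} under cellularization, so a pushout of a weak equivalence along an extension-cofibration, being a proj-cofibration pushout, is automatically a proj-equivalence and therefore an extension-equivalence. The only genuine content is thus the left properness of the projective model, and within that the only subtlety — the \textbf{main obstacle} — is controlling the homotopy type of the pushout $\Balg \amalg_{\Aalg} \Crect$ of $\OpN$-algebras via its canonical filtration and checking that each attaching map is a cofibration of chain complexes along which one is pushing out a weak equivalence; this is exactly where the characteristic-zero hypothesis ($\mathbb{Q} \subseteq \mathbb{k}$, so that symmetric powers are exact and $\Upsigma$-cofibrancy is automatic) is used, and I would cite the analogous computation in \cite{benini_homotopy_2019} or the general treatment of \cite{fresse_homotopy_2017-1} rather than redo it.
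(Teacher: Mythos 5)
Your final paragraph deduces extension left properness from projective left properness, but the deduction is not valid, and you yourself flag the problem earlier in the proposal before going ahead with it anyway. You push out an extension weak equivalence $\upf\colon\Aalg\to\Balg$ along an extension cofibration, observe that the cofibration is a proj-cofibration, and then invoke ``left properness of the projective model'' to conclude that the cobase change of $\upf$ is a proj-equivalence. But projective left properness only applies when $\upf$ is a \emph{proj}-equivalence, and an extension weak equivalence in general is not: it is only an equivalence after applying $\upiota^*$, i.e.\ colorwise on $\upiota(\col\OpB)$, while the projective model sees all colors of $\OpN$. Your slogan that ``cofibrations shrink and weak equivalences grow under cellularization'' cuts the wrong way: enlarging the class of weak equivalences makes left properness a \emph{stronger} claim, and nothing in that slogan lets you push out an extension-equivalence that is not already a proj-equivalence.

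The paper proves exactly the strengthening your argument would need: extension-equivalences are stable under cobase change along arbitrary \emph{projective} cofibrations. Since extension-cofibrations are a subclass of proj-cofibrations (and taking $\upiota=\id$ recovers the projective statement), this single claim yields both halves of the proposition at once. The proof does not detour through the projective model. Writing the proj-cofibration as a transfinite composite of pushouts of generating free cofibrations, the paper refines each such pushout via the filtration of \cite[Proposition 4.3.17]{white_bousfield_2018} into a further $\upomega$-sequence of colorwise cofibrations in $\left[\col(\OpN),\V\right]$, and then inducts transfinitely that the comparison map stays an equivalence when evaluated on $\upiota(\col\OpB)$: the characteristic-zero hypothesis is used to control the attaching maps at successor steps, and closure of quasi-isomorphisms under transfinite composition handles limit steps. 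Your instinct to invoke a filtration of algebra pushouts and the $\mathbb{Q}\subseteq\mathbb{k}$ hypothesis was the right one; what you need to do is run that filtration while directly tracking the extension-equivalence condition rather than first passing through the stronger projective one.
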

\begin{proof} Both statements are showed by proving a slightly more general fact: equivalences in the extension model structure are stable under cobase changes along projective cofibrations, because our argument is completely adaptable to prove that projective equivalences are stable under cobase changes along projective cofibrations. Being more precise, we show that given a pushout square in $\Alg_{\OpN}(\V)$ 
	$$
	\begin{tikzcd}[ampersand replacement=\&]
		\Aalg\ar[r,"\upf"]\ar[d, "\upg"'] \ar[rd, phantom, "\ulcorner" very near start] \& \Balg\ar[d]\\
	\widetilde{\Aalg}\ar[r, "\widetilde{\upf}"'] \& \widetilde{\Balg}
	\end{tikzcd}
	$$	
 where $\upf$ is an equivalence in the extension model structure and $\upg$ a projective cofibration, then  $\widetilde{\upf}$ is an equivalence in the extension model structure.
 
 First note that, as equivalences in the extension model structure are closed under retracts, we can consider without loss of generality that $\upg$ is a cellular proj-cofibration, i.e. $\upg$ is a transfinite composite of pushouts of generating proj-cofibrations $\upg_{\upalpha}$. Recalling that the projective model structure on $\Alg_{\OpN}(\V)$ is transferred through the (free $\OpN$-algebra, forgetful) adjunction 
 $$
 \Frect_{\OpN}\colon \V^{\times\col(\OpN)}\rightleftarrows\Alg_{\OpN}(\V)\colon \textup{U},
 $$
 it is clear that each pushout in the transfinite composite is of the form
 $$
 \begin{tikzcd}[ampersand replacement=\&]
 \Frect_{\OpN}(\source(\upj))\ar[r]\ar[d, rightarrowtail, "\Frect_{\OpN}(\upj)"'] \ar[rd, phantom, "\ulcorner" very near start] \& \Aalg_{\upalpha}\ar[d,"\upg_{\upalpha}",rightarrowtail]\\
 \Frect_{\OpN}(\target(\upj))\ar[r] \& \Aalg_{\upalpha+1},
 \end{tikzcd}
 $$
 where $\upj$ is a generating cofibration of $\V$ concentrated in one color. Due to \cite[Proposition 4.3.17]{white_bousfield_2018}, $\upg_{\upalpha}$, when viewed in $\V^{\times\col(\OpN)}$ via the forgetful functor, can be described as a $\upomega$-transfinite composite of colorwise cofibrations:
 $$
 \begin{tikzcd}[ampersand replacement=\&]
 \upg_{\upalpha}\colon \Aalg_{\upalpha}=\Aalg_{\upalpha}^{0}\ar[r,"\upg_{\upalpha}^{\,1}",rightarrowtail]\&\Aalg_{\upalpha}^{1}\ar[r,"\upg_{\upalpha}^{\,2}",rightarrowtail]\&\cdots\ar[r,rightarrowtail]\&\Aalg_{\upalpha}^{\upomega}=\Aalg_{\upalpha+1}.
 \end{tikzcd}
 $$ 
 Arranging the two filtrations together, $\upg$ is described as a transfinite composite of cofibrations in $\V^{\times \col(\OpN)}$. Using this, the initial pushout square is decomposed into the following commutative diagram in $\V^{\times \col(\OpN)}$:
 $$
 \begin{tikzcd}[ampersand replacement=\&]
 \Aalg\ar[d, rightarrowtail]\ar[rr, "\upf" description] \&\&  \Balg\ar[d, rightarrowtail]\\[-4]
 \Aalg_{0}^1 \ar[rr,"\upf_{\,0}^{\,1}" description]\ar[d, phantom,"\vdots"] \&\& \Balg_{0}^1\ar[d, phantom,"\vdots"]\\[-4]
 \Aalg_{0}^{\upomega}\ar[rr, "\upf_{\,0}^{\,\upomega}" description]\ar[rd,equal] \&\&  \Balg_0^{\upomega}\ar[rd,equal] \\[-4]
 \& \Aalg_1^0\ar[d, rightarrowtail]\ar[rr, "\upf_{\,1}^{\,0}" description] \&\&  \Balg_1^0\ar[d, rightarrowtail]\\[-4]
 \& \Aalg_{1}^1 \ar[rr,"\upf_{\,1}^{\,1}" description]\ar[d, phantom,"\vdots"] \&\& \Balg_{1}^1\ar[d, phantom,"\vdots"]\\[-4]
 \& \Aalg_{1}^{\upomega}\ar[rr, "\upf_{\,1}^{\,\upomega}" description] \ar[rd, phantom, "\ddots"]\&\&  \Balg_1^{\upomega}\ar[rd, phantom, "\ddots"]\\[-4]
 \&\& {} \ar[d,phantom, "\vdots"] \&\& {} \ar[d, phantom, "\vdots"]\\[-4]
 \&\& {} \ar[rd, phantom, "\ddots"] \&\& {} \ar[rd, phantom, "\ddots"]\\[-4]
 \&\&\& \widetilde{\Aalg} \ar[rr,"\widetilde{\upf}" description] \&\& \widetilde{\Balg}
 \end{tikzcd}
 $$  
 and this fact is what we need to conclude that $\widetilde{\upf}$ in an equivalence in the extension model structure, since for this to hold we must check that evaluating $\widetilde{\upf}$ on $\upiota(\col\OpB)$ we get an equivalence. We use the usual inductive argument to show this. For inductive steps, $\upf^{\,\upn}_{\,\upalpha}$ being an equivalence on $\upiota(\col\OpB)$ implies that $\upf^{\,\upn+1}_{\,\upalpha}$ is so by an easy analysis of the pushouts appearing in \cite[Proposition 4.3.17]{white_bousfield_2018} (recall that we are working on $\Ch_{\mathbb{k}}$ with $\mathbb{k}$ field of characteristic zero). For transfinite steps, we use that quasi-isomorphisms are closed under transfinite composites.
\end{proof}

\begin{rem} The extreme restriction on the underlying model structure $\V$ in Proposition \ref{prop_ExtensionModelIsLeftProper} is chosen to maintain the technicalities at a minimum level. Various generalizations are possible, mainly due to results in \cite{carmona_enveloping_2024}, but we will not discuss them here. 
	
It is important to point out that a category of operadic algebras is not always left proper. The fact that this property holds when $\V=\Ch_{\mathbb{k}}$ ultimately relies on the extremely good behaviour of chain complexes over a field of characteristic zero.   	
\end{rem}

\end{section}

\begin{section}{Local-to-global cellularization}\label{sect_CausalityColocalization}
	This section is devoted to the presentation of a model category structure which encapsulates the homotopy theory of AQFTs that satisfy a natural local-to-global condition, and its interaction with the model structures defined in Section \ref{sect_TimeLocalization} that deal with the time-slice axiom. 

\begin{subsection}{The extension model structure on AQFTs}
First, we recall this canonical local-to-global condition (see \cite{benini_homotopy_2019, benini_operads_2021}).
\begin{defn}\label{defn_FredenhagenLocalToGlobalPrinciple}
Let $\Crect_{\diamond}^{\perp}\hookrightarrow \Crect^{\perp}$ be the inclusion of a full orthogonal subcategory and $$\upiota_{\sharp}\colon\Qft(\Crect_{\diamond}^{\perp})\rightleftarrows\Qft(\Crect^{\perp})\colon\upiota^*$$ the induced Quillen adjunction. Then, a field theory $\Aalg\in\QFT(\Crect^{\perp})$ is said to satisfy the $\Crect_{\diamond}$\emph{-local-to-global axiom} if the canonical map  
$
\mathbb{L}\upiota_{\sharp}\upiota^*\Aalg\to\Aalg
$ 
 is an equivalence, where $\mathbb{L}\upiota_{\sharp}$ denotes the derived functor of $\upiota_{\sharp}$. In the sequel,  $\QFT^{\Crect_{\diamond}}(\Crect^{\perp})$ denotes the full subcategory of $\QFT(\Crect^{\perp})$ spanned by field theories which satisfy the $\Crect_{\diamond}$\emph{-local-to-global axiom}.
\end{defn}
\begin{rem} In \cite{benini_operads_2021}, the authors proposed the above local-to-global principle as a substitute for Fredenhagen's universal construction \cite{fredenhagen_generalizations_1990}, which is proven to fail in certain situations because of the violation of Einstein causality, or in other words $\perp$-commutativity (\cite[Section 5]{benini_operads_2021}). 
\end{rem}

Letting aside the homotopical discussion for a moment, it is easy to see that the morphism of operads $\upiota\colon\Op_{\Crect_{\diamond}}^{\perp}\to\Op_{\Crect}^{\perp}$ induces an equivalence of categories
$$
\upiota_{\sharp}\colon\QFT(\Crect_{\diamond}^{\perp})\overset{\sim}{\longrightarrow}\left\lbrace\begin{matrix}
\Aalg\in\QFT(\Crect^{\perp})\text{ s.t.}\\ \upiota_{\sharp}\upiota^*\Aalg\xrightarrow{}\Aalg \text{ is an iso}
\end{matrix} \right\rbrace,
$$
which shows that the algebraic quantum field theories that satisfy the strict version of the $\Crect_{\diamond}$-local-to-global principle are completely determined by their restriction to spacetime regions in $\Crect_{\diamond}$. This fact clearly justifies the use of the cellularization discussed in Section  \ref{sect_ExtensionModelStructure}.

The operad inclusion $\upiota\colon\Op_{\Crect_{\diamond}}^{\perp}\hookrightarrow\Op_{\Crect}^{\perp}$ fulfills the requirements to apply Theorem \ref{thm_ExistenceOfExtensionModelAndProperties}. Hence, $\Qft(\Crect^{\perp})$ admits a cellularization, which we denote $\Qft^{\Crect_{\diamond}}(\Crect^{\perp})$, and  whose essential properties are collected in the following theorem.

\begin{thm}\label{thm_LocalToGlobalModel}
Let $\Crect_{\diamond}^{\perp}\hookrightarrow \Crect^{\perp}$ be the inclusion of a full orthogonal subcategory. Then, the model structure $\Qft^{\Crect_{\diamond}}(\Crect^{\perp})$ has:
\begin{itemize}
	\item as weak equivalences, morphisms of algebraic quantum field theories which are equivalences when evaluated on spacetime regions within $\Crect_{\diamond}$;
	\item as cofibrant objects, those cofibrant field theories in $\Qft(\Crect^{\perp})$ which satisfy the $\Crect_{\diamond}$-local-to-global axiom;
	\item as fibrations, the class of projective fibrations, i.e the fibrations in $\Qft(\Crect^{\perp})$.
\end{itemize}
Moreover, $\Qft^{\Crect_{\diamond}}(\Crect^{\perp})$ presents the homotopy theory of  $\QFT^{\Crect_{\diamond}}(\Crect^{\perp})$ and it is Quillen equivalent to $\Qft(\Crect_{\diamond}^{\perp})$.
\end{thm}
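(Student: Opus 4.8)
The plan is to deduce Theorem \ref{thm_LocalToGlobalModel} directly from the machinery of Section \ref{sect_ExtensionModelStructure}, applied to the operad morphism $\upiota\colon\Op_{\Crect_{\diamond}}^{\perp}\hookrightarrow\Op_{\Crect}^{\perp}$ induced by the full orthogonal inclusion $\Crect_{\diamond}^{\perp}\hookrightarrow\Crect^{\perp}$, and then to translate the resulting statements about the extension model into the AQFT language. Concretely, $\Qft^{\Crect_{\diamond}}(\Crect^{\perp})$ is \emph{defined} as the extension model on $\Op_{\Crect}^{\perp}\text{-}\Alg\simeq\QFT(\Crect^{\perp})$ attached to $\upiota$.

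The first task is to verify the hypotheses needed to invoke Theorem \ref{thm_ExistenceOfExtensionModelAndProperties}. Hypothesis \ref{hyp_VNiceEnough} is immediate, since $\V=\Ch(\mathbb{k})$ is cofibrantly generated and, with $\mathbb{Q}\subseteq\mathbb{k}$, every operad in $\Ch(\mathbb{k})$ is admissible (indeed $\Upsigma$-cofibrant, cf. \cite[Proposition 2.11]{benini_homotopy_2019}). The crux is Hypothesis \ref{hyp_OperadicLanHomotopicallyFF}, for which, by the footnote to that hypothesis, it suffices to show that $\upiota$ is fully faithful in the ordinary operadic sense, i.e. that $\Op_{\Crect_{\diamond}}^{\perp}(\{\Urect_1,\dots,\Urect_{\upn}\};\Vrect)\to\Op_{\Crect}^{\perp}(\{\Urect_1,\dots,\Urect_{\upn}\};\Vrect)$ is an isomorphism whenever $\Urect_1,\dots,\Urect_{\upn},\Vrect$ lie in $\Crect_{\diamond}$. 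I would argue this from the generators-and-relations description of $\Op_{\Crect}^{\perp}$ recalled in Section \ref{sect_Prelim}: an operation is assembled from the multiplications of $\mathsf{uAss}$ and from $\Crect$-morphisms with target among the chosen colors, subject to the $\perp$-commutativity relations; since $\Crect_{\diamond}$ is a \emph{full} subcategory whose orthogonality relation is the restriction of $\perp$, both the generators and the relations coincide on the two sides. I expect this verification to be the main technical point of the proof; everything after it is an application of Section \ref{sect_ExtensionModelStructure} together with standard facts about right Bousfield localizations.

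Granting the hypotheses, Proposition \ref{prop_ExtensionModelCofibrantlyGenerated} and Theorem \ref{thm_ExistenceOfExtensionModelAndProperties} yield the three bullet points at once: the fibrations of $\Qft^{\Crect_{\diamond}}(\Crect^{\perp})$ are exactly the projective fibrations; a map $\upf$ is a weak equivalence iff $\upiota^{*}\upf$ is an equivalence, which unwinds to the condition that $\upf$ be a pointwise equivalence on all spacetime regions of $\Crect_{\diamond}$; and a cofibrant object is a proj-cofibrant algebra which is $\Op_{\Crect_{\diamond}}^{\perp}$-colocal. It then remains to match $\Op_{\Crect_{\diamond}}^{\perp}$-colocality with the $\Crect_{\diamond}$-local-to-global axiom of Definition \ref{defn_FredenhagenLocalToGlobalPrinciple}: by definition $\Aalg$ is colocal iff the augmentation $\upepsilon_{\Aalg}\colon\Qrep\Aalg=\upiota_{\sharp}\Q\upiota^{*}\Aalg\to\Aalg$ is an equivalence, and since $\upiota_{\sharp}\Q$ models $\mathbb{L}\upiota_{\sharp}$ this is precisely the requirement that $\mathbb{L}\upiota_{\sharp}\upiota^{*}\Aalg\to\Aalg$ be an equivalence. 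Finally, the last assertion of Theorem \ref{thm_ExistenceOfExtensionModelAndProperties} gives that $\upiota_{\sharp}\dashv\upiota^{*}$ is a Quillen equivalence between the projective model $\Qft(\Crect_{\diamond}^{\perp})$ on $\Op_{\Crect_{\diamond}}^{\perp}\text{-}\Alg$ and $\Qft^{\Crect_{\diamond}}(\Crect^{\perp})$.

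For the claim that $\Qft^{\Crect_{\diamond}}(\Crect^{\perp})$ presents the homotopy theory of $\QFT^{\Crect_{\diamond}}(\Crect^{\perp})$ I would proceed in two steps. First, $\mathbb{L}\upiota_{\sharp}\upiota^{*}$ is homotopical, so the $\Crect_{\diamond}$-local-to-global axiom is invariant under equivalences, and on $\QFT^{\Crect_{\diamond}}(\Crect^{\perp})$ the pointwise-on-$\Crect_{\diamond}$ equivalences agree with the pointwise-on-$\Crect$ ones (if $\Aalg,\Balg$ satisfy the axiom and $\upiota^{*}\upf$ is an equivalence, then in the homotopy category $\upf$ is identified with $\mathbb{L}\upiota_{\sharp}\upiota^{*}\upf$, hence is itself an equivalence); so $\Ho\QFT^{\Crect_{\diamond}}(\Crect^{\perp})$ is the full subcategory of $\Ho\QFT(\Crect^{\perp})$ spanned by the theories satisfying the axiom. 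Second, by the general theory of cellularizations (cf. \cite{hirschhorn_model_2003}) the cofibrant-replacement functor of $\Qft^{\Crect_{\diamond}}(\Crect^{\perp})$ identifies $\Ho\Qft^{\Crect_{\diamond}}(\Crect^{\perp})$ with the full subcategory of $\Ho\Qft(\Crect^{\perp})=\Ho\QFT(\Crect^{\perp})$ on the $\Op_{\Crect_{\diamond}}^{\perp}$-colocal objects, which by the previous step are exactly those satisfying the $\Crect_{\diamond}$-local-to-global axiom. Comparing the two descriptions — and, if desired, composing with the Quillen equivalence above to land in $\Ho\QFT(\Crect_{\diamond}^{\perp})$ — yields the desired equivalence of homotopy theories, completing the proof.
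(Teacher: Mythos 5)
Your proof takes essentially the same route as the paper: apply the extension-model machinery of Section \ref{sect_ExtensionModelStructure} to $\upiota\colon\Op_{\Crect_{\diamond}}^{\perp}\hookrightarrow\Op_{\Crect}^{\perp}$, read off the three bullet points and the Quillen equivalence from Proposition \ref{prop_ExtensionModelCofibrantlyGenerated} and Theorem \ref{thm_ExistenceOfExtensionModelAndProperties}, and deduce the last assertion from the characterization of cofibrant objects. The only places where you go beyond the paper's text are in actually verifying Hypothesis \ref{hyp_OperadicLanHomotopicallyFF} via operadic full faithfulness of $\upiota$ (which the paper asserts without proof) and in spelling out the argument compressed in the paper's chain of equivalences $\Ho\Qft^{\Crect_{\diamond}}(\Crect^{\perp})\simeq\Ho\big(\Qft^{\Crect_{\diamond}}(\Crect^{\perp})_{\textup{cofibrant}}\big)\simeq\Ho\QFT^{\Crect_{\diamond}}(\Crect^{\perp})$; both elaborations are correct and consistent with the paper's intent.
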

\begin{proof} All the statements are proven in Theorem \ref{thm_ExistenceOfExtensionModelAndProperties} except that $\Qft^{\Crect_{\diamond}}(\Crect^{\perp})$ presents the homotopy theory of $\QFT^{\Crect_{\diamond}}(\Crect^{\perp})$. This fact follows from the characterization of cofibrant objects in $\Qft^{\Crect_{\diamond}}(\Crect^{\perp})$ by the following chain of equivalences of categories
	$$
	\Ho\Qft^{\Crect_{\diamond}}(\Crect^{\perp})\simeq\Ho\big(\Qft^{\Crect_{\diamond}}(\Crect^{\perp})_{\textup{cofibrant}}\big)\simeq\Ho\QFT^{\Crect_{\diamond}}(\Crect^{\perp}).
	$$
\end{proof}

\begin{rem}\label{rem:LocaltoGlobalObjectsAreEasy} Not very surprisingly, the Quillen equivalence in Theorem \ref{thm_LocalToGlobalModel} between $ \Qft^{\Crect_{\diamond}}(\Crect^{\perp})$ and $\Qft(\Crect_{\diamond}^{\perp})$ implies that, in a homotopical sense, field theories satisfying the $\Crect_{\diamond}$-local-to-global axiom are completely characterized by their restriction to spacetime regions in $\Crect_{\diamond}$. 
\end{rem}

\end{subsection}

\begin{subsection}{Mixing localization and cellularization}
We discuss how the time-slice axiom and the local-to-global principle interact in terms of homotopy theory of AQFTs. We will need to repeat the distinction between the structural and the property-based approach in Section \ref{sect_TimeLocalization}. For these purposes, we start with some notation.

\begin{notat} We denote by $\QFT_{\h\Srect}^{\Crect_{\diamond}}(\Crect^{\perp})$ the full subcategory of $\QFT(\Crect^{\perp})$ spanned by homotopy $\Srect$-constant field theories that satisfy the $\Crect_{\diamond}$-local-to-global principle.
\end{notat}

\begin{paragraph}{Structural approach:} Our goal is to introduce the $\Crect_{\diamond}$-local-to-global principle into the model structure $\Qft_{\h\Srect}(\Crect^{\perp})$ by means of a cellularization as in the preceding section. In this situation, we work with $\mathfrak{L}_{\Srect}\Op_{\Crect}^{\perp}$ as the operad that governs the underlying algebraic structure. 

We want to make use again of the results of Section \ref{sect_ExtensionModelStructure}, so we must select a suitable map of operads $\OpB\to\OpN$. It is quite natural to consider the inclusion morphism  $\upiota_{\Srect}\colon (\mathfrak{L}_{\Srect}\Op_{\Crect}^{\perp})_{\diamond}\to\mathfrak{L}_{\Srect}\Op_{\Crect}^{\perp}$, where
$(\mathfrak{L}_{\Srect}\Op_{\Crect}^{\perp})_{\diamond}$ is the full suboperad of $\mathfrak{L}_{\Srect}\Op_{\Crect}^{\perp}$ spanned by the colors $\ob\Crect_{\diamond}$ (recall that $\mathfrak{L}_{\Srect}\Op_{\Crect}^{\perp}$ can be chosen to have colors $\ob\Crect$). Then, an application of Theorem \ref{thm_ExistenceOfExtensionModelAndProperties} yields:

\begin{thm}\label{thm_MixingModelWithStructure} Assume that $\Crect_{\diamond}^{\perp}\hookrightarrow\Crect^{\perp}$ is a full orthogonal subcategory and $\Srect$ a set of morphisms in $\Crect$. Then, there exists a Quillen model structure $\Qft_{\h\Srect}^{\Crect_{\diamond}}(\Crect^{\perp})$ on the category  $\QFT_{\h\Srect}(\Crect^{\perp})$ that satisfies:
	\begin{itemize}
		\item the weak equivalences are the maps that are equivalences when evaluated on spacetime regions within $\Crect_{\diamond}$;
	
		\item the cofibrant objects are the cofibrant objects in $\Qft_{\h\Srect}(\Crect^{\perp})$ which are  $(\mathfrak{L}_{\Srect}\Op_{\Crect}^{\perp})_{\diamond}$-colocal (Definition \ref{defn:ColocalAlgebra});
		
		\item the fibrations are the projective fibrations.
	\end{itemize}
\end{thm}
\end{paragraph}

\begin{paragraph}{Property-based approach:} Now, the underlying algebraic structure is parametrized by the operad $\Op_{\Crect}^{\perp}$, i.e. the underlying category will be $\QFT(\Crect^{\perp})\simeq \Op_{\Crect}^{\perp}\text{-}\Alg$ despite the previous choice  $\QFT_{\h\Srect}(\Crect^{\perp})\simeq \mathfrak{L}_{\Srect}\Op_{\Crect}^{\perp}\text{-}\Alg$. In order to present the homotopy theory of homotopy $\Srect$-constant field theories that satisfy the local-to-global principle, we need to perform two Bousfield localizations of the model structure $\Qft(\Crect^{\perp})$; each one deals with one axiom.
	
	We perform the cellularization described in Theorem \ref{thm_LocalToGlobalModel} to get  $\Qft^{\Crect_{\diamond}}(\Crect^{\perp})$. Due to Propositions \ref{prop_ExtensionModelCofibrantlyGenerated} and  \ref{prop_ExtensionModelIsLeftProper}, the model structure  $\Qft^{\Crect_{\diamond}}(\Crect^{\perp})$ admits the  localization at the set of maps $\mathcal{S}$ that appears in Proposition \ref{prop_WeakTimeSliceModel}. Let us denote this model structure by $\mathsf{L}_{\mathcal{S}}\Qft^{\Crect_{\diamond}}(\Crect^{\perp})$. 
	
	The fundamental result is the recognition of the bifibrant objects in this model category.
	
	\begin{thm}\label{thm_MixingModelWithProperties} The bifibrant objects in $\mathsf{L}_{\mathcal{S}}\Qft^{\Crect_{\diamond}}(\Crect^{\perp})$ are those bifibrant objects in $\Qft(\Crect^{\perp})$ which satisfy the homotopy $\Srect$-time-slice axiom and the $\Crect_{\diamond}$-local-to-global principle. 
	\end{thm}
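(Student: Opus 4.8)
The plan is to reduce the statement to a single comparison between two notions of $\mathcal{S}$-locality and then apply Proposition \ref{prop_CharacterizationOfTIMESLICELOCALObjects} over the subcategory $\Crect_{\diamond}^{\perp}$. First I would use two standard features of a left Bousfield localization: it does not change the cofibrations, and its fibrant objects are exactly the $\mathcal{S}$-local objects that were fibrant before localizing. Applying this to $\Qft^{\Crect_{\diamond}}(\Crect^{\perp})$ (whose left Bousfield localization at $\mathcal{S}$ exists by Propositions \ref{prop_ExtensionModelCofibrantlyGenerated} and \ref{prop_ExtensionModelIsLeftProper}), and using that over $\V=\Ch(\mathbb{k})$ every algebraic quantum field theory is projectively fibrant — hence fibrant in $\Qft^{\Crect_{\diamond}}(\Crect^{\perp})$, since the two structures share their fibrations by Theorem \ref{thm_LocalToGlobalModel}, and a fortiori fibrant in $\Qft(\Crect^{\perp})$, so that a bifibrant object of $\Qft(\Crect^{\perp})$ is simply a projectively cofibrant one — together with the description of cofibrant objects in Theorem \ref{thm_LocalToGlobalModel}, a bifibrant object of $\mathsf{L}_{\mathcal{S}}\Qft^{\Crect_{\diamond}}(\Crect^{\perp})$ is precisely a projectively cofibrant $\Aalg$ which (i) satisfies the $\Crect_{\diamond}$-local-to-global principle and (ii) is $\mathcal{S}$-local in $\Qft^{\Crect_{\diamond}}(\Crect^{\perp})$. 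So the theorem amounts to showing that, for such $\Aalg$, condition (ii) is equivalent to the weak $\Srect$-time-slice axiom.

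Next I would identify the derived mapping spaces into $\Aalg$ computed in the extension model. Since the extension model is a cellularization of the projective one (Proposition \ref{prop_ExtensionModelCofibrantlyGenerated}), it has the same fibrations — hence the same trivial cofibrations — and fewer cofibrations, so the identity functor $\Qft^{\Crect_{\diamond}}(\Crect^{\perp})\to\Qft(\Crect^{\perp})$ is left Quillen; by Theorem \ref{thm_ExistenceOfExtensionModelAndProperties} its left derived functor sends an object $A$ to its cofibrant replacement in the extension model, which is the $\Crect_{\diamond}$-colocalization $\Qrep A=\upiota_{\sharp}\Q\upiota^*A\simeq\mathbb{L}\upiota_{\sharp}\upiota^*A$. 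Because $\Aalg$ is fibrant everywhere, the derived adjunction of this identity Quillen pair, followed by the derived adjunction $\upiota_{\sharp}\dashv\upiota^*$ between $\Qft(\Crect_{\diamond}^{\perp})$ and $\Qft(\Crect^{\perp})$, gives, naturally in $A$,
$$\mathbb{R}\Map_{\Qft^{\Crect_{\diamond}}(\Crect^{\perp})}(A,\Aalg)\;\simeq\;\mathbb{R}\Map_{\Qft(\Crect^{\perp})}(\mathbb{L}\upiota_{\sharp}\upiota^*A,\Aalg)\;\simeq\;\mathbb{R}\Map_{\Qft(\Crect_{\diamond}^{\perp})}(\upiota^*A,\upiota^*\Aalg).$$
Testing this against the maps in $\mathcal{S}$ shows that $\Aalg$ is $\mathcal{S}$-local in $\Qft^{\Crect_{\diamond}}(\Crect^{\perp})$ if and only if $\upiota^*\Aalg$ is $\upiota^*\mathcal{S}$-local in $\Qft(\Crect_{\diamond}^{\perp})$, where $\upiota^*\mathcal{S}=\{\upiota^*f:f\in\mathcal{S}\}$.

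To finish, I would unwind the construction of the representing set $\mathcal{S}$ from \cite{carmona_localization_2021} (see also \cite[Section 6]{carmona_factorization_2021}) to see that its restriction $\upiota^*\mathcal{S}$ is again a representing set, now for $\Srect_{\diamond}=\Srect\cap\Crect_{\diamond}$ inside $\Crect_{\diamond}$; under the standing Hypothesis \ref{hyp_ExtremelyGoodLocalizingSet} one has $\Srect_{\diamond}=\Srect$. Then Proposition \ref{prop_CharacterizationOfTIMESLICELOCALObjects}, applied to $\Crect_{\diamond}^{\perp}$, yields that $\upiota^*\Aalg$ is $\upiota^*\mathcal{S}$-local exactly when it is weakly $\Srect$-constant, i.e. when $\Aalg(\upf)$ is a weak equivalence for every $\upf\in\Srect$; since all such $\upf$ lie in $\Crect_{\diamond}$, this is precisely the weak $\Srect$-time-slice axiom for $\Aalg$. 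Combined with the reduction of the first paragraph, this identifies the bifibrant objects as claimed.

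The main obstacle I expect is in the middle step: one must be careful that the cofibrant replacement in the extension model really is the operadic left Kan extension of the restriction — this is exactly the content of Theorem \ref{thm_ExistenceOfExtensionModelAndProperties}, so it is available — and, more delicately, that restricting the representing set $\mathcal{S}$ along $\upiota^*$ still represents the time-slice maps of $\Crect_{\diamond}^{\perp}$; this is the point where the explicit shape of $\mathcal{S}$ and Hypothesis \ref{hyp_ExtremelyGoodLocalizingSet} enter, and it is what forces $\Crect_{\diamond}$ to contain $\Srect$. A more conceptual, equivalent route is to observe that one is commuting a left and a right Bousfield localization whose defining classes are "independent" — the $\mathcal{S}$-local equivalences versus the $\Crect_{\diamond}$-colocal equivalences — so that the order in which they are performed is irrelevant; verifying this independence, however, reduces to the same mapping-space computation above.
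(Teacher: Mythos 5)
Your proof is correct and its skeleton — cellularization preserves fibrations, left Bousfield localization preserves cofibrations, so bifibrancy splits into the cofibrant condition of Theorem \ref{thm_LocalToGlobalModel} plus an $\mathcal{S}$-locality condition — matches the paper's, which simply cites Proposition \ref{prop_WeakTimeSliceModel} together with Theorem \ref{thm_LocalToGlobalModel}. What you add is an explicit handling of the point the paper's two-sentence proof glosses over: the fibrant objects of $\mathsf{L}_{\mathcal{S}}\Qft^{\Crect_{\diamond}}(\Crect^{\perp})$ are the $\mathcal{S}$-local objects \emph{of the extension model}, whereas Proposition \ref{prop_CharacterizationOfTIMESLICELOCALObjects}, and hence Proposition \ref{prop_WeakTimeSliceModel}, characterize weakly $\Srect$-constant theories as the $\mathcal{S}$-local objects \emph{of the projective model}. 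Since the extension model has strictly fewer cofibrations, these two locality conditions are not tautologically equal, and your mapping-space computation
$$\mathbb{R}\Map_{\Qft^{\Crect_{\diamond}}(\Crect^{\perp})}(A,\Aalg)\;\simeq\;\mathbb{R}\Map_{\Qft(\Crect_{\diamond}^{\perp})}(\upiota^*A,\upiota^*\Aalg)$$
supplies the bridge and makes visible where Hypothesis \ref{hyp_ExtremelyGoodLocalizingSet} intervenes. That is a genuine gain in clarity over the paper's terse citation. You could, however, land the same conclusion a bit more economically: under Hypothesis \ref{hyp_ExtremelyGoodLocalizingSet} the sources and targets of the maps in $\mathcal{S}$ are free $\Op_{\Crect}^{\perp}$-algebras on $\Crect_{\diamond}$-colored generators, hence lie in the image of $\upiota_{\sharp}$ applied to cofibrant $\Op_{\Crect_{\diamond}}^{\perp}$-algebras; they are therefore cofibrant in the extension model as well as projectively cofibrant, so derived mapping spaces out of them agree in the two model structures, the two notions of $\mathcal{S}$-locality coincide, and Proposition \ref{prop_CharacterizationOfTIMESLICELOCALObjects} applies directly over $\Crect^{\perp}$, which is what the paper implicitly takes for granted. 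Either way, the step you flag as the ``main obstacle'' — that restricting $\mathcal{S}$ along $\upiota^*$ still represents the time-slice maps — is indeed where Hypothesis \ref{hyp_ExtremelyGoodLocalizingSet} is essential, and you are right to single it out even though the theorem's statement does not mention it explicitly.
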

	\begin{proof}
	The cellularization does not change the class of fibrations, and hence of fibrant objects, whereas the localization does not change the class of cofibrations. Thus, the result follows from Proposition \ref{prop_WeakTimeSliceModel} and Theorem \ref{thm_LocalToGlobalModel}.
	\end{proof}
	\begin{cor} The model structure  
	$\mathsf{L}_{\mathcal{S}}\Qft^{\Crect_{\diamond}}(\Crect^{\perp})$ presents the homotopy theory of algebraic field theories which satisfy the homotopy $\Srect$-time-slice axiom and the $\Crect_{\diamond}$-local-to-global principle, i.e.\;the homotopy theory of  $\QFT_{\h\Srect}^{\Crect_{\diamond}}(\Crect^{\perp})$.
	\end{cor}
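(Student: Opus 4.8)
The plan is to mimic the proof of Theorem \ref{thm_LocalToGlobalModel} and the homotopy-categorical argument used after Theorem \ref{thm_ComparisonWeakTimeSliceModels}, the only new ingredient being the description of bifibrant objects supplied by Theorem \ref{thm_MixingModelWithProperties}. Concretely, I would establish the chain of equivalences of categories
$$
\Ho\big(\mathsf{L}_{\mathcal{S}}\Qft^{\Crect_{\diamond}}(\Crect^{\perp})\big)\;\simeq\;\Ho\big(\mathsf{L}_{\mathcal{S}}\Qft^{\Crect_{\diamond}}(\Crect^{\perp})_{\textup{bifibrant}}\big)\;\simeq\;\Ho\,\QFT_{\textup{w}\Srect}^{\Crect_{\diamond}}(\Crect^{\perp}).
$$
The first equivalence is the standard identification of the homotopy category of a model category with that of its bifibrant objects, so nothing is needed there.

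For the second equivalence I would first pin down the weak equivalences on the bifibrant objects. An object bifibrant in $\mathsf{L}_{\mathcal{S}}\Qft^{\Crect_{\diamond}}(\Crect^{\perp})$ is cofibrant in the cellularization $\Qft^{\Crect_{\diamond}}(\Crect^{\perp})$ (left Bousfield localization does not change cofibrations) and fibrant in $\mathsf{L}_{\mathcal{S}}$, i.e. $\mathcal{S}$-local; between such objects the $\mathcal{S}$-local equivalences agree with the weak equivalences of $\Qft^{\Crect_{\diamond}}(\Crect^{\perp})$, and, both being $\Op_{\Crect_{\diamond}}^{\perp}$-colocal and projectively cofibrant, those agree with the projective (pointwise) equivalences: a map of $\Op_{\Crect_{\diamond}}^{\perp}$-colocal objects which is an equivalence on $\Crect_{\diamond}$ is automatically a pointwise equivalence, since $\mathbb{L}\upiota_{\sharp}\upiota^{*}$ is homotopical and the derived counit is an equivalence on such objects (2-out-of-3 on the naturality square, exactly as in Theorem \ref{thm_ExistenceOfExtensionModelAndProperties} and Lemma \ref{lem_ColocalEquivalences}). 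So, by Theorem \ref{thm_MixingModelWithProperties}, $\mathsf{L}_{\mathcal{S}}\Qft^{\Crect_{\diamond}}(\Crect^{\perp})_{\textup{bifibrant}}$ with its weak equivalences is the full subcategory of $\QFT(\Crect^{\perp})$ on the objects that are bifibrant in $\Qft(\Crect^{\perp})$ and satisfy both the weak $\Srect$-time-slice axiom and the $\Crect_{\diamond}$-local-to-global principle, equipped with pointwise quasi-isomorphisms.

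It then remains to compare this full subcategory with $\QFT_{\textup{w}\Srect}^{\Crect_{\diamond}}(\Crect^{\perp})$. Both defining conditions are invariant under pointwise quasi-isomorphism in $\QFT(\Crect^{\perp})$: weak $\Srect$-constancy because projective equivalences are detected pointwise, and the $\Crect_{\diamond}$-local-to-global principle because it is phrased through the derived functor $\mathbb{L}\upiota_{\sharp}\upiota^{*}$. Hence $\QFT_{\textup{w}\Srect}^{\Crect_{\diamond}}(\Crect^{\perp})$ is closed under weak equivalences in $\QFT(\Crect^{\perp})$, and a bifibrant replacement in $\Qft(\Crect^{\perp})$ of any $\Aalg\in\QFT_{\textup{w}\Srect}^{\Crect_{\diamond}}(\Crect^{\perp})$ is again an object of $\QFT_{\textup{w}\Srect}^{\Crect_{\diamond}}(\Crect^{\perp})$, joined to $\Aalg$ by a zig-zag of pointwise quasi-isomorphisms lying in that subcategory. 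The same lemma on localizations of full subcategories closed under weak equivalences that is invoked for the last equivalence in the proof of Theorem \ref{thm_LocalToGlobalModel} then gives that the inclusion of bifibrant-and-adherent objects into $\QFT_{\textup{w}\Srect}^{\Crect_{\diamond}}(\Crect^{\perp})$ induces an equivalence after inverting weak equivalences. Chaining the three equivalences proves the statement.

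The only genuinely delicate point is the bookkeeping of weak equivalences through the two successive Bousfield localizations in the second paragraph; once one knows that on the bifibrant objects of $\mathsf{L}_{\mathcal{S}}\Qft^{\Crect_{\diamond}}(\Crect^{\perp})$ the ambient weak equivalences reduce to pointwise quasi-isomorphisms, everything else is a routine transport along the identifications already recorded in Theorems \ref{thm_MixingModelWithProperties}, \ref{thm_LocalToGlobalModel} and \ref{thm_ExistenceOfExtensionModelAndProperties}.
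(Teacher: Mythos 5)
Your proposal takes essentially the same approach as the paper: the paper's proof is exactly the chain $\Ho\big(\mathsf{L}_{\mathcal{S}}\Qft^{\Crect_{\diamond}}(\Crect^{\perp})\big) \simeq \Ho\big(\mathsf{L}_{\mathcal{S}}\Qft^{\Crect_{\diamond}}(\Crect^{\perp})_{\textup{bifibrant}}\big) \simeq \Ho\QFT_{\textup{w}\Srect}^{\Crect_{\diamond}}(\Crect^{\perp})$, stated in one line and resting on Theorem \ref{thm_MixingModelWithProperties}. You merely spell out the routine verifications the paper leaves implicit (that the ambient weak equivalences reduce to pointwise quasi-isomorphisms on bifibrant objects, that both axioms are invariant under pointwise equivalence, and the bifibrant-replacement argument for the second equivalence), all of which are correct.
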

    \begin{proof} This follows from the equivalences of categories
    	$$
    	\Ho \mathsf{L}_{\mathcal{S}}\Qft^{\Crect_{\diamond}}(\Crect^{\perp})\simeq 	\Ho\big( \mathsf{L}_{\mathcal{S}}\Qft^{\Crect_{\diamond}}(\Crect^{\perp})_{\textup{bifibrant}}\big)\simeq \Ho\QFT_{\h\Srect}^{\Crect_{\diamond}}(\Crect^{\perp}).
    	$$
    \end{proof}

	We finish by comparing the two approaches.
	
	\begin{thm}\label{thm_EquivalenceOfMixingModels} The localization morphism of operads $\ell\colon\Op_{\Crect}^{\perp}\to \mathfrak{L}_{\Srect}\Op_{\Crect}^{\perp}$ induces a Quillen equivalence of model categories
	$$
	\ell_{\sharp}\colon \mathsf{L}_{\mathcal{S}}\Qft^{\Crect_{\diamond}}(\Crect^{\perp})\rightleftarrows\Qft_{\h\Srect}^{\Crect_{\diamond}}(\Crect^{\perp})\colon\ell^*.
	$$
	\end{thm}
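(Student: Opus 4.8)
The plan is to verify that $\ell_{\sharp}\dashv\ell^{*}$ is a Quillen adjunction between the two displayed model structures and that its total left derived functor induces an equivalence of homotopy categories; by the standard recognition criterion this yields the asserted Quillen equivalence. Throughout one keeps Hypothesis \ref{hyp_ExtremelyGoodLocalizingSet} in force and uses the one structural observation that the homotopical localization $\ell\colon\Op_{\Crect}^{\perp}\to\mathfrak{L}_{\Srect}\Op_{\Crect}^{\perp}$ is the identity on colours, so that $\ell^{*}$ is computed pointwise and preserves both projective fibrations and the maps which are equivalences on the spacetime regions lying in $\Crect_{\diamond}$ --- and these are exactly the fibrations and weak equivalences of the extension models $\Qft^{\Crect_{\diamond}}(\Crect^{\perp})$ and $\Qft_{\textup{w}\Srect}^{\Crect_{\diamond}}(\Crect^{\perp})$ of Theorems \ref{thm_LocalToGlobalModel} and \ref{thm_MixingModelWithStructure}.

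First I would establish the Quillen adjunction. By the previous observation, $\ell^{*}\colon\Qft_{\textup{w}\Srect}^{\Crect_{\diamond}}(\Crect^{\perp})\to\Qft^{\Crect_{\diamond}}(\Crect^{\perp})$ is right Quillen between the two extension models (the extension models attached to $\upiota$ and $\upiota_{\Srect}$ in the square \eqref{eqt_DiagramOfLocalizationsOfOperads}). To pass to the left Bousfield localization $\mathsf{L}_{\mathcal{S}}\Qft^{\Crect_{\diamond}}(\Crect^{\perp})$ on the source, by the universal property of left Bousfield localizations \cite[Theorem 3.3.18]{hirschhorn_model_2003} it suffices to see that $\mathbb{L}\ell_{\sharp}$ sends each map of $\mathcal{S}$ to a weak equivalence of $\Qft_{\textup{w}\Srect}^{\Crect_{\diamond}}(\Crect^{\perp})$. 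Here I would take a cofibrant approximation of such a map in $\Qft^{\Crect_{\diamond}}(\Crect^{\perp})$; by Theorem \ref{thm_ExistenceOfExtensionModelAndProperties} its endpoints are projective-cofibrant, so Ken Brown's lemma --- applied to $\ell_{\sharp}$ as a left Quillen functor out of $\Qft^{\Crect_{\diamond}}(\Crect^{\perp})$ and out of $\Qft(\Crect^{\perp})$ --- shows that its image under $\ell_{\sharp}$ is a $\Crect_{\diamond}$-pointwise equivalence as soon as the value of $\mathbb{L}\ell_{\sharp}$ computed in the projective models is a pointwise equivalence; the latter is exactly the descent already obtained in Theorem \ref{thm_ComparisonWeakTimeSliceModels}. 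Hence the criterion is met.

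Next I would identify the derived functor with an equivalence. Passing to bifibrant objects, Theorem \ref{thm_MixingModelWithProperties} and its corollary identify $\Ho\mathsf{L}_{\mathcal{S}}\Qft^{\Crect_{\diamond}}(\Crect^{\perp})$ with the full subcategory of $\Ho\QFT(\Crect^{\perp})$ on field theories satisfying both the weak $\Srect$-time-slice axiom and the $\Crect_{\diamond}$-local-to-global principle, while Theorem \ref{thm_MixingModelWithStructure} identifies $\Ho\Qft_{\textup{w}\Srect}^{\Crect_{\diamond}}(\Crect^{\perp})$ with the full subcategory of $\Ho\QFT_{\textup{w}\Srect}(\Crect^{\perp})$ on theories satisfying the $\upphi$-local-to-global principle appearing there. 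On bifibrant objects $\ell_{\sharp}$ needs no cofibrant replacement and acts as the plain functor $\QFT(\Crect^{\perp})\to\QFT_{\textup{w}\Srect}(\Crect^{\perp})$, so under these identifications $\mathbb{L}\ell_{\sharp}$ is the restriction of the derived equivalence of Theorem \ref{thm_ComparisonWeakTimeSliceModels} onto the reflective subcategory of weakly $\Srect$-constant theories; it then remains to see that this equivalence carries the first subcategory onto the second. For weakly $\Srect$-constant $\Aalg$ one has $\upphi=\ell\cdot\upiota=\upiota_{\Srect}\cdot\ell_{\diamond}$ and $\ell^{*}\mathbb{L}\ell_{\sharp}\Aalg\simeq\Aalg$, and since $\Srect\subseteq\Crect_{\diamond}$ the object $\mathbb{L}\upiota_{\sharp}\upiota^{*}\Aalg$ is again weakly $\Srect$-constant (its restriction to $\Crect_{\diamond}$ agrees with $\upiota^{*}\Aalg$ by Hypothesis \ref{hyp_OperadicLanHomotopicallyFF} for the full inclusion $\upiota$); therefore the canonical map $\mathbb{L}\upphi_{\sharp}\upphi^{*}(\mathbb{L}\ell_{\sharp}\Aalg)\to\mathbb{L}\ell_{\sharp}\Aalg$ is the image under the reflector $\mathbb{L}\ell_{\sharp}$ of $\mathbb{L}\upiota_{\sharp}\upiota^{*}\Aalg\to\Aalg$, a map of objects inside the reflective subcategory, so the two are equivalences simultaneously. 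Proposition \ref{prop_LocaltoGlobalPrincipleInMixedModelWithStructures} then rewrites the condition on $\mathbb{L}\ell_{\sharp}\Aalg$ as the $\upphi$- (equivalently $\upiota_{\Srect}$-) local-to-global principle of Theorem \ref{thm_MixingModelWithStructure}, completing the matching of subcategories.

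The step I expect to be the main obstacle is precisely this last matching of the two local-to-global conditions across the time-slice localization: one is phrased for $\Op_{\Crect}^{\perp}$-algebras via $\upiota$ and the other for $\mathfrak{L}_{\Srect}\Op_{\Crect}^{\perp}$-algebras via $\upphi$, and controlling the derived units and counits through the square \eqref{eqt_DiagramOfLocalizationsOfOperads} is exactly where Hypothesis \ref{hyp_ExtremelyGoodLocalizingSet} is essential --- without $\Srect\subseteq\Crect_{\diamond}$ it is not clear that $\mathbb{L}\upiota_{\sharp}\upiota^{*}$ preserves weak $\Srect$-constancy, nor even that $\upiota_{\Srect}$ satisfies Hypothesis \ref{hyp_OperadicLanHomotopicallyFF} so that $\Qft_{\textup{w}\Srect}^{\Crect_{\diamond}}(\Crect^{\perp})$ exists (cf.~Remark \ref{rem_ComplicationsWhenMixingModelsWithStructure}). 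A secondary but genuine technicality is the comparison, used in the second step, between the cofibrant-replacement functors of the extension and the projective model when computing $\mathbb{L}\ell_{\sharp}$, which is why Ken Brown's lemma together with the description of cofibrant objects in Theorem \ref{thm_ExistenceOfExtensionModelAndProperties} is needed there.
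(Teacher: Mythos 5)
Your overall plan is correct, but your proof of the derived equivalence takes a genuinely different route from the paper's, and your first step (descending the Quillen pair through the left Bousfield localization) is technically shakier than you acknowledge.

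For the construction of the Quillen pair, the paper first descends $\ell_\sharp\dashv\ell^*$ to the cellularizations by observing, via Lemma \ref{lem_ColocalEquivalences}, that $\ell^*$ preserves the extension-model weak equivalences; it then handles $\mathcal{S}$ by citing the argument of Theorem \ref{thm_ComparisonWeakTimeSliceModels}. You reach the same conclusions by the pointwise description of $\ell^*$ (fine) and by a Ken Brown argument comparing $\mathbb{L}\ell_\sharp$ computed with a projective cofibrant replacement of $\sigma\in\mathcal{S}$ versus an extension-cofibrant replacement. This needs more care than your write-up suggests: an extension-cofibrant replacement $\tilde\sigma\to\sigma$ is only a $\Crect_\diamond$-pointwise equivalence, not a projective weak equivalence, so $\tilde\sigma$ is \emph{not} a projective cofibrant replacement of $\sigma$, and Theorem \ref{thm_ComparisonWeakTimeSliceModels} does not apply directly. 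The cleanest way to close this gap is the route the paper takes implicitly: transport $\mathcal{S}$ across the Quillen equivalence $\Qft(\Crect_\diamond^\perp)\simeq\Qft^{\Crect_\diamond}(\Crect^\perp)$ using Hypothesis \ref{hyp_ExtremelyGoodLocalizingSet} to replace $\mathcal{S}$ by $\mathcal{S}_\diamond$, and then invoke Theorem \ref{thm_ComparisonWeakTimeSliceModels} over $\Crect_\diamond$.

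For the equivalence of homotopy categories, the paper gives a four-term chain
\[
\Ho\mathsf{L}_{\mathcal{S}}\Qft^{\Crect_\diamond}(\Crect^\perp)\simeq\Ho\mathsf{L}_{\mathcal{S}_\diamond}\Qft^{\Crect_\diamond}(\Crect^\perp)\simeq\Ho\mathsf{L}_{\mathcal{S}_\diamond}\Qft(\Crect_\diamond^\perp)\simeq\Ho\Qft_{\textup{w}\Srect_\diamond}(\Crect_\diamond^\perp)\simeq\Ho\Qft_{\textup{w}\Srect}^{\Crect_\diamond}(\Crect^\perp),
\]
each step re-using an already-established Quillen equivalence or a localization compatibility. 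You instead identify the two homotopy categories with explicit full subcategories of $\Ho\Qft(\Crect^\perp)$ and $\Ho\Qft_{\textup{w}\Srect}(\Crect^\perp)$ cut out by the two axioms, and show the reflector $\mathbb{L}\ell_\sharp$ of Theorem \ref{thm_ComparisonWeakTimeSliceModels} carries one subcategory onto the other by matching the $\Crect_\diamond$-local-to-global principle (via $\upiota$) with the $\upphi$-local-to-global principle using $\upphi=\ell\upiota=\upiota_{\Srect}\ell_\diamond$, Hypothesis \ref{hyp_ExtremelyGoodLocalizingSet}, and Proposition \ref{prop_LocaltoGlobalPrincipleInMixedModelWithStructures}. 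This is sound (the key observation that $\mathbb{L}\upiota_\sharp\upiota^*$ preserves weak $\Srect$-constancy because $\Srect\subseteq\Crect_\diamond$ is exactly right), and it is arguably more illuminating, since it makes explicit how the reflector transports the local-to-global condition across the time-slice localization. The paper's chain is technically smoother because it never has to compare the two flavours of local-to-global directly, but it is also less conceptual: your version explains \emph{why} the match holds rather than routing around it.
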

\begin{proof} By definition of $\Qft_{\h\Srect}(\Crect^{\perp})$, we already have a Quillen pair
	 $
	 \Qft(\Crect^{\perp})\rightleftarrows \Qft_{\h\Srect}(\Crect^{\perp})
	 $
	  induced by $\ell$. The idea is to prove that this pair descends to the cellularizations and the localization.

By Lemma \ref{lem_ColocalEquivalences}, the functor $\ell^*\colon \Qft_{\h\Srect}(\Crect^{\perp})\to \Qft(\Crect^{\perp})$ preserves the class of equivalences in Theorem \ref{thm_ExistenceOfExtensionModelAndProperties}. Hence, we have the corresponding Quillen pair between cellularizations,  $\Qft^{\Crect_{\diamond}}(\Crect^{\perp})\rightleftarrows\Qft_{\h\Srect}^{\Crect_{\diamond}}(\Crect^{\perp})$.
	
	Taking into account the universal property of the localization, it remains to check that the set $\mathcal{S}$ is sent to equivalences in $\Qft_{\h\Srect}^{\Crect_{\diamond}}(\Crect^{\perp})$. The same argument given in Theorem \ref{thm_ComparisonWeakTimeSliceModels} shows that this is the case. 
	
	Once constructed the Quillen pair of the statement, proving that it is a Quillen equivalence boils down to showing that it induces an equivalence of homotopy categories. This follows from the fact following facts:
	\begin{itemize}
		\item On the one hand,  $\Ho \mathsf{L}_{\mathcal{S}}\Qft^{\Crect_{\diamond}}(\Crect^{\perp})$ is equivalent to the homotopy category of its fibrant objects with equivalences between them being the class of maps
		$$
		\begin{Bmatrix}
			\upf\colon \Aalg\to \Balg \text{ such that  }\upf_{\Urect_{\diamond}}\colon\Aalg(\Urect_{\diamond})\overset{\sim}{\longrightarrow}\Balg(\Urect_{\diamond})\text{ for all }\Urect_{\diamond}\in\Crect_{\diamond}
		\end{Bmatrix}.
		$$ 
		To deduce such a claim just note that the class of $\mathcal{S}$-equivalences between fibrant objects in $\mathsf{L}_{\mathcal{S}}\Qft(\Crect^{\perp})$ is that of projective equivalences and that the equivalences in $\Qft^{\Crect_{\diamond}}(\Crect^{\perp})$ were identified in Lemma \ref{lem_ColocalEquivalences} as the previous class. Recall that the fibrant objects in $\Ho \mathsf{L}_{\mathcal{S}}\Qft^{\Crect_{\diamond}}(\Crect^{\perp})$ are the AQFTs that satisfy the homotopy $\Srect$-time slice axiom.
		
		\item 
		On the other hand, the homotopy category $\Ho\Qft_{\h\Srect}^{\Crect_{\diamond}}(\Crect^{\perp})$ is by construction the homotopy category of AQFTs satisfying the homotopy $\Srect$-time slice axiom seen as $\mathfrak{L}_{\Srect}\Op_{\Crect}^{\perp}$-algebras with equivalences being those maps of algebras that are equivalences when evaluated on every spacetime region $\Urect_{\diamond}\in\Crect_{\diamond}$.
	\end{itemize}
\end{proof}
\begin{cor} The model structure  $\Qft_{\h\Srect}^{\Crect_{\diamond}}(\Crect^{\perp})$ presents the homotopy theory of $\QFT_{\h\Srect}^{\Crect_{\diamond}}(\Crect^{\perp})$, i.e. algebraic field theories  which satisfy the homotopy $\Srect$-time-slice axiom and the $\Crect_{\diamond}$-local-to-global principle.
\end{cor}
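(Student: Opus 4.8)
The plan is to deduce this corollary by combining the Quillen equivalence of Theorem~\ref{thm_EquivalenceOfMixingModels} with the corollary following Theorem~\ref{thm_MixingModelWithProperties}. That corollary already identifies $\Ho\big(\mathsf{L}_{\mathcal{S}}\Qft^{\Crect_{\diamond}}(\Crect^{\perp})\big)$ with $\Ho\QFT_{\textup{w}\Srect}^{\Crect_{\diamond}}(\Crect^{\perp})$, while Theorem~\ref{thm_EquivalenceOfMixingModels} provides the Quillen equivalence $\ell_{\sharp}\colon \mathsf{L}_{\mathcal{S}}\Qft^{\Crect_{\diamond}}(\Crect^{\perp})\rightleftarrows\Qft_{\textup{w}\Srect}^{\Crect_{\diamond}}(\Crect^{\perp})\colon\ell^*$; composing the two induced equivalences of homotopy categories yields $\Ho\Qft_{\textup{w}\Srect}^{\Crect_{\diamond}}(\Crect^{\perp})\simeq\Ho\QFT_{\textup{w}\Srect}^{\Crect_{\diamond}}(\Crect^{\perp})$, which is exactly the assertion. (In fact, the chain of equivalences displayed in the proof of Theorem~\ref{thm_EquivalenceOfMixingModels} already contains both halves of this argument.)

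As an alternative, more self-contained route working directly with the model $\Qft_{\textup{w}\Srect}^{\Crect_{\diamond}}(\Crect^{\perp})$, I would argue through its cofibrant objects. By the last item of Theorem~\ref{thm_MixingModelWithStructure}, a cofibrant object of $\Qft_{\textup{w}\Srect}^{\Crect_{\diamond}}(\Crect^{\perp})$ is precisely a cofibrant object of $\Qft_{\textup{w}\Srect}(\Crect^{\perp})$ for which the canonical map $\mathbb{L}\upphi_{\sharp}\upphi^*\Aalg\to\Aalg$ is an equivalence, with $\upphi$ the diagonal map of the square~\eqref{eqt_DiagramOfLocalizationsOfOperads}. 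Proposition~\ref{prop_LocaltoGlobalPrincipleInMixedModelWithStructures} rewrites this as $\mathbb{L}\upiota_{\Srect\sharp}\upiota_{\Srect}^*\Aalg\to\Aalg$ being an equivalence; transporting this condition along the Quillen equivalences $\Qft_{\textup{w}\Srect}(\Crect^{\perp})\simeq\Qft(\Crect^{\perp})$ and $\Qft_{\textup{w}\Srect_{\diamond}}(\Crect_{\diamond}^{\perp})\simeq\Qft(\Crect_{\diamond}^{\perp})$ of Theorem~\ref{thm_ComparisonWeakTimeSliceModels} (using Hypothesis~\ref{hyp_ExtremelyGoodLocalizingSet}, so that $\Srect_{\diamond}=\Srect$, together with the commutativity of~\eqref{eqt_DiagramOfLocalizationsOfOperads}) shows it is equivalent to the plain $\Crect_{\diamond}$-local-to-global axiom of Definition~\ref{defn_FredenhagenLocalToGlobalPrinciple}. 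Hence the cofibrant objects of $\Qft_{\textup{w}\Srect}^{\Crect_{\diamond}}(\Crect^{\perp})$ are exactly the cofibrant field theories lying in $\QFT_{\textup{w}\Srect}^{\Crect_{\diamond}}(\Crect^{\perp})$.

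Granting this identification, the corollary follows from the usual two-step argument for a cellular model structure, as in the corollary after Theorem~\ref{thm_MixingModelWithProperties}:
\[
\Ho\Qft_{\textup{w}\Srect}^{\Crect_{\diamond}}(\Crect^{\perp})\;\simeq\;\Ho\big(\Qft_{\textup{w}\Srect}^{\Crect_{\diamond}}(\Crect^{\perp})_{\textup{cofibrant}}\big)\;\simeq\;\Ho\QFT_{\textup{w}\Srect}^{\Crect_{\diamond}}(\Crect^{\perp}),
\]
where the first equivalence holds because cofibrant replacement in the cellularization is a weak equivalence (and preserves the homotopy-invariant local-to-global condition), and the second because on such cofibrant objects the weak equivalences of $\Qft_{\textup{w}\Srect}^{\Crect_{\diamond}}(\Crect^{\perp})$, which are detected on the spacetime regions of $\Crect_{\diamond}$, agree with the pointwise equivalences on all of $\Crect$, i.e. with the equivalences of $\QFT_{\textup{w}\Srect}^{\Crect_{\diamond}}(\Crect^{\perp})$.

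I do not expect a genuine obstacle here; the only point requiring care is the bookkeeping of which of the three operads $\Op_{\Crect}^{\perp}$, $\mathfrak{L}_{\Srect}\Op_{\Crect}^{\perp}$, $\mathfrak{L}_{\Srect_{\diamond}}\Op_{\Crect_{\diamond}}^{\perp}$ governs each model appearing above, and checking that the local-to-global condition built from $\upphi$ really cuts out $\QFT_{\textup{w}\Srect}^{\Crect_{\diamond}}(\Crect^{\perp})$. That is exactly where Proposition~\ref{prop_LocaltoGlobalPrincipleInMixedModelWithStructures} together with Hypothesis~\ref{hyp_ExtremelyGoodLocalizingSet} carry the load, so the proof reduces to invoking results already established in the paper.
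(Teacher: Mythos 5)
Your first route is exactly what the paper intends: the corollary is stated without its own proof because it follows immediately from Theorem~\ref{thm_EquivalenceOfMixingModels} together with the corollary after Theorem~\ref{thm_MixingModelWithProperties} (indeed the chain of equivalences inside the proof of Theorem~\ref{thm_EquivalenceOfMixingModels} already records the composite). The alternative cofibrant-object argument you sketch is also sound and essentially reproduces, in the mixed setting, the argument used for Theorem~\ref{thm_LocalToGlobalModel}, so there is no gap either way.
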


\paragraph{AQFTs over $\Crect_{\diamond}$ satisfying homotopy time-slice axiom.} As commented in Remark \ref{rem:LocaltoGlobalObjectsAreEasy}, the homotopy theory associated to $\Qft^{\Crect_{\diamond}}(\Crect^{\perp})$ is easy in the sense that is coincides with the homotopy theory associated to $\Qft(\Crect_{\diamond}^{\perp})$. One natural question is if this kind of result extends to the homotopy theory associated to $\Qft^{\Crect_{\diamond}}_{\h\Srect}(\Crect^{\perp})$, i.e.\ if it is equivalent to the homotopy theory of AQFTs over $\Crect_{\diamond}^{\perp}$ satisfying some conditions (as time-slice like axioms). We now address this question, giving a general result under a quite restrictive hypothesis, we point towards technicalities arising for the general case and we also discuss two additional examples.

First, we observe that in the \emph{structural approach} of this subsection a choice was made: we applied our results in Section \ref{sect_ExtensionModelStructure} to the inclusion morphism $\upiota_{\Srect}\colon (\mathfrak{L}_{\Srect}\Op_{\Crect}^{\perp})_{\diamond}\to \mathfrak{L}_{\Srect}\Op_{\Crect}^{\perp}$. Revisiting that discussion, one can see that the operad $(\mathfrak{L}_{\Srect}\Op_{\Crect}^{\perp})_{\diamond}$ is too abstract, in the sense that we do not have any control of how it looks like, and that obstructed us to connect the original  $\Crect_{\diamond}$-local-to-global property with being $(\mathfrak{L}_{\Srect}\Op_{\Crect}^{\perp})_{\diamond}$-colocal.

If one instead picks the morphism of operads 
$
\mathfrak{j}\colon \mathfrak{L}_{\Srect_{\diamond}}\Op_{\Crect_{\diamond}}^{\perp}\rightarrow \mathfrak{L}_{\Srect}\Op_{\Crect}^{\perp},
$
where $\Srect_{\diamond}$ denotes those maps in $\Srect$ between objects in $\Crect_{\diamond}$, it is possible to relate $\mathfrak{L}_{\Srect_{\diamond}}\Op_{\Crect_{\diamond}}^{\perp}$-colocality with the $\Crect_{\diamond}$-local-to-global property. Before proving this claim, note that $\mathfrak{j}$ fits into the following commutative square of operads
\begin{equation}\label{eqt_DiagramOfLocalizationsOfOperads}
	\begin{tikzcd}[ampersand replacement=\&]
		\Op_{\Crect}^{\perp}\ar[r,"\ell"] \& \mathfrak{L}_{\Srect}\Op_{\Crect}^{\perp}\\
		\Op_{\Crect_{\diamond}}^{\perp}\ar[r, "\ell_{\diamond}"'] \ar[u, "\upiota"]\ar[ru, "\upphi" description] \& \mathfrak{L}_{\Srect_{\diamond}}\Op_{\Crect_{\diamond}}^{\perp},\ar[u, "\mathfrak{j}"']
	\end{tikzcd}.
\end{equation}
\begin{prop}\label{prop_LocaltoGlobalPrincipleInMixedModelWithStructures}
	The following conditions are equivalent for $\Aalg\in\QFT_{\h\Srect}(\Crect^{\perp})$:
	\begin{itemize}
		\item
		($\mathfrak{L}_{\Srect_{\diamond}}\Op_{\Crect_{\diamond}}^{\perp}$-colocality) $\mathbb{L}\mathfrak{j}_{\sharp}\mathfrak{j}^*\Aalg\to\Aalg$ is an equivalence.
		\item (modified $\Crect_{\diamond}$-local-to-global property) $\mathbb{L}\upphi_{\sharp}\upphi^*\Aalg\to\Aalg$ is an equivalence.
	\end{itemize}
\end{prop}	
\begin{proof} The commutative square (\ref{eqt_DiagramOfLocalizationsOfOperads}) induces the commutative diagram of adjoint functors
	$$
	\begin{tikzcd}[ampersand replacement=\&]
		\Ho\Qft(\Crect^{\perp}) \ar[r, shift left=.75ex]\ar[r,leftarrow,shift right=.75ex] \& \Ho\Qft_{\h\Srect}(\Crect^{\perp})\\ 
		\Ho\Qft(\Crect_{\diamond}^{\perp}) \ar[r, shift left=.75ex]\ar[r,leftarrow,shift right=.75ex]\ar[u, shift left=.75ex]\ar[u,leftarrow,shift right=.75ex] \& \Ho\Qft_{\h\Srect_{\diamond}}(\Crect^{\perp}_{\diamond}).\ar[u, shift left=.75ex]\ar[u,leftarrow,shift right=.75ex]
	\end{tikzcd}
	$$
	Therefore, with appropriate compositions of adjoints, we obtain the diagram in $\Ho\QFT_{\h\Srect}(\Crect^{\perp})$
	$$
	\begin{tikzcd}[ampersand replacement=\&]
		\& \mathbb{L}\upphi_{\sharp}\upphi^*\Aalg\ar[rrd, "(iii)"] \& \& \\
		\mathbb{L}\mathfrak{j}_{\sharp}\cdot\mathbb{L}\ell_{\diamond\sharp}\ell_{\diamond}^*\cdot\mathfrak{j}^*\Aalg \ar[ru, "\overset{(i)}{\simeq}"]\ar[rd, "\underset{(ii)}{\simeq}"']\& \&  \& \Aalg.\\
		\& \mathbb{L}\mathfrak{j}_{\sharp}\mathfrak{j}^*\Aalg\ar[rru, "(iv)"']
	\end{tikzcd}
	$$
	On the one hand, $(i)$ is an equivalence (iso in the homotopy category) by the commutativity of the above square. On the other hand, $(ii)$ is an equivalence because, by Theorem \ref{thm_ComparisonWeakTimeSliceModels}, the adjunction $\mathbb{L}\ell_{\diamond\sharp}\colon \Ho\QFT(\Crect_{\diamond}^{\perp})\rightleftarrows\Ho\QFT_{\h\Srect}(\Crect_{\diamond}^{\perp})\colon\ell_{\diamond}^*$ is a reflection. Thus, $(iii)$ is an equivalence iff $(iv)$ is so.
\end{proof}
\begin{rem} The modified $\Crect_{\diamond}$-local-to-global principle in Proposition \ref{prop_LocaltoGlobalPrincipleInMixedModelWithStructures} means that AQFTs satisfying this principle can be reconstructed from its restriction to $\Crect_{\diamond}$ in a strong sense.
\end{rem}

However, a technical assumption is required on $\Srect$ in order to make Theorem \ref{thm_ExistenceOfExtensionModelAndProperties} work for $\mathfrak{j}$. Instead of finding general conditions for these purposes, we consider a quite restrictive hypothesis which ensures that everything matches in the best possible way (see the trapezoid in the lower right corner of the diagram in Glossary  \ref{sect_ComparisonFieldTheories}). For more information, see Remark  \ref{rem_ComplicationsWhenMixingModelsWithStructure}.

\begin{hyp}\label{hyp_ExtremelyGoodLocalizingSet}
	The set of maps $\Srect$ belongs completely to the full subcategory $\Crect_{\diamond}$, i.e.\ $\Srect_{\diamond}=\Srect$.
\end{hyp}

\begin{thm}\label{thm_ComparingMixingWithDIAMONDS} Assume that $\Crect_{\diamond}^{\perp}\hookrightarrow\Crect^{\perp}$ is a full orthogonal subcategory and $\Srect$ a set of morphisms in $\Crect$ satisfying Hypothesis \ref{hyp_ExtremelyGoodLocalizingSet}. Then,
	\begin{itemize}
		\item $\Aalg\in \Qft_{\h\Srect}^{\Crect_{\diamond}}(\Crect^{\perp})$ is cofibrant iff it is cofibrant in $\Qft_{\h\Srect}(\Crect^{\perp})$ and it satisfies the modified $\Crect_{\diamond}$-local-to-global property in Proposition \ref{prop_LocaltoGlobalPrincipleInMixedModelWithStructures}.
		
		\item The Quillen adjunction $\Qft_{\h\Srect_{\diamond}}(\Crect^{\perp}_{\diamond})\rightleftarrows \Qft_{\h\Srect}^{\Crect_{\diamond}}(\Crect^{\perp})$ is a Quillen equivalence.
	\end{itemize}
\end{thm}

\begin{proof}
	Hypothesis \ref{hyp_ExtremelyGoodLocalizingSet} implies that the canonical map   $\mathfrak{L}_{\Srect_{\diamond}}\Op_{\Crect_{\diamond}}^{\perp}\to(\mathfrak{L}_{\Srect}\Op_{\Crect}^{\perp})_{\diamond}$ is an equivalence because $(\mathfrak{L}_{\Srect}\Op_{\Crect}^{\perp})_{\diamond}$ satisfies the universal property that characterizes $\mathfrak{L}_{\Srect_{\diamond}}\Op_{\Crect_{\diamond}}^{\perp}$. Therefore, the first point follows directly from Proposition \ref{prop_LocaltoGlobalPrincipleInMixedModelWithStructures}. The second point follows from the chain of equivalences
\begin{align*}
\Ho\Qft_{\h\Srect}^{\Crect_{\diamond}}(\Crect^{\perp})
	& \overset{(i)}{\simeq} \Ho\mathsf{L}_{\mathcal{S}}\Qft^{\Crect_{\diamond}}(\Crect^{\perp})\\
	& \overset{(ii)}{\simeq} \Ho\mathsf{L}_{\mathcal{S}_{\diamond}}\Qft^{\Crect_{\diamond}}(\Crect^{\perp})\\
	& \overset{(iii)}{\simeq} \Ho\mathsf{L}_{\mathcal{S}_{\diamond}}\Qft(\Crect_{\diamond}^{\perp})\\
	& \overset{(iv)}{\simeq} \Ho\Qft_{\h\Srect_{\diamond}}(\Crect_{\diamond}^{\perp}).
\end{align*}
$(i)$ is a particular instance of Theorem \ref{thm_ComparisonWeakTimeSliceModels}; $(ii)$ is ensured by Hypothesis \ref{hyp_ExtremelyGoodLocalizingSet};  $(iii)$ is immediate since we are localizing Quillen equivalent model structures at the same set of maps; $(iv)$ same as $(i)$.
\end{proof}

\begin{rem}\label{rem_ComplicationsWhenMixingModelsWithStructure}
	It is not true that $\mathfrak{j}\colon\mathfrak{L}_{\Srect_{\diamond}}\Op_{\Crect_{\diamond}}^{\perp}\to \mathfrak{L}_{\Srect}\Op_{\Crect}^{\perp}$ satisfies Hypothesis \ref{hyp_OperadicLanHomotopicallyFF} in general, so Theorem \ref{thm_ExistenceOfExtensionModelAndProperties} could not be applicable. Conditions such as 
	$
	\mathfrak{L}_{\Srect_{\diamond}}\Crect_{\diamond}(\Urect,\Vrect)\rightarrow\mathfrak{L}_{\Srect}\Crect(\Urect,\Vrect)
	$
	being an equivalence $\forall \Urect,\,\Vrect\in\Crect_{\diamond}$ (homotopical fully-faithfulness) might be  sufficient for these purposes, although some more work will be needed. Furthermore, even a manageable condition on $\Srect$ that ensures this  seems to be hard to find, so we prefer to avoid this problematic in here. Nevertheless, we note that in practice one may circumvent this problematic as discussed below. 
\end{rem}	

To illustrate Theorem \ref{thm_ComparingMixingWithDIAMONDS}, take $\Crect^{\perp}$ to be the orthogonal category $\mathsf{Loc}_{m}^{\perp}$ of  oriented, time-oriented, globally hyperbolic Lorentzian $ m$-manifolds with morphisms isometric open embeddings preserving (time-)orientation and whose image is causally convex. That is, $\QFT(\mathsf{Loc}_{m}^{\perp})$ is the category of locally covariant $m$-dimensional quantum field theories. Consider the full subcategory  $\mathsf{Loc}_{m,\diamond}\hookrightarrow\mathsf{Loc}_{m}$ spanned by spacetimes diffeomorphic to $\mathbb{R}^m$ and $\Srect_{\diamond}$ to be the set of Cauchy-morphisms among spacetimes in $\mathsf{Loc}_{m,\diamond}$, i.e.\ embeddings  $\upf\colon \Mrect\hookrightarrow \mathsf{N}$ in $\mathsf{Loc}_{m,\diamond}$ such that $\upf(\Mrect)$ contains a Cauchy surface for $\mathsf{N}$. Then, 
\begin{itemize}
	\item the homotopy $\Srect_{\diamond}$-time-slice axiom asserts that a locally covariant AQFT sends Cauchy-morphisms between objects in $\mathsf{Loc}_{m,\diamond}$ to quasi-isomorphisms (note that nothing is imposed over more general Cauchy-morphisms);
	
	\item the $\mathsf{Loc}_{m,\diamond}$-local-to-global principle asserts that a locally covariant AQFT can be reconstructed from its restriction to spacetimes diffeomorphic to $\mathbb{R}^m$. 
\end{itemize}

 Then, Theorem \ref{thm_ComparingMixingWithDIAMONDS} yields an equivalence of homotopy categories
$$
\Ho\begin{Bmatrix}
\text{locally covariant }m\text{-dim AQFTs}\\
	\text{satisf.\  }\mathsf{Loc}_{m,\diamond}\textup{-local-to-global princ.}\\
	\text{and homotopy }\Srect_{\diamond}\text{-time-slice axiom}
\end{Bmatrix} \simeq
\Ho\begin{Bmatrix}
\text{locally covariant }m\text{-dim AQFTs}\\
\text{def.\ on spacetimes }\Mrect
\text{ diffeo to }\mathbb{R}^m\\
\text{satisf.\ homotopy   }\Srect_{\diamond}\text{-time-slice axiom}
\end{Bmatrix}.
$$
It may seem immediate since Theorem \ref{thm_LocalToGlobalModel} already gives a Quillen equivalence 
$$
\Qft^{\mathsf{Loc}_{m,\diamond}}(\mathsf{Loc}_{m}^{\perp})\simeq \Qft(\mathsf{Loc}_{m,\diamond}^{\perp}),
$$
which yields the equivalence of homotopy categories not taking into account the homotopy time-slice axiom. However, if we enlarge $\Srect_{\diamond}$ and take $\Srect$ to be the set of Cauchy-morphisms between any pair of spacetimes, the analogous equivalence of homotopy categories is not known to hold. Let us present a baby example to show how this may fail in general.

\begin{ex}\label{ex_ComparisonOfMixingOnDiamondsNOTHOLDINGENERAL} We are going to define an orthogonal category $\Crect^{\perp}$, a full orthogonal subcategory  $\Crect_{\diamond}^{\perp}\hookrightarrow\Crect^{\perp}$ and a set of morphisms $\Srect$ in $\Crect$ (not contained in $\Crect_{\diamond}$) for which 	
	$
	\Qft_{\h\Srect}^{\Crect_{\diamond}}(\Crect^{\perp})
	$
	and 
	$ 
	\Qft_{\h\Srect_{\diamond}}(\Crect_{\diamond}^{\perp})
	$
are not Quillen equivalent, where $\Srect_{\diamond}$ is the subset of morphisms of  $\Srect$ within $\Crect_{\diamond}$.

A simple choice for this purpose is: let $\Crect$ be the category $\left\{\diamond_1\leftarrow \Xrect\to \diamond_2\right\}$ (the orthogonality relation must be trivial since there are no maps with the same target),  $\Crect_{\diamond}=\left\{\diamond_1,\diamond_2\right\}$ and $\Srect$ be the whole set of arrows of $\Crect$. By definition $\Srect_{\diamond}$ is empty. Then, it is easy to check 
$$
\mathsf{ev}_{\diamond_1}\colon\Ho\Qft_{\h\Srect}^{\Crect_{\diamond}}(\Crect^{\perp})\overset{\sim}{\rightarrow} \Ho\mathsf{dga}_{\mathbb{k}} \quad  \text{and} \quad (\mathsf{ev}_{\diamond_1},\mathsf{ev}_{\diamond_2})\colon\Ho\Qft_{\h\Srect_{\diamond}}(\Crect^{\perp}_{\diamond})\overset{\sim}{\rightarrow} \Ho\mathsf{dga}_{\mathbb{k}}^{\times 2},
$$
where $\Ho\mathsf{dga}_{\mathbb{k}}$ is the homotopy category of dg-algebras over $\mathbb{k}$, and so, $
\Qft_{\h\Srect}^{\Crect_{\diamond}}(\Crect^{\perp})$ cannot be Quillen equivalent to  
$\Qft_{\h\Srect_{\diamond}}(\Crect_{\diamond}^{\perp})$.
\end{ex}

Let us close this section by discussing examples not covered by Theorem \ref{thm_ComparingMixingWithDIAMONDS}.

\begin{ex}[AQFTs on a fixed spacetime]\label{ex_COpenMixingDiamonds} Let $\Mrect$ be a fixed spacetime (oriented and time-oriented globally hyperbolic Lorentzian manifold) with or without timelike boundary and let  $\mathsf{COpen}(\Mrect)^{\perp}$ be the orthogonal category of causally convex open subsets of $\Mrect$ with the causal orthogonality relation. In this situation, $\QFT(\Mrect)\equiv\QFT(\mathsf{COpen}(\Mrect)^{\perp})$ is a category of AQFTs defined on $\Mrect$ in the spirit of Haag-Kastler's axiomatization (see \cite{benini_algebraic_2018,haag_algebraic_1964}). 
	
Consider $\mathsf{COpen}_{\diamond}(\Mrect)\hookrightarrow \mathsf{COpen}(\Mrect)$	to be the full subcategory spanned by causal diamonds in $\Mrect$ and $\Srect$ to be the set of inclusions of causally convex open subsets of $\Mrect$ sharing a Cauchy surface. To alleviate the notation, let us replace the superindex $\mathsf{COpen}_{\diamond}(\Mrect)$ by $\diamond$ to denote the associated local-to-global principle, e.g.\ $\Qft^{\diamond}(\Mrect)$ instead of $\Qft^{\mathsf{COpen}_{\diamond}(\Mrect)}(\mathsf{COpen}(\Mrect)^{\perp})$.

\begin{prop}\label{prop_HaagKastlerMixingDIAMONDS}
	The inclusion of orthogonal categories $\upiota\colon\mathsf{COpen}_{\diamond}(\Mrect)\hookrightarrow \mathsf{COpen}(\Mrect)$ induces a Quillen equivalence
	$
	\Qft_{\h\Srect_{\diamond}}(\mathsf{COpen}_{\diamond}(\Mrect)^{\perp}) \rightleftarrows \Qft_{\h\Srect}^{\diamond}(\Mrect). 
	$
\end{prop}
\begin{proof}
By Theorems \ref{thm_ComparisonWeakTimeSliceModels} and \ref{thm_EquivalenceOfMixingModels}, we can equivalently prove that the Quillen equivalence 
$
\upiota_{\sharp}\colon\Qft(\mathsf{COpen}_{\diamond}(\Mrect)^{\perp}) \rightleftarrows \Qft^{\diamond}(\Mrect)\colon\!\upiota^*
$ given in Theorem \ref{thm_LocalToGlobalModel} descends to a Quillen equivalence 
$$
\upiota_{\sharp}\colon\Lrect_{\mathcal{S}_{\diamond}}\Qft(\mathsf{COpen}_{\diamond}(\Mrect)^{\perp}) \rightleftarrows \Lrect_{\mathcal{S}}\Qft^{\diamond}(\Mrect)\colon\!\upiota^*.
$$ 

Let us first check that it descends to a Quillen pair.

 Since the left adjoint preserves cofibrations, we are reduced to show that the right adjoint preserves fibrations between fibrant objects by \cite[Proposition 8.5.4]{hirschhorn_model_2003}. Using \cite[Proposition 4.30]{barwick_left_2010},  Proposition \ref{prop_CharacterizationOfTIMESLICELOCALObjects} and Theorem \ref{thm_MixingModelWithProperties}, we know that fibrations between fibrant objects in both Bousfield localizations are the projective fibrations between projectively fibrant AQFTs which satisfy the corresponding homotopy time-slice axiom. Observe that the right adjoint is just the restriction along $\mathsf{COpen}_{\diamond}(\Mrect)\hookrightarrow \mathsf{COpen}(\Mrect)$ to conclude that we have a Quillen pair.
 
 Finally, let us prove that the Quillen pair is a Quillen equivalence by checking that the derived unit and counit are equivalences in the respective model structure.
 \begin{itemize}
 	\item derived counit: we must show that $\mathbb{L}\upiota_{\sharp}\upiota^*\Balg\to\Balg$ is a proj-equivalence for any bifibrant object $\Balg\in\Lrect_{\mathcal{S}}\Qft^{\diamond}(\Mrect)$. Theorem  \ref{thm_MixingModelWithProperties} concludes the claim by definition of $\Crect_{\diamond}$-local-to-global principle. 
 
 	\item derived unit: we must show that $\Aalg\to\mathbb{R}\upiota^*\upiota_{\sharp}\Aalg$ is a proj-equivalence for any bifibrant object $\Aalg\in\Lrect_{\mathcal{S}_{\diamond}}\Qft(\mathsf{COpen}_{\diamond}(\Mrect)^{\perp})$. The subtle point here is that the derived functor $\mathbb{R}\upiota^*$ is computed by applying $\upiota^*$ to a fibrant replacement of $\upiota_{\sharp}\Aalg$ in $\Lrect_{\mathcal{S}}\Qft^{\diamond}(\Mrect)$, which a priori may modify the value of $\upiota_{\sharp}\Aalg$ on causal diamonds. Let us see that this is not the case by constructing the fibrant replacement explicitly in this case. 
 	
 	Recall from \cite[Section 2]{benini_algebraic_2018} that taking Cauchy-development of spacetime regions yields an orthogonal functor $\Drect\colon \mathsf{COpen}(\Mrect)\to \mathsf{COpen}(\Mrect)$ equipped with a natural transformation from the identity functor $\id\Rightarrow \Drect$. On any region, this transformation is simply the canonical inclusion $\Urect\hookrightarrow\Drect(\Urect)$  of the region within its Cauchy-development and so it is a Cauchy-morphism. This data induces an endofunctor $\Drect^*\colon \QFT(\Mrect)\to \QFT(\Mrect)$ together with a natural transformation $\id\Rightarrow \Drect^*$. Choose a proj-fibrant replacement functor $\mathsf{R}$ for $\Qft(\Mrect)$ and consider the composition $\mathfrak{D}=\Drect^*\mathsf{R}$. We claim that $\mathfrak{D}$ serves as a fibrant replacement for $\Lrect_{\mathcal{S}}\Qft^{\diamond}(\Mrect)$. It suffices to check that for any $\Balg\in \QFT(\Mrect)$ and any map $\Balg\to \widehat{\Balg}$ in $\Ho\Qft(\Mrect)$ such that $\widehat{\Balg}$ satisfies the homotopy $\Srect$-time-slice axiom, there is a unique factorization through $\Balg\to \mathfrak{D}\Balg$ since $\mathfrak{D}\Balg$ automatically satisfies the homotopy $\Srect$-time-slice axiom and is projectively fibrant. To prove that, consider the commutative square in $\Ho\Qft(\Mrect)$
 	$$
 	\begin{tikzcd}[ampersand replacement=\&]
 		\Balg\ar[r]\ar[d] \& \widehat{\Balg}\ar[d,"\cong"]\\
 		\mathfrak{D}\Balg\ar[ru, dashed, "!"] \ar[r] \& \mathfrak{D}\widehat{\Balg}
 	\end{tikzcd}\quad .
 	$$ 
 	The vertical map on the right hand side is an isomorphism in $\Ho\Qft(\Mrect)$ since the natural transformation  $\id\Rightarrow \Drect$ is a Cauchy-morphism when evaluated on any spacetime region. Note that $\id\Rightarrow\Drect$ restricts to causal diamonds, since $\Drect(\Vrect)$ is globally hyperbolic with Cauchy surface equal to that of $\Vrect$.
 	
 	Using $\mathfrak{D}$, we can compute up to proj-equivalence
 	$
    \mathbb{R}\upiota^*\upiota_{\sharp}\Aalg\simeq \upiota^*\mathfrak{D}\upiota_{\sharp}\Aalg\simeq \upiota^*\Drect^*\upiota_{\sharp}\Aalg.
 	$
   Hence, evaluating the derived unit for $\Aalg$ on any $\Vrect\in\mathsf{COpen}_{\diamond}(\Mrect)$ we obtain
 	$
 	\Aalg(\Vrect)\rightarrow \upiota_{\sharp}\Aalg(\Drect(\Vrect))
 	$
 	which is a proj-equivalence because: (i) $\Drect(\Vrect)$ belongs to $\mathsf{COpen}_{\diamond}(\Mrect)$ since $\Drect$ restricts to causal diamonds; (ii)  $\Aalg$ satisfies the homotopy $\Srect_{\diamond}$-time-slice axiom by hypothesis (note that  $\upiota_{\sharp}\Aalg(\Drect(\Vrect))\cong \Aalg(\Drect(\Vrect))$ by (i)).
 \end{itemize} 
\end{proof}

\end{ex}

\begin{ex}[Locally covariant conformal 2D AQFTs ]\label{ex_CLocMixingDiamonds}
	 Let $\mathsf{CLoc}_2$ be the category of spacetimes (oriented and time-oriented globally hyperbolic Lorentzian manifolds) with morphisms embeddings preserving (time-)orientation and conformal structure and whose image is open and causally convex. We equip $\mathsf{CLoc}_2$ with the causal orthogonality relation. In this situation, $\QFT(\mathsf{CLoc}_2^{\perp})$ is the category of locally covariant conformal AQFTs in two spacetime dimensions. Additionally, consider $\mathsf{CLoc}_{2,\diamond}\hookrightarrow \mathsf{CLoc}_2$ to be the full subcategory spanned by spacetimes $\mathsf{N}$ diffeomorphic to $\mathbb{R}^2$ and $\Srect$ to be the set of Cauchy-morphisms, i.e.\ embeddings $\upf\colon \Mrect\hookrightarrow \mathsf{N}$ in $\mathsf{CLoc}_{2}$ such that $\upf(\Mrect)$ contains a Cauchy surface for $\mathsf{N}$. As in Example \ref{ex_COpenMixingDiamonds}, let us replace the superindex $\mathsf{CLoc}_{2,\diamond}$ by $\diamond$ to denote the associated local-to-global principle.
	 
	 \begin{prop}\label{prop_CLoc2MixingDIAMONDS}
	 	 	The inclusion of orthogonal categories $\mathsf{CLoc}_{2,\diamond}\hookrightarrow \mathsf{CLoc}_2$ induces a Quillen equivalence
	 	$
	 	\Qft_{\h\Srect_{\diamond}}(\mathsf{CLoc}_{2,\diamond}^{\perp}) \rightleftarrows \Qft_{\h\Srect}^{\diamond}(\mathsf{CLoc}_2^{\perp}). 
	 	$
	 \end{prop}
	 \begin{proof} Analogous to the proof of Proposition \ref{prop_HaagKastlerMixingDIAMONDS} using that (the subcategory of connected manifolds in) $\mathsf{CLoc}_{2,\diamond}^{\perp}$ is equivalent to the orthogonal category described in \cite[Corollary 3.5]{benini_skeletal_2022}.
	 \end{proof}
	 
\end{ex}

\end{paragraph}

\end{subsection}

\end{section}

\begin{section}{Glossary: Comparison of homotopy theories}\label{sect_ComparisonFieldTheories}
	This glossary contains a table summarizing all the model structures discussed in this work and a diagram that represents the relations between them.\vspace*{-1.5mm}

\begin{table}[H]
	\resizebox{\textwidth}{!}{%
		$
		\begin{array}{|l|l|cc|c|l|}
		\hline 
		\multicolumn{2}{|c|}{\textit{Homotopy theory of}} & \multicolumn{2}{c|}{\textit{Quillen model structure}} & \multicolumn{2}{c|}{\textit{Underlying algebraic structure}} \\ \hline \hline
		\multicolumn{2}{|c|}{
		\vphantom{\begin{array}{c}
			\text{\small{A}}\\[-.5mm]
			\text{\small{A}}
			\end{array}}
			\text{\small{algebraic field theories}}
		}                                  &                  \Qft(\Crect^{\perp})           &    \text{\small{(Theorem \ref{thm_ProjectiveModelOnPlainAQFTs})}}                       & \multicolumn{2}{c|}{\text{algebras over  } \Op_{\Crect}^{\perp}}                                 \\ \hline
		\multicolumn{2}{|c|}{
			\vphantom{\begin{array}{c}
				\text{\small{A}}\\[-.5mm]
				\text{\small{A}}
				\end{array}}
			\text{\small{strictly S-constant field theories}}
		}                                     &                 \Qft_{\Srect}(\Crect^{\perp})                   &     \text{\small{(Notation \ref{notat_ProjModelonAQFTsWithStrictTimeSliceAxiom})}}                      & \multicolumn{2}{c|}{\text{algebras over }\Op^{\perp_{\Srect}}_{\Crect_{\Srect}}}                                  \\ \hline
		\multicolumn{2}{|c|}{
			\vphantom{\begin{array}{c}
				\text{\small{A}}\\[-.5mm]
				\text{\small{A}}
				\end{array}}
			\text{\small{homotopy S-constant  field theories}}
		}                          &               \Qft_{\h\Srect}(\Crect^{\perp})           &     \text{\small{(Proposition \ref{prop_STRUCTUREDweakTimeSliceAQFTModel})}}                      & \multicolumn{2}{c|}{\text{algebras over }\mathfrak{L}_{\Srect}\Op_{\Crect}^{\perp}}                                  \\ \hline
		\multicolumn{2}{|c|}{
			\vphantom{\begin{array}{c}
				\text{\small{A}}\\[-.5mm]
				\text{\small{A}}
				\end{array}}
			\text{\small{homotopy S-constant field theories}}
		}                                   &                 \mathsf{L}_{\mathcal{S}}\Qft(\Crect^{\perp})           &    \text{\small{(Proposition \ref{prop_WeakTimeSliceModel})}}                       & \multicolumn{2}{c|}{\text{algebras over }\Op^{\perp}_{\Crect}}                                  \\ \hline
		\multicolumn{2}{|c|}{
			\parbox[c][15mm]{0mm}{}
			\begin{array}{c}
				\text{\small{algebraic field theories}}\\[-.5mm]
				\text{\small{satisfying local-to-global principle}}
				\end{array}
		}                                  &            \Qft^{\Crect_{\diamond}}(\Crect^{\perp})                &    \text{\small{(Theorem \ref{thm_LocalToGlobalModel})}}                   & \multicolumn{2}{c|}{\text{algebras over }\Op_{\Crect}^{\perp}}                                  \\ \hline
		\multicolumn{2}{|c|}{
			\parbox[c][15mm]{0mm}{}
			\begin{array}{c}
			\text{\small{homotopy S-constant field theories}}\\[-.5mm]
			\text{\small{satisfying local-to-global principle}}
			\end{array}
		}    
	
	                          &       \Qft_{\h\Srect}^{\Crect_{\diamond}}(\Crect^{\perp})                     &             \text{\small{(Theorem \ref{thm_MixingModelWithStructure})}}              & \multicolumn{2}{c|}{\text{algebras over }\mathfrak{L}_{\Srect}\Op_{\Crect}^{\perp}}                                  \\ \hline
		\multicolumn{2}{|c|}{
			\parbox[c][15mm]{0mm}{}
			\begin{array}{c}
			\text{\small{homotopy S-constant field theories}}\\[-.5mm]
			\text{\small{satisfying local-to-global principle}}
			\end{array}
		}                                     &             \mathsf{L}_{\mathcal{S}}\Qft^{\Crect_{\diamond}}(\Crect^{\perp})               &            \text{\small{(Theorem \ref{thm_MixingModelWithProperties})}}               & \multicolumn{2}{c|}{\text{algebras over }\Op_{\Crect}^{\perp}}
		     \\ \hline
		\end{array}
		$
	}
\end{table} \vspace*{-2.5mm}

In the following diagram are displayed these Quillen model categories with Quillen adjunctions between them, oriented by their left adjoints.
\begin{equation}\tag{$\mathscr{C}\hspace*{-0.7mm}\textit{omparison}$}\label{eqt_ComparisonAQFTs}
\begin{tikzcd}[ampersand replacement=\&]
\& \& \Qft_{\Srect}(\Crect^{\perp})\ar[rd, leftarrow] \&  \& \\
\& \Qft(\Crect^{\perp})\ar[ru]\ar[rd,"\text{localization}" description]\ar[dd,"\text{cellularization}"', leftarrow] \&  \& \Qft_{\h\Srect}(\Crect^{\perp}) \ar[dd,"\text{cellularization}", leftarrow]\& \\
\& \& \mathsf{L}_{\mathcal{S}}\Qft(\Crect^{\perp}) \ar[ru, "\text{Q-equiv}" description]\& \& \\
\& \Qft^{\Crect_{\diamond}}(\Crect^{\perp})\ar[rd,"\text{localization}" description] \&  \& \Qft^{\Crect_{\diamond}}_{\h\Srect}(\Crect^{\perp}) \& \\
\Qft(\Crect^{\perp}_{\diamond})\ar[rrdd,"\text{localization}" description, bend right=10]\ar[ru, "\text{Q-equiv}"]\& \& \mathsf{L}_{\mathcal{S}}\Qft^{\Crect_{\diamond}}(\Crect^{\perp}) \ar[ru, "\text{Q-equiv}" description] \& \&\Qft_{\h\Srect_{\diamond}}(\Crect^{\perp}_{\diamond})\ar[lu, "\text{Q-equiv}^{(\dagger)}"'] \\\\
\& \& \mathsf{L}_{\mathcal{S}_{\diamond}}\Qft(\Crect_{\diamond}^{\perp}) \ar[uu, "\text{Q-equiv}^{(\dagger)}" description]  \ar[rruu, "\text{Q-equiv}" description, bend right=10] \& \&
\end{tikzcd}
\end{equation}
The arrows marked with $(\dagger)$ are only known to be Quillen equivalences when $\Srect=\Srect_{\diamond}$ or when $(\Crect,\Crect_{\diamond})$ is either  $(\mathsf{COpen}(\Mrect),\mathsf{COpen}_{\diamond}(\Mrect))$ or $(\mathsf{CLoc}_2,\mathsf{CLoc}_{2,\diamond})$  and $\Srect$ is the set of Cauchy morphisms (see Examples \ref{ex_COpenMixingDiamonds} and \ref{ex_CLocMixingDiamonds}).

\end{section}

\begin{paragraph}{Acknowledgments}
The author would like to thank his advisors, R.\ Flores and F.\ Muro, for their support.  He also wants to thank M.\ Benini and A.\ Schenkel for their helpful comments and interest. C.\ Maestro also deserves recognition; his help with LaTeX matters is of unquestionable value. 
\end{paragraph}

\appendix
\begin{section}{Correction to: ``New model category structures for AQFT''}
\label{app_Corrigendum}
\begin{paragraph}{Abstract.} 
We observe that a technical claim made in \cite{carmona_new_2023} regarding left properness of operadic algebras is incorrect in the stated generality. We correct the affected results and provide counterexamples to the claim.
\end{paragraph}

\begin{subsection}{Introduction}
	The technical result \cite[Proposition 4.10]{carmona_new_2023} is incorrect in the stated generality. The issue arises at the end of its proof, where it was stated that an easy analysis of the pushouts appearing in \cite[Proposition 4.3.17]{white_bousfield_2018} yields the result. Such an analysis turned out to be quite non-trivial. It was explored in more detail and generality in \cite[\textsection 4.3]{carmona_enveloping_2024}, but the results in that work are not sufficient to recover the conclusion of \cite[Proposition 4.10]{carmona_new_2023}. This issue, being of technical nature, affects technically certain statements in \cite{carmona_new_2023}; specifically those where left Bousfield localization of left proper model categories was applied, i.e.\ \cite[Proposition 3.12 and Theorem 5.7/Corollary 5.8]{carmona_new_2023}. The solution we propose in this note is to use the more general left Bousfield localization for non-left proper model categories obtained in \cite{batanin_left_2024}. The output of such a construction is a relaxation of the notion of Quillen model structure, which is sufficient for practical purposes; for instance, consult  \cite{benini_equivalence_2024}. (As a consequence of this modification, any appearance of $\Lrect_{\mathcal{S}}\Qft^{(\Crect_{\diamond})}(\Crect^{\perp})$ in the text should state that it is a \textit{semimodel} structure/category. For instance, \cite[Theorems A and C]{carmona_new_2023}, and the summary \textsection6, must be slightly modified according to this.)
	
	In the sequel, we provide corrected statements and proofs for the affected results in \cite{carmona_new_2023}, which are presented here with the same numbering as in the original work and a superscript $\star$, and we discuss a counterexample invalidating \cite[Proposition 4.10]{carmona_new_2023} (see Propositions \ref{prop:counterexample lproper}, \ref{prop:counterexample lproperII}).
\end{subsection}

\begin{subsection}{Corrections}
	
	Recall that we work over the projective model structure on $\Ch_{\mathbb{k}}$, where $\mathbb{k}$ is a field of characteristic $0$. That is, we consider $\Ch_{\mathbb{k}}$-valued AQFTs over $\Crect^{\perp}$ and the associated projective model structure $\Qft(\Crect^{\perp})$. Also, let us comment that the semimodel structures we employ fall under \cite[Definition 2.1]{batanin_left_2024}, although various variations of this notion can be found in the literature.
	
	\begin{taggedprop}{$\textbf{3.12}^\star$}\label{prop 3.12}
		The left Bousfield localization of $\Qft(\Crect^{\perp})$ at $\mathcal{S}$ exists as a semimodel category, denoted $\mathsf{L}_{\mathcal{S}}\Qft(\Crect^{\perp})$, whose fibrant objects are fibrant objects in $\Qft(\Crect^{\perp})$ satisfying the homotopy $\Srect$-time-slice axiom.
	\end{taggedprop}
	\begin{proof} The existence of such a left Bousfield localization is guaranteed by \cite[Theorem A]{batanin_left_2024} since the projective model structure on operadic algebras valued in $\Ch_{\mathbb{k}}$ is tractable and combinatorial; see \cite[\textsection4.1]{benini_equivalence_2024}. The characterization of fibrant objects is the subject of \cite[Proposition 3.11]{carmona_new_2023}.
	\end{proof}
	
	\begin{rem} Notice that \cite[Theorem A]{batanin_left_2024} also equips $\mathsf{L}_{\mathcal{S}}\Qft(\Crect^{\perp})$ with the expected universal property in the realm of semimodel categories. Alternatively, one can construct Quillen adjunctions from or to $\mathsf{L}_{\mathcal{S}}\Qft(\Crect^{\perp})$ using \cite[Lemma 3.5]{carmona_simple_2024}. See for instance \cite[Proposition 4.7]{benini_equivalence_2024}.	Hence, the comparison results, e.g.\ \cite[Theorem 3.13 and Theorem 5.9]{carmona_new_2023}, remain unchanged.
	\end{rem}

	\begin{taggedcor}{$\textbf{5.8}^\star$} The semimodel structure  
		$\mathsf{L}_{\mathcal{S}}\Qft^{\Crect_{\diamond}}(\Crect^{\perp})$ presents the homotopy theory of algebraic quantum field theories which satisfy the homotopy $\Srect$-time-slice axiom and the $\Crect_{\diamond}$-local-to-global principle, i.e.\;the homotopy theory of  $\QFT_{\h\Srect}^{\Crect_{\diamond}}(\Crect^{\perp})$.
	\end{taggedcor}
	\begin{proof} The only modification with respect to \cite[Corollary 5.8]{carmona_new_2023} is the justification for the existence of $\mathsf{L}_{\mathcal{S}}\Qft^{\Crect_{\diamond}}(\Crect^{\perp})$ as a semimodel category. Similarly to Proposition \ref{prop 3.12}, this follows from \cite[Theorem A]{batanin_left_2024} since the extension model structure $\Qft^{\Crect_{\diamond}}(\Crect^{\perp})$ is combinatorial \cite[Proposition 4.3]{carmona_new_2023} and tractable by inspection; indeed, the generating (acyclic) cofibrations of $\Ch_{\mathbb{k}}^{(\times \ob(\Crect_{\diamond}))}$ have cofibrant domains, and thus so does any (semi)model structure obtained via left transfer from it. The computation of the bifibrant objects in this semimodel structure remains unchanged (see \cite[Theorem 5.7]{carmona_new_2023}).
	\end{proof}
	
\end{subsection}

\begin{subsection}{Counterexamples}
	Let us first show that the $\Ch_{\mathbb{k}}$-operad $\Op=\mathsf{Com}/(\upmu_3=0)$, obtained by killing the generator $\upmu_3$ of the $\mathbb{k}$-vector space $\mathsf{Com}(3)\cong \mathbb{k}$ in the \emph{non}-unital commutative operad $\mathsf{Com}$, does not satisfy the conclusion of \cite[Proposition 4.10]{carmona_new_2023}. This counterexample is motivated by a similar construction given in \cite{muro_correction_2017}. 
	
	For this task, we need to compute the enveloping operad $\Op_{\Aalg}$ for any $\Aalg\in \Alg_{\Op}(\Ch_{\mathbb{k}})$ (see \cite{carmona_enveloping_2024}, \cite{fresse_modules_2009}, \cite{harper_homotopy_2010} or \cite{white_bousfield_2018} for definitions). 
	
	\begin{lem}\label{lem: enveloping operad} Let $\Op=\faktor{\mathsf{Com}}{(\upmu_3=0)}$, where $\mathsf{Com}(3)=\mathbb{k}\upmu_3$, and $\Aalg\in \Alg_{\Op}(\Ch_{\mathbb{k}})$. Then,
		$$
		\Op_{\Aalg}(q)\cong
		\left\{
		\begin{array}{ll}
			\Aalg & \mbox{\qquad if $q=0$,} \\[3mm]
			\mathbb{k}\oplus \faktor{\Aalg}{\Aalg^2} & \mbox{\qquad if $q=1$,}\\[3mm]
			\mathbb{k} & \mbox{\qquad if $q=2$,}\\[3mm]
			0 & \mbox{\qquad if $q\geq 3$,}
		\end{array}
		\right.
		$$
		where $\Aalg^2$ is the image of the multiplication map $\upmu_{\Aalg}\colon\Aalg^{\otimes 2}\longrightarrow\Aalg$. 
	\end{lem}
	\begin{proof}
		We will use the explicit presentation of the enveloping operad from \cite[\textsection7.3]{harper_homotopy_2010}, i.e.\ 
		$$
		\begin{tikzcd}[ampersand replacement=\&]
			\Op_{\Aalg}(q)\cong \mathsf{colim}  \left(\bigoplus_{p\geq 0} \Op(p+q)\underset{\Upsigma_p}{\otimes}\Aalg^{\otimes p}\right. \ar[r, leftarrow, shift left=2,"\upd_0"] \ar[r, leftarrow, shift right=2,"\upd_1"'] \ar[r, lightgray] \& \left.\bigoplus_{p\geq 0} \Op(p+q)\underset{\Upsigma_p}{\otimes}(\Op\circ\Aalg)^{\otimes p}  \right)
		\end{tikzcd},
		$$
		together with the fact that $\Op(n)\cong \mathbb{k}$ if $n=1,2$ and $\Op(m)\cong 0$ otherwise.
		
		Let us divide the proof into cases:
		\begin{itemize}
			\item $q\geq 3$: one immediately observes $\Op_{\Aalg}(q)\cong 0$.  
			
			\item $q=2$: the only factor that survives in the coproduct correspond to $p=0$. This fact combined with inspection yields $\Op_{\Aalg}(2)\cong \mathbb{k}$.
			
			\item $q=1$: the factors that survive in the coproduct correspond to $p=0$ and $p=1$. The $p=0$ part contributes with a copy of $\mathbb{k}$ to $\Op_{\Aalg}(1)$. For the remaining part, letting $\overline{\Op_{\Aalg}(1)}$ be such that $\Op_{\Aalg}(1)=\mathbb{k}\oplus\overline{\Op_{\Aalg}(1)}$, we have
			$$
			\begin{tikzcd}[ampersand replacement=\&]
				\overline{\Op_{\Aalg}(1)}=  \mathsf{colim}  \left( \mathbb{k}\upmu_2\underset{\Upsigma_1}{\otimes}\Aalg^{\otimes 1}\right. \ar[r, leftarrow, shift left=2,"\upd_0"] \ar[r, leftarrow, shift right=2,"\upd_1"'] \ar[r, lightgray] \& \left. \mathbb{k}\upmu_2\underset{\Upsigma_1}{\otimes}\big(\mathbb{k}\upmu_1\otimes\Aalg\oplus \, \mathbb{k}\upmu_2\underset{\Upsigma_2}{\otimes}\Aalg^{\otimes 2}\big) \right)\\[-3mm]
				\cong  \mathsf{colim} \Big( \Aalg  \ar[r, leftarrow, shift left=2,"\upd_0=(\id_{\Aalg};\,0)"] \ar[r, leftarrow, shift right=2,"\upd_1=(\id_{\Aalg};\,\upmu_{\Aalg})"'] \ar[r, lightgray, "\upiota_1" description] \&  \Aalg\oplus\, (\Aalg^{\otimes 2})_{\Upsigma_2} \Big) \cong \faktor{\Aalg}{\Aalg^2}\;,  
			\end{tikzcd}
			$$ 
			where $\upiota_1$ denotes the inclusion into the first component.
			
			\item $q=0$: this case is automatic since the colimit computing $\Op_{\Aalg}(0)$ is the truncated bar-resolution of $\Aalg$; in symbols, $\Aalg\leftarrow \Op\circ\Aalg \leftleftarrows \Op\circ\Op\circ\Aalg$. 
		\end{itemize}
	\end{proof}

	\begin{prop}\label{prop:counterexample lproper} 
		Let $\Op=\faktor{\mathsf{Com}}{(\upmu_3=0)}$, where $\upmu_3$ is the generator of $\mathsf{Com}(3)$. Then, the projective model structure on the category of $\Op$-algebras   $\Alg_{\Op}(\Ch_{\mathbb{k}})$ is not left proper.
	\end{prop}
	\begin{proof} We will show that there is a weak equivalence $\upphi\colon\Aalg\to \Balg$ between $\Op$-algebras such that $\upphi\amalg \id \colon \Aalg\amalg \Op\circ\,\mathbb{k}\to \Balg\amalg \Op\circ\,\mathbb{k}$ is not a weak equivalence. This fact implies the claim since we have a cubical diagram in $\Alg_{\Op}(\Ch_{\mathbb{k}})$
		$$
		\begin{tikzcd}[ampersand replacement=\&]
			\Op\circ\,0\ar[dd, "!"'] \ar[rrrr, equal,  lightgray] \ar[rd, tail]
			\&[-3mm]\&[-3mm]\&[-3mm]\&[-3mm] \Op\circ\,0  \ar[dd,"!", near end] \ar[rd, tail]
			\\
			\&[-3mm] \Op\circ\,\mathbb{k} \ar[rrrr, equal, crossing over,  lightgray] \&[-3mm]\&[-3mm]\&[-3mm]\&[-3mm] \Op\circ\,\mathbb{k}\\
			\Aalg  \ar[rrrr,  lightgray, "\upphi" near end] \ar[rd, tail] \&[-3mm]\&[-3mm]\&[-3mm]\&[-3mm] \Balg \ar[rd, tail]
			\\
			\&[-3mm] \Aalg\amalg \Op\circ\,\mathbb{k} \ar[uu, leftarrow, crossing over] \ar[rrrr, crossing over, lightgray, "\upphi\amalg\id"'] \&[-3mm]\&[-3mm]\&[-3mm]\&[-3mm] \Aalg\amalg \Op\circ\,\mathbb{k} \ar[uu, leftarrow, crossing over]
		\end{tikzcd}\;\;.
		$$
		whose black faces are pushouts and because left properness can be characterized via cubes of this sort; see \cite[Proposition 15.4.4]{may_more_2012}.
		
		Combining \cite[Proposition 7.6]{harper_homotopy_2010} with Lemma \ref{lem: enveloping operad}, we obtain 
		$$
		\Aalg\amalg \Op\circ\,\mathbb{k}\cong \Op_{\Aalg}\circ\,\mathbb{k}=\bigoplus_{n\geq 0}\Op_{\Aalg}(n)_{\Upsigma_n}\cong \Aalg \oplus \Big(\mathbb{k}\oplus \faktor{\Aalg}{\Aalg^2}\Big)\oplus \mathbb{k}
		$$
		and thus, the map $\upphi\amalg \id \colon \Aalg\amalg \Op\circ\,\mathbb{k}\to \Balg\amalg \Op\circ\,\mathbb{k}$ is given in this case by
		$$
		\id_{\Aalg}\oplus \big(\id_{\mathbb{k}}\oplus\,[\upphi]\big)\oplus \id_{\mathbb{k}}\colon \Aalg \oplus \Big(\mathbb{k}\oplus \faktor{\Aalg}{\Aalg^2}\Big)\oplus \mathbb{k} \longrightarrow \Balg \oplus \Big(\mathbb{k}\oplus \faktor{\Balg}{\Balg^2}\Big)\oplus \mathbb{k}.
		$$
		In order for this map to be a weak equivalence of $\Op$-algebras, the factor $[\upphi]\colon \faktor{\Aalg}{\Aalg^2}\to \faktor{\Balg}{\Balg^2}$ should be a quasi-isomorphism. Let us construct a weak equivalence $\upphi\colon \Aalg\to \Balg$ so that $[\upphi]$ is not a quasi-isomorphism. 
		
		Consider the $\Op$-algebra $\Balg=\mathbb{k}y$ determined by $\vert y\vert=0$, $y^2=0$ and the $\Op$-algebra
		$$
		\Aalg=\big(\cdots 0 \xrightarrow{\;\;\phantom{\upd}\;\;} \mathbb{k}z \xrightarrow{\;\;\upd\;\;}\mathbb{k}x\oplus\, \mathbb{k}x^2\xrightarrow{\;\;\phantom{\upd}\;\;}0 \cdots\big)
		$$
		determined by $\vert z\vert=-1$, $zx=0=z^2$ and $x^2=\upd(z)$. The second $\Op$-algebra is a sort of resolution of the first one and so, the map $\upphi\colon \Aalg\to \Balg$ given by $z\mapsto 0$ and $x\mapsto y$ is clearly a weak equivalence. By direct computation,
		$$
		\faktor{\Balg}{\Balg^2}=\Balg \qquad \text{while} \qquad \faktor{\Aalg}{\Aalg^2}=\big(\cdots 0 \xrightarrow{\;\;\phantom{\upd}\;\;} \mathbb{k}z \xrightarrow{\;\;0\;\;}\mathbb{k}x\xrightarrow{\;\;\phantom{\upd}\;\;}0 \cdots\big)\cong \Balg\oplus\, \mathbb{k}[1].
		$$
		Therefore $[\upphi]\colon \faktor{\Aalg}{\Aalg^2}\longrightarrow \faktor{\Balg}{\Balg^2} $ cannot be a quasi-isomorphism.
	\end{proof} 
	
	The previous discussion is built on a $\mathbb{k}$-linear (or at least pointed) operad, and thus it is \textit{technically} not a counterexample of \cite[Proposition 4.10]{carmona_new_2023}. For this reason, the rest of this section is devoted to present a $\mathsf{Set}$-valued operad $\widetilde{\Op}$ satisfying the conclusion of Proposition \ref{prop:counterexample lproper}.  We are indebted to M.\ Markl and F.\ Muro for this counterexample.
	
	Let $\widetilde{\Op}=\faktor{\mathsf{Free}_{\mathsf{n}\Upsigma}(\{\widetilde{\upmu}\}_{(3)})}{(\widetilde{\upmu}\circ_i\widetilde{\upmu}=\widetilde{\upmu}\circ_j\widetilde{\upmu})_{1\leq i<j\leq 3}}$ be the non-symmetric $\mathsf{Set}$-operad generated by a totally associative 3-ary operation $\widetilde{\upmu}$.\footnote{Observe that  $\widetilde{\Op}$ is quadratic Koszul by \cite{markl_Koszulness} (note that $\widetilde{\Op}=t\mathcal{A}ss^3_0$ in the notation of loc.cit.), providing an even stronger statement against left properness of operadic algebras on $\Ch_{\mathbb{k}}$.} In particular, its underlying non-symmetric sequence (interpreted in $\Ch_{\mathbb{k}}$) is
	$$
	\widetilde{\Op}(n)\cong
	\left\{
	\begin{array}{ll}
		\mathbb{k}\widetilde{\upmu}^{i} & \mbox{\qquad if $n=2i+1$ with $i\geq 0$,} \\[3mm]
		0 & \mbox{\qquad otherwise,}
	\end{array}
	\right.
	$$
	where $\widetilde{\upmu}^0=\id$, $\widetilde{\upmu}^1=\widetilde{\upmu}$, and $\widetilde{\upmu}^{i+1}=\widetilde{\upmu}^i\circ_1\widetilde{\upmu}$.
	
	\begin{defn} Let $\Aalg\in \Alg_{\widetilde{\Op}}(\Ch_{\mathbb{k}})$ be an $\widetilde{\Op}$-algebra. Then, we define the complex 
		$$
		\begin{tikzcd}[ampersand replacement=\&]
			\mathcal{C}(\Aalg):= \mathsf{colim} \left(\Aalg^{\otimes 2}\right. \ar[r, leftarrow,  shift left=1,"\widetilde{\upmu}_{\Aalg}\otimes \,\id_{\Aalg}"] \ar[r, leftarrow,  shift right=1,"\id_{\Aalg}\otimes\, \widetilde{\upmu}_{\Aalg}"'] \&[8mm] \left. \Aalg^{\otimes 4}  \right)\cong \mathsf{coker}\big(\widetilde{\upmu}_{\Aalg}\otimes \id_{\Aalg}-\id_{\Aalg}\otimes \widetilde{\upmu}_{\Aalg}\big)
		\end{tikzcd}.
		$$
	\end{defn}
	
	\begin{lem}\label{lem:EnvelopingAlgebraOftAss3}
		Let $\widetilde{\Op}=\faktor{\mathsf{Free}_{\mathsf{n}\Upsigma}(\{\widetilde{\upmu}\}_{(3)})}{(\widetilde{\upmu}\circ_i\widetilde{\upmu}=\widetilde{\upmu}\circ_j\widetilde{\upmu})_{1\leq i<j\leq 3}}$ and $\Aalg\in \Alg_{\widetilde{\Op}}(\Ch_{\mathbb{k}})$. Then,
		\begin{align*}
			\widetilde{\Op}_{\Aalg}(1) 
			&\cong \mathbb{k}\oplus \big(\mathbb{k}\otimes\mathcal{C}(\Aalg)\big) \oplus \big(\mathcal{C}(\Aalg)\otimes \mathbb{k}\big)\oplus \mathcal{C}(\Aalg)^{\otimes 2}\oplus \Aalg^{\otimes 2}\\[2mm]
			&\cong \big(\mathbb{k}\oplus \mathcal{C}(\Aalg)\big)^{\otimes 2}\oplus \Aalg^{\otimes 2}.
		\end{align*}
	\end{lem}
	\begin{proof} As in Lemma \ref{lem: enveloping operad}, we use the explicit coequalizer description of the components of the enveloping operad associated to $(\widetilde{\Op},\Aalg)$. Notice that one has to make $\widetilde{\Op}$ a symmetric operad first to apply the same formulae. Doing so, the coequalizer becomes
		$$
		\begin{tikzcd}[ampersand replacement=\&]
			\displaystyle\widetilde{\Op}_{\Aalg}(1)\cong \mathsf{colim}  \left(\bigoplus_{\substack{p+q=2i\\p,q,i\geq 0}} \Aalg^{\otimes p}\otimes \Aalg^{\otimes q}\right. \ar[r, leftarrow, shift left=2,"\upd_0"] \ar[r, leftarrow, shift right=2,"\upd_1"'] \ar[r, lightgray] \&[-4mm] \displaystyle \left.\bigoplus_{\substack{p+q=2i\\p,q,i\geq 0,}} \big(\bigoplus_{j\geq 0}\Aalg^{\otimes 2j+1}\big)^{\otimes p} \otimes \big(\bigoplus_{l\geq 0}\Aalg^{\otimes 2l+1}\big)^{\otimes q}  \right)
		\end{tikzcd},
		$$
		where $\upd_0$ is induced by operadic composition,  $\widetilde{\upmu}^{a}\circ_k\widetilde{\upmu}^b=\widetilde{\upmu}^{a+b} $ for any $k$, and $\upd_1$ by the $\widetilde{\Op}$-algebra structure on $\Aalg$. Let us describe the canonical cocone from this diagram into our proposed expression for $\widetilde{\Op}_{\Aalg}(1)$:
		\begin{itemize}
			\item For $p=0$, $q=0$, we have the component $\mathbb{k}\cong \mathbb{k}\otimes\mathbb{k}$ which survives after taking the coequalizer (i.e.\ the projection is the identity into $\mathbb{k}$).
			\item For $p=0$, $q>0$, we must have $i>0$, and the projection becomes
			$$
			\Aalg^{\otimes 2i}=\Aalg^{\otimes 2(i-1)+1}\otimes \Aalg\xrightarrow{\quad\widetilde{\upmu}^{i-1}_{\Aalg}\otimes\,\id_{\Aalg} \quad} \Aalg\otimes \Aalg \xrightarrow{\quad \text{project}\quad}\mathcal{C}(\Aalg)=\mathbb{k}\otimes \mathcal{C}(\Aalg),
			$$
			where we can replace the first map with  $\id_{\Aalg}\otimes\, \widetilde{\upmu}^{i-1}_{\Aalg}$ by  definition of $\mathcal{C}(\Aalg)$. The symmetric case $p>0$, $q=0$ is analogous, and yields the component $\mathcal{C}(\Aalg)\otimes \mathbb{k}$. 
			\item For $p,q>0$ and $p$ even, we must have $q$ even, and the projection is
			$$
			\Aalg^{\otimes 2n}\otimes \Aalg^{\otimes 2m}\xrightarrow{\quad(\widetilde{\upmu}^{n-1}_{\Aalg}\otimes\,\id_{\Aalg})\otimes\,(\widetilde{\upmu}^{m-1}_{\Aalg}\otimes\, \id_{\Aalg}) \quad} \Aalg^{\otimes 2}\otimes \Aalg^{\otimes 2} \xrightarrow{\quad \text{project}^{\otimes 2}\quad}\mathcal{C}(\Aalg)\otimes \mathcal{C}(\Aalg),
			$$
			where the first map can be replaced by, e.g., $(\id_{\Aalg}\otimes\, \widetilde{\upmu}^{n-1}_{\Aalg})\otimes (\widetilde{\upmu}^{m-1}_{\Aalg}\otimes \id_{\Aalg})$ as above.
			\item For $p,q>0$ and $p$ odd, we must have $q$ odd, and the projection simply becomes
			$$
			\Aalg^{\otimes 2n+1}\otimes \Aalg^{\otimes 2m+1}\xrightarrow{\quad\widetilde{\upmu}^{n}_{\Aalg}\otimes\,\widetilde{\upmu}^{m}_{\Aalg} \quad} \Aalg\otimes \Aalg.
			$$
		\end{itemize}
		Inspection on the previous coequalizer shows that this is actually the universal cocone.
	\end{proof}
	
	
	\begin{prop}\label{prop:counterexample lproperII} 
		Let $\widetilde{\Op}=\faktor{\mathsf{Free}_{\mathsf{n}\Upsigma}(\{\widetilde{\upmu}\}_{(3)})}{(\widetilde{\upmu}\circ_i\widetilde{\upmu}=\widetilde{\upmu}\circ_j\widetilde{\upmu})_{1\leq i<j\leq 3}}$. Then, the projective model structure on the category of $\widetilde{\Op}$-algebras   $\Alg_{\widetilde{\Op}}(\Ch_{\mathbb{k}})$ is not left proper.
	\end{prop}
	\begin{proof}
		Following the same strategy as in Proposition \ref{prop:counterexample lproper}, we just need to construct a quasi-isomorphism of $\widetilde{\Op}$-algebras $\upphi\colon \Aalg\longrightarrow \Balg$ such that $\bigoplus_{n\geq 0}\widetilde{\Op}_{\Aalg}(n)_{\Upsigma_n}\longrightarrow \bigoplus_{n\geq 0}\widetilde{\Op}_{\Balg}(n)_{\Upsigma_n} $ fails to be a quasi-isomorphism (recall that $\widetilde{\Op}$ has to be symmetrized in order to use the same formulas as for $\Op$ in the cited result). Since we have just computed the unary operations of the enveloping operad associated to $\widetilde{\Op}$ in Lemma \ref{lem:EnvelopingAlgebraOftAss3}, we will content ourselves with checking that 
		$$
		\widetilde{\Op}_{\Aalg}(1)\cong \big(\mathbb{k}\oplus \mathcal{C}(\Aalg)\big)^{\otimes 2}\oplus \Aalg^{\otimes 2} \xrightarrow{\;\;(\id_{\mathbb{k}}\otimes\, \mathcal{C}(\upphi))^{\otimes 2}\otimes \,\upphi^{\otimes 2}\;\;} \big(\mathbb{k}\oplus \mathcal{C}(\Balg)\big)^{\otimes 2}\oplus \Balg^{\otimes 2}\cong \widetilde{\Op}_{\Balg}(1)
		$$
		is not a quasi-isomorphism, which is in any case enough for our purposes. In fact, we will show that $\mathcal{C}(\upphi)\colon \mathcal{C}(\Aalg)\longrightarrow \mathcal{C}(\Balg)$ is not a quasi-isomorphism.
		
		Consider the cochain complex
		$$
		\Aalg=\big(\cdots 0 \xrightarrow{\;\;\phantom{\upd}\;\;} \mathbb{k}z \xrightarrow{\;\;\upd\;\;}\mathbb{k}y\oplus\, \mathbb{k}y^3\xrightarrow{\;\;\phantom{\upd}\;\;}0 \cdots\big)
		$$
		with $\vert z\vert =-1$ and $\upd(z)=y^3$, and set the only non-trivial product to be $\widetilde{\upmu}(y,y,y)=y^3$. It is straightforward to check that $\Aalg$ is a well-defined $\widetilde{\Op}$-algebra. Let $\Balg=\Aalg/\langle z,y^3\rangle\cong \mathbb{k}y$ be the quotient $\widetilde{\Op}$-algebra with trivial $\widetilde{\Op}$-action. With these definitions, it is clear that the canonical projection $\Aalg\longrightarrow \Balg $ is a quasi-isomorphism. Let us compute the image under $\mathcal{C}$ of this map:
		\begin{itemize}
			\item Since $\widetilde{\upmu}_{\Balg}=0$, we obtain $\mathcal{C}(\Balg)=\Balg^{\otimes 2}\cong \mathbb{k}$.
			\item Denoting $\Aalg=\mathbb{k}\{y,y^3,z\}$, we have 
			$$
			\mathcal{C}(\Aalg)=\mathsf{coker}(\widetilde{\upmu}_{\Aalg}\otimes \id_{\Aalg}-\id_{\Aalg}\otimes\, \widetilde{\upmu}_{\Aalg})=\mathbb{k}\left\{\left[y\otimes y\right], \left[y\otimes y^3\right], \left[y\otimes z\right], \left[z\otimes y\right] , \left[z\otimes z\right]\right\},
			$$
			where the only non-vanishing differentials (of generators) are
			$
			\upd(\left[y\otimes z\right])=\left[y\otimes y^3\right]$ and $\upd(\left[z\otimes y\right])=\left[y\otimes y^3\right].
			$
			Thus,
			$$
			\mathsf{H}^*\mathcal{C}(\Aalg)\cong
			\left\{
			\begin{array}{ll}
				\mathbb{k}\left[y\otimes y\right] & \mbox{\qquad if $\ast=0$,} \\[3mm]
				\mathbb{k}\left[y\otimes z-z\otimes y\right] & \mbox{\qquad if $\ast=-1$,}\\[3mm]
				\mathbb{k}\left[z\otimes z\right] & \mbox{\qquad if $\ast=-2$,}\\[3mm]
				0 & \mbox{\qquad otherwise},
			\end{array}
			\right.
			$$
			and so $\mathcal{C}(\Aalg)\longrightarrow\mathcal{C}(\Balg)$ cannot be a quasi-isomorphism, concluding the claim.	
		\end{itemize}
	\end{proof}

	\begin{rem} P.\ Tamaroff suggested (in personal communication) that $\mathsf{Perm}$, the operad for (right) permutative algebras, might be another $\mathsf{Set}$-valued operad which is quadratic Koszul and yet doesn't have a left proper model category of algebras in $\Ch_{\mathbb{k}}$. Although we have not checked his claim, the strategy should be similar to the one presented above: one has to show that the functor $\Aalg\mapsto \mathsf{Perm}_{\Aalg}(1)$ does not preserve quasi-isomorphisms. 
	\end{rem}
	
	\begin{rem} The results in \cite[\textsection 4.3]{carmona_enveloping_2024} suggest the  still open possibility that the model structure $\Qft(\Crect^{\perp})$ could be left proper, although proving such a statement requires control over the enveloping operad $(\Op_{\Crect}^{\perp})_{\Aalg}$ for any $\Aalg\in \QFT(\Crect^{\perp})$. More concretely, we need to understand whether $(\Op_{\Crect}^{\perp})_{\Aalg}\to(\Op_{\Crect}^{\perp})_{\Balg}$ is a weak equivalence of $\Ch_{\mathbb{k}}$-operads when $\Aalg\to \Balg$ is a weak equivalence of AQFTs. The previous counterexamples build on the fact that this property is not automatically satisfied for all $\Ch_{\mathbb{k}}$-operads, and so whether $\Qft(\Crect^{\perp})$ is left proper or not may depend on the orthogonal category $\Crect^{\perp}$. For instance, when $\perp$ is the initial or terminal orthogonality relation on a non-empty category $\Crect$, $\Qft(\Crect^{\perp})$ \emph{is} left proper. 
	\end{rem}
	
\end{subsection}

\begin{paragraph}{Acknowledgments.}
The author would like to thank M.\ Markl, F.\ Muro, and P.\ Tamaroff for helpful and stimulating discussions. 
The author was supported by the Max-Planck Institute for Mathematics in the Sciences and partially supported by the project PID2024-157173NB-I00 funded by MCIN/AEI/10.13039/501100011033 and by FEDER, UE.
\end{paragraph}
\end{section}

\bibliography{Bibliography}

@article{ayala_factorization_2015,
	title = {Factorization homology of topological manifolds},
	volume = {8},
	issn = {1753-8416, 1753-8424},
	url = {http://doi.wiley.com/10.1112/jtopol/jtv028},
	doi = {10.1112/jtopol/jtv028},
	abstract = {Factorization homology theories of topological manifolds, after Beilinson, Drinfeld and Lurie, are homology-type theories for topological n-manifolds whose coeﬃcient systems are n-disk algebras or n-disk stacks. In this work we prove a precise formulation of this idea, giving an axiomatic characterization of factorization homology with coeﬃcients in n-disk algebras in terms of a generalization of the Eilenberg–Steenrod axioms for singular homology. Each such theory gives rise to a kind of topological quantum ﬁeld theory, for which observables can be deﬁned on general n-manifolds and not only closed n-manifolds. For n-disk algebra coeﬃcients, these ﬁeld theories are characterized by the condition that global observables are determined by local observables in a strong sense. Our axiomatic point of view has a number of applications. In particular, we give a concise proof of the nonabelian Poincar´e duality of Salvatore, Segal, and Lurie. We present some essential classes of calculations of factorization homology, such as for free n-disk algebras and enveloping algebras of Lie algebras, several of which have a conceptual meaning in terms of Koszul duality.},
	language = {en},
	number = {4},
	urldate = {2018-10-31},
	journal = {Journal of Topology},
	author = {Ayala, David and Francis, John},
	month = dec,
	year = {2015},
	pages = {1045--1084},
	file = {Ayala y Francis - 2015 - Factorization homology of topological manifolds.pdf:C\:\\Users\\pangu\\Zotero\\storage\\6HUDRU8E\\Ayala_Francis_2012_Factorization homology of topological manifolds.pdf:application/pdf;Full Text:C\:\\Users\\pangu\\Zotero\\storage\\GYRMVMHH\\Ayala and Francis - 2015 - Factorization homology of topological manifolds.pdf:application/pdf},
}

@article{white_bousfield_2018,
	title = {Bousfield localization and algebras over colored operads},
	volume = {26},
	issn = {0927-2852},
	url = {https://mathscinet.ams.org/mathscinet-getitem?mr=3749666},
	doi = {10.1007/s10485-017-9489-8},
	number = {1},
	urldate = {2019-11-11},
	journal = {Applied Categorical Structures. A Journal Devoted to Applications of Categorical Methods in Algebra, Analysis, Order, Topology and Computer Science},
	author = {White, David and Yau, Donald},
	year = {2018},
	mrnumber = {3749666},
	pages = {153--203},
	file = {MathSciNet Snapshot:C\:\\Users\\pangu\\Zotero\\storage\\6V6X5GAJ\\publdoc.html:text/html;white2017.pdf:C\:\\Users\\pangu\\Zotero\\storage\\EWA4ZYQP\\white2017.pdf:application/pdf},
}

@article{horel_factorization_2017,
	title = {Factorization homology and calculus a la {Kontsevich} {Soibelman}},
	volume = {11},
	issn = {1661-6952},
	url = {https://www.ems-ph.org/journals/show_abstract.php?issn=1661-6952&vol=11&iss=2&rank=8},
	doi = {10.4171/JNCG/11-2-8},
	number = {2},
	urldate = {2019-04-02},
	journal = {Journal of Noncommutative Geometry},
	author = {Horel, Geoffroy},
	month = jun,
	year = {2017},
	pages = {703--740},
	file = {Full Text:C\:\\Users\\pangu\\Zotero\\storage\\L78LKVX3\\Horel - 2017 - Factorization homology and calculus à la Kontsevic.pdf:application/pdf;Full Text PDF:C\:\\Users\\pangu\\Zotero\\storage\\JFIJXMEH\\Horel - 2017 - Factorization homology and calculus à la Kontsevic.pdf:application/pdf;Snapshot:C\:\\Users\\pangu\\Zotero\\storage\\MESUMQNQ\\show_abstract.html:text/html},
}

@book{fresse_homotopy_2017,
	series = {Mathematical {Surveys} and {Monographs}},
	title = {Homotopy of operads and {Grothendieck}-{Teichmuller} groups. {Part} 1},
	volume = {217},
	isbn = {978-1-4704-3481-6},
	url = {https://mathscinet.ams.org/mathscinet-getitem?mr=3643404},
	urldate = {2020-12-08},
	publisher = {American Mathematical Society, Providence, RI},
	author = {Fresse, Benoit},
	year = {2017},
	mrnumber = {3643404},
	doi = {10.1090/surv/217.1},
	annote = {The algebraic theory and its topological background},
	file = {MathSciNet Snapshot:C\:\\Users\\pangu\\Zotero\\storage\\664S33XG\\publdoc.html:text/html},
}

@article{harper_homotopy_2010,
	title = {Homotopy theory of modules over operads and non-sigma operads in monoidal model categories},
	volume = {214},
	issn = {0022-4049},
	url = {http://www.ams.org/mathscinet-getitem?mr=2593672},
	doi = {10/c7zd58},
	number = {8},
	urldate = {2016-11-08},
	journal = {Journal of Pure and Applied Algebra},
	author = {Harper, John E.},
	year = {2010},
	mrnumber = {2593672},
	pages = {1407--1434},
	file = {Full Text:C\:\\Users\\pangu\\Zotero\\storage\\CDLAVPM3\\Harper - 2010 - Homotopy theory of modules over operads and non-∑ .pdf:application/pdf},
}

@book{fresse_homotopy_2017-1,
	series = {Mathematical {Surveys} and {Monographs}},
	title = {Homotopy of operads and {Grothendieck}-{Teichmuller} groups. {Part} 2},
	volume = {217},
	isbn = {978-1-4704-3482-3},
	url = {https://mathscinet.ams.org/mathscinet-getitem?mr=3616816},
	urldate = {2020-12-08},
	publisher = {American Mathematical Society, Providence, RI},
	author = {Fresse, Benoit},
	year = {2017},
	mrnumber = {3616816},
	doi = {10.1090/surv/217.2},
	annote = {The applications of (rational) homotopy theory methods},
	file = {MathSciNet Snapshot:C\:\\Users\\pangu\\Zotero\\storage\\5H62V5FH\\publdoc.html:text/html;Submitted Version:C\:\\Users\\pangu\\Zotero\\storage\\BQKTPXB3\\Fresse - 2017 - Homotopy of operads and Grothendieck-Teichmüller g.pdf:application/pdf},
}

@article{barwick_left_2010,
	title = {On left and right model categories and left and right {Bousfield} localizations},
	volume = {12},
	issn = {1532-0073},
	url = {https://mathscinet.ams.org/mathscinet-getitem?mr=2771591},
	number = {2},
	urldate = {2021-01-28},
	journal = {Homology, Homotopy and Applications},
	author = {Barwick, Clark},
	year = {2010},
	mrnumber = {2771591},
	pages = {245--320},
	file = {MathSciNet Snapshot:C\:\\Users\\pangu\\Zotero\\storage\\8A7FGTHF\\publdoc.html:text/html},
}

@article{muro_correction_2017,
	title = {Correction to the articles {Homotopy} theory of nonsymmetric operads, {I}--{II} [ {MR2821434}; {MR3158759}]},
	volume = {17},
	issn = {1472-2747},
	url = {https://mathscinet.ams.org/mathscinet-getitem?mr=3709662},
	doi = {10.2140/agt.2017.17.3837},
	number = {6},
	urldate = {2021-01-28},
	journal = {Algebraic \& Geometric Topology},
	author = {Muro, Fernando},
	year = {2017},
	mrnumber = {3709662},
	pages = {3837--3852},
	file = {MathSciNet Snapshot:C\:\\Users\\pangu\\Zotero\\storage\\TQPUKDKK\\publdoc.html:text/html;Versión enviada:C\:\\Users\\pangu\\Zotero\\storage\\DU8PU47V\\Muro - 2017 - Correction to the articles “Homotopy theory of non.pdf:application/pdf},
}

@article{toen_homotopy_2007,
	title = {The homotopy theory of dg-categories and derived {Morita} theory},
	volume = {167},
	issn = {0020-9910},
	url = {https://mathscinet.ams.org/mathscinet-getitem?mr=2276263},
	doi = {10.1007/s00222-006-0025-y},
	number = {3},
	urldate = {2021-04-05},
	journal = {Inventiones Mathematicae},
	author = {Toen, Bertrand},
	year = {2007},
	mrnumber = {2276263},
	pages = {615--667},
	file = {MathSciNet Snapshot:C\:\\Users\\pangu\\Zotero\\storage\\MXQLYLJX\\publdoc.html:text/html;Springer Full Text PDF:C\:\\Users\\pangu\\Zotero\\storage\\A9HGFIV4\\Toën - 2007 - The homotopy theory of dg-categories and derived M.pdf:application/pdf;Versión enviada:C\:\\Users\\pangu\\Zotero\\storage\\Y623JADX\\Toën - 2007 - The homotopy theory of \$dg\$-categories and derived.pdf:application/pdf},
}

@article{benini_linear_2020,
	title = {Linear {Yang}-{Mills} theory as a homotopy {AQFT}},
	volume = {378},
	issn = {0010-3616},
	url = {https://mathscinet.ams.org/mathscinet-getitem?mr=4124985},
	doi = {10.1007/s00220-019-03640-z},
	number = {1},
	urldate = {2021-04-28},
	journal = {Communications in Mathematical Physics},
	author = {Benini, Marco and Bruinsma, Simen and Schenkel, Alexander},
	year = {2020},
	mrnumber = {4124985},
	pages = {185--218},
	file = {MathSciNet Snapshot:C\:\\Users\\pangu\\Zotero\\storage\\QGMNDDPV\\publdoc.html:text/html;Texto completo:C\:\\Users\\pangu\\Zotero\\storage\\7ETDU34C\\Benini et al. - 2020 - Linear Yang-Mills theory as a homotopy AQFT.pdf:application/pdf},
}

@article{benini_model-independent_2020,
	title = {Model-independent comparison between factorization algebras and algebraic quantum field theory on {Lorentzian} manifolds},
	volume = {377},
	issn = {0010-3616},
	url = {https://mathscinet.ams.org/mathscinet-getitem?mr=4115011},
	doi = {10.1007/s00220-019-03561-x},
	number = {2},
	urldate = {2021-04-28},
	journal = {Communications in Mathematical Physics},
	author = {Benini, Marco and Perin, Marco and Schenkel, Alexander},
	year = {2020},
	mrnumber = {4115011},
	keywords = {Mathematics - Category Theory, Mathematical Physics, High Energy Physics - Theory, 81Txx, 53C50, Mathematics - Differential Geometry},
	pages = {971--997},
	annote = {Comment: 24 pages, 3 figures - Final version to appear in Communications in Mathematical Physics},
	file = {arXiv Fulltext PDF:C\:\\Users\\pangu\\Zotero\\storage\\9RETLVRI\\Benini et al. - 2020 - Model-independent comparison between factorization.pdf:application/pdf;arXiv.org Snapshot:C\:\\Users\\pangu\\Zotero\\storage\\H3CEPFI9\\1903.html:text/html;MathSciNet Snapshot:C\:\\Users\\pangu\\Zotero\\storage\\A3N3NXRJ\\publdoc.html:text/html;Texto completo:C\:\\Users\\pangu\\Zotero\\storage\\GN5Y8TX7\\Benini et al. - 2020 - Model-independent comparison between factorization.pdf:application/pdf},
}

@article{benini_higher_2019,
	title = {Higher structures in algebraic quantum field theory},
	volume = {67},
	issn = {0015-8208},
	url = {https://mathscinet.ams.org/mathscinet-getitem?mr=4016021},
	doi = {10.1002/prop.201910015},
	number = {8-9, Special issue: Proceedings of the LMS/EPSRC Durham Symposium on Higher Structures in M-Theory},
	urldate = {2021-04-28},
	journal = {Fortschritte der Physik. Progress of Physics},
	author = {Benini, Marco and Schenkel, Alexander},
	year = {2019},
	mrnumber = {4016021},
	pages = {1910015, 24},
	file = {MathSciNet Snapshot:C\:\\Users\\pangu\\Zotero\\storage\\5EE4E83C\\publdoc.html:text/html;Versión aceptada:C\:\\Users\\pangu\\Zotero\\storage\\4PCWW2EV\\Benini y Schenkel - 2019 - Higher structures in algebraic quantum field theor.pdf:application/pdf},
}

@article{benini_algebraic_2018,
	title = {Algebraic quantum field theory on spacetimes with timelike boundary},
	volume = {19},
	issn = {1424-0637},
	url = {https://mathscinet.ams.org/mathscinet-getitem?mr=3830218},
	doi = {10.1007/s00023-018-0687-1},
	number = {8},
	urldate = {2021-04-28},
	journal = {Annales Henri Poincare. A Journal of Theoretical and Mathematical Physics},
	author = {Benini, Marco and Dappiaggi, Claudio and Schenkel, Alexander},
	year = {2018},
	mrnumber = {3830218},
	pages = {2401--2433},
	file = {Texto completo:C\:\\Users\\pangu\\Zotero\\storage\\4776NE7Z\\Benini et al. - 2018 - Algebraic quantum field theory on spacetimes with .pdf:application/pdf},
}

@article{benini_homotopy_2015,
	title = {Homotopy colimits and global observables in abelian gauge theory},
	volume = {105},
	issn = {0377-9017},
	url = {https://mathscinet.ams.org/mathscinet-getitem?mr=3376591},
	doi = {10.1007/s11005-015-0765-y},
	number = {9},
	urldate = {2021-04-28},
	journal = {Letters in Mathematical Physics},
	author = {Benini, Marco and Schenkel, Alexander and Szabo, Richard J.},
	year = {2015},
	mrnumber = {3376591},
	pages = {1193--1222},
	file = {Versión aceptada:C\:\\Users\\pangu\\Zotero\\storage\\DLTU2X6F\\Benini et al. - 2015 - Homotopy colimits and global observables in abelia.pdf:application/pdf},
}

@book{weiss_operads_2011,
	series = {Proc. {Sympos}. {Pure} {Math}.},
	title = {From operads to dendroidal sets},
	volume = {83},
	url = {https://mathscinet.ams.org/mathscinet-getitem?mr=2742425},
	urldate = {2021-04-30},
	publisher = {Amer. Math. Soc., Providence, RI},
	author = {Weiss, Ittay},
	year = {2011},
	mrnumber = {2742425},
	doi = {10.1090/pspum/083/2742425},
	file = {MathSciNet Snapshot:C\:\\Users\\pangu\\Zotero\\storage\\ZEGY448B\\publdoc.html:text/html;Versión enviada:C\:\\Users\\pangu\\Zotero\\storage\\VTK2ZFW3\\Weiss - 2011 - From operads to dendroidal sets.pdf:application/pdf},
}

@book{boardman_homotopy_1973,
	address = {Berlin, Heidelberg},
	series = {Lecture {Notes} in {Mathematics}},
	title = {Homotopy {Invariant} {Algebraic} {Structures} on {Topological} {Spaces}},
	volume = {347},
	isbn = {978-3-540-06479-4 978-3-540-37799-3},
	url = {http://link.springer.com/10.1007/BFb0068547},
	urldate = {2021-04-30},
	publisher = {Springer Berlin Heidelberg},
	author = {Boardman, J. M. and Vogt, R. M.},
	year = {1973},
	doi = {10.1007/BFb0068547},
}

@book{hirschhorn_model_2003,
	series = {Mathematical {Surveys} and {Monographs}},
	title = {Model categories and their localizations},
	volume = {99},
	isbn = {978-0-8218-3279-0},
	url = {https://mathscinet.ams.org/mathscinet-getitem?mr=1944041},
	urldate = {2021-06-18},
	publisher = {American Mathematical Society, Providence, RI},
	author = {Hirschhorn, Philip S.},
	year = {2003},
	mrnumber = {1944041},
	doi = {10.1090/surv/099},
	file = {MathSciNet Snapshot:C\:\\Users\\pangu\\Zotero\\storage\\H9PTMRFQ\\publdoc.html:text/html},
}

@book{markl_operads_2002,
	series = {Mathematical {Surveys} and {Monographs}},
	title = {Operads in algebra, topology and physics},
	volume = {96},
	isbn = {978-0-8218-2134-3},
	url = {https://mathscinet.ams.org/mathscinet-getitem?mr=1898414},
	urldate = {2021-06-21},
	publisher = {American Mathematical Society, Providence, RI},
	author = {Markl, Martin and Shnider, Steve and Stasheff, Jim},
	year = {2002},
	mrnumber = {1898414},
	doi = {10.1090/surv/096},
	file = {MathSciNet Snapshot:C\:\\Users\\pangu\\Zotero\\storage\\NSEZRQBG\\publdoc.html:text/html},
}

@book{stolz_supersymmetric_2011,
	series = {Proc. {Sympos}. {Pure} {Math}.},
	title = {Supersymmetric field theories and generalized cohomology},
	volume = {83},
	url = {https://mathscinet.ams.org/mathscinet-getitem?mr=2742432},
	urldate = {2021-06-21},
	publisher = {Amer. Math. Soc., Providence, RI},
	author = {Stolz, Stephan and Teichner, Peter},
	year = {2011},
	mrnumber = {2742432},
	doi = {10.1090/pspum/083/2742432},
	file = {Versión enviada:C\:\\Users\\pangu\\Zotero\\storage\\H9CV99XM\\Stolz y Teichner - 2011 - Supersymmetric field theories and generalized coho.pdf:application/pdf},
}

@article{kontsevich_deformation_2003,
	title = {Deformation quantization of {Poisson} manifolds},
	volume = {66},
	issn = {0377-9017},
	url = {https://mathscinet.ams.org/mathscinet-getitem?mr=2062626},
	doi = {10.1023/B:MATH.0000027508.00421.bf},
	number = {3},
	urldate = {2021-06-21},
	journal = {Letters in Mathematical Physics},
	author = {Kontsevich, Maxim},
	year = {2003},
	mrnumber = {2062626},
	pages = {157--216},
	file = {Versión enviada:C\:\\Users\\pangu\\Zotero\\storage\\RDSUMDNV\\Kontsevich - 2003 - Deformation quantization of Poisson manifolds.pdf:application/pdf},
}

@article{shulman_parametrized_2008,
	title = {Parametrized spaces model locally constant homotopy sheaves},
	volume = {155},
	issn = {0166-8641},
	url = {https://mathscinet.ams.org/mathscinet-getitem?mr=2380927},
	doi = {10.1016/j.topol.2007.11.001},
	number = {5},
	urldate = {2021-07-19},
	journal = {Topology and its Applications},
	author = {Shulman, Michael A.},
	year = {2008},
	mrnumber = {2380927},
	pages = {412--432},
	file = {MathSciNet Snapshot:C\:\\Users\\pangu\\Zotero\\storage\\ETG6PR38\\publdoc.html:text/html;Versión enviada:C\:\\Users\\pangu\\Zotero\\storage\\XMJ4DC6R\\Shulman - 2008 - Parametrized spaces model locally constant homotop.pdf:application/pdf},
}

@article{barwick_relative_2012,
	title = {Relative categories: {Another} model for the homotopy theory of homotopy theories},
	volume = {23},
	issn = {00193577},
	shorttitle = {Relative categories},
	url = {https://linkinghub.elsevier.com/retrieve/pii/S0019357711000620},
	doi = {10.1016/j.indag.2011.10.002},
	language = {en},
	number = {1-2},
	urldate = {2023-05-11},
	journal = {Indagationes Mathematicae},
	author = {Barwick, C. and Kan, D. M.},
	month = mar,
	year = {2012},
	pages = {42--68},
	file = {MathSciNet Snapshot:C\:\\Users\\pangu\\Zotero\\storage\\DGRETYF3\\publdoc.html:text/html;Versión enviada:C\:\\Users\\pangu\\Zotero\\storage\\5LM44CXI\\Barwick y Kan - 2012 - Relative categories Another model for the homotop.pdf:application/pdf},
}

@article{dwyer_function_1980,
	title = {Function complexes in homotopical algebra},
	volume = {19},
	issn = {00409383},
	url = {https://linkinghub.elsevier.com/retrieve/pii/0040938380900257},
	doi = {10.1016/0040-9383(80)90025-7},
	language = {en},
	number = {4},
	urldate = {2023-05-11},
	journal = {Topology},
	author = {Dwyer, W. G. and Kan, D. M.},
	year = {1980},
	keywords = {Mathematics - Algebraic Topology, Mathematics - Quantum Algebra},
	pages = {427--440},
	annote = {Comment: arXiv admin note: text overlap with arXiv:1503.02465},
	file = {Fernandez-Valencia - 2015 - On the structure of open equivariant topological c.pdf:C\:\\Users\\pangu\\Zotero\\storage\\VVUSL7WY\\Fernandez-Valencia - 2015 - On the structure of open equivariant topological c.pdf:application/pdf;MathSciNet Snapshot:C\:\\Users\\pangu\\Zotero\\storage\\A3JC7DHH\\publdoc.html:text/html;MathSciNet Snapshot:C\:\\Users\\pangu\\Zotero\\storage\\7BZGSKXL\\publdoc.html:text/html},
}

@article{benini_quantum_2017,
	title = {Quantum {Field} {Theories} on {Categories} {Fibered} in {Groupoids}},
	volume = {356},
	issn = {0010-3616, 1432-0916},
	url = {http://link.springer.com/10.1007/s00220-017-2986-7},
	doi = {10.1007/s00220-017-2986-7},
	language = {en},
	number = {1},
	urldate = {2023-05-11},
	journal = {Communications in Mathematical Physics},
	author = {Benini, Marco and Schenkel, Alexander},
	month = nov,
	year = {2017},
	pages = {19--64},
	file = {Texto completo:C\:\\Users\\pangu\\Zotero\\storage\\JGGJH92C\\Benini y Schenkel - 2017 - Quantum Field Theories on Categories Fibered in Gr.pdf:application/pdf},
}

@article{bruinsma_algebraic_2019,
	title = {Algebraic field theory operads and linear quantization},
	volume = {109},
	issn = {0377-9017, 1573-0530},
	url = {http://link.springer.com/10.1007/s11005-019-01195-7},
	doi = {10.1007/s11005-019-01195-7},
	language = {en},
	number = {11},
	urldate = {2023-05-11},
	journal = {Letters in Mathematical Physics},
	author = {Bruinsma, Simen and Schenkel, Alexander},
	month = nov,
	year = {2019},
	keywords = {Mathematics - Category Theory, Mathematical Physics, High Energy Physics - Theory, 81Txx, 18D50, 18G55},
	pages = {2531--2570},
	annote = {Comment: v3: 32 pages - Final version to appear in Letters in Mathematical Physics},
	file = {Bruinsma y Schenkel - 2019 - Algebraic field theory operads and linear quantiza.pdf:C\:\\Users\\pangu\\Zotero\\storage\\K5U2YQP9\\Bruinsma y Schenkel - 2019 - Algebraic field theory operads and linear quantiza.pdf:application/pdf;Texto completo:C\:\\Users\\pangu\\Zotero\\storage\\DXZ7HDTR\\Bruinsma y Schenkel - 2019 - Algebraic field theory operads and linear quantiza.pdf:application/pdf},
}

@article{bruinsma_coloring_2019,
	title = {Coloring operads for algebraic field theory},
	volume = {67},
	issn = {0015-8208},
	url = {https://mathscinet.ams.org/mathscinet-getitem?mr=4016010},
	doi = {10.1002/prop.201910004},
	number = {8-9, Special issue: Proceedings of the LMS/EPSRC Durham Symposium on Higher Structures in M-Theory},
	urldate = {2023-05-12},
	journal = {Fortschritte der Physik. Progress of Physics},
	author = {Bruinsma, Simen},
	year = {2019},
	mrnumber = {4016010},
	keywords = {Mathematics - Algebraic Topology, High Energy Physics - Theory},
	pages = {1910004, 12},
	annote = {Comment: 13 pages, Contribution to Proceedings of LMS/EPSRC Durham Symposium Higher Structures in M-Theory, August 2018},
	file = {MathSciNet Snapshot:C\:\\Users\\pangu\\Zotero\\storage\\Z45X5X9R\\publdoc.html:text/html;Versión enviada:C\:\\Users\\pangu\\Zotero\\storage\\323SQNZ4\\Bruinsma - 2019 - Coloring operads for algebraic field theory.pdf:application/pdf},
}

@article{batanin_left_2024,
	title = {Left {Bousfield} localization without left properness},
	volume = {228},
	issn = {0022-4049,1873-1376},
	doi = {10.1016/j.jpaa.2023.107570},
	number = {6},
	journal = {Journal of Pure and Applied Algebra},
	author = {Batanin, Michael and White, David},
	year = {2024},
	mrnumber = {4670515},
	pages = {Paper No. 107570, 23},
	file = {Versión enviada:C\:\\Users\\pangu\\Zotero\\storage\\28L9TUB2\\Batanin y White - 2024 - Left Bousfield localization without left propernes.pdf:application/pdf},
}

@book{may_more_2012,
	series = {Chicago {Lectures} in {Mathematics}},
	title = {More concise algebraic topology},
	isbn = {978-0-226-51178-8},
	publisher = {University of Chicago Press, Chicago, IL},
	author = {May, J. P. and Ponto, K.},
	year = {2012},
	mrnumber = {2884233},
	annote = {Localization, completion, and model categories},
	file = {Snapshot:C\:\\Users\\pangu\\Zotero\\storage\\6RALF8P7\\article.html:text/html},
}

@book{fresse_modules_2009,
	series = {Lecture {Notes} in {Mathematics}},
	title = {Modules over operads and functors},
	volume = {1967},
	isbn = {978-3-540-89055-3},
	publisher = {Springer-Verlag, Berlin},
	author = {Fresse, Benoit},
	year = {2009},
	mrnumber = {2494775},
	doi = {10.1007/978-3-540-89056-0},
	file = {Snapshot:C\:\\Users\\pangu\\Zotero\\storage\\DNDLB6AR\\article.html:text/html},
}

@article{carmona_new_2023,
	title = {New model category structures for algebraic quantum field theory},
	volume = {113},
	issn = {0377-9017,1573-0530},
	doi = {10.1007/s11005-023-01644-4},
	number = {2},
	journal = {Letters in Mathematical Physics},
	author = {Carmona, Victor},
	year = {2023},
	mrnumber = {4562209},
	pages = {Paper No. 33, 38},
	file = {Snapshot:C\:\\Users\\pangu\\Zotero\\storage\\N823M6YH\\article.html:text/html},
}

@misc{carmona_simple_2024,
	title = {A simple characterization of {Quillen} adjunctions},
	url = {http://arxiv.org/abs/2406.02194},
	doi = {10.48550/arXiv.2406.02194},
	abstract = {We observe that an enriched right adjoint functor between model categories which preserves acyclic fibrations and fibrant objects is quite generically a right Quillen functor.},
	urldate = {2025-02-14},
	publisher = {arXiv},
	author = {Carmona, Victor},
	month = jun,
	year = {2024},
	note = {arXiv:2406.02194 [math]},
	keywords = {Mathematics - Algebraic Topology, Mathematics - Category Theory},
	annote = {Comment: Comments are welcome!},
	file = {Preprint PDF:C\:\\Users\\pangu\\Zotero\\storage\\LZY43YAQ\\Carmona - 2024 - A simple characterization of Quillen adjunctions.pdf:application/pdf;Snapshot:C\:\\Users\\pangu\\Zotero\\storage\\23R39EGZ\\2406.html:text/html},
}

@misc{benini_equivalence_2024,
	title = {On the equivalence of {AQFTs} and prefactorization algebras},
	url = {http://arxiv.org/abs/2412.07318},
	doi = {10.48550/arXiv.2412.07318},
	abstract = {This paper revisits the equivalence problem between algebraic quantum field theories and prefactorization algebras defined over globally hyperbolic Lorentzian manifolds. We develop a radically new approach whose main innovative features are 1.) a structural implementation of the additivity property used in earlier approaches and 2.) a reduction of the global equivalence problem to a family of simpler spacetime-wise problems. When applied to the case where the target category is a symmetric monoidal \$1\$-category, this yields a generalization of the equivalence theorem from [Commun. Math. Phys. 377, 971 (2019)]. In the case where the target is the symmetric monoidal \${\textbackslash}infty\$-category of cochain complexes, we obtain a reduction of the global \${\textbackslash}infty\$-categorical equivalence problem to simpler, but still challenging, spacetime-wise problems. The latter would be solved by showing that certain functors between \$1\$-categories exhibit \${\textbackslash}infty\$-localizations, however the available detection criteria are inconclusive in our case.},
	urldate = {2025-02-14},
	publisher = {arXiv},
	author = {Benini, Marco and Carmona, Victor and Grant-Stuart, Alastair and Schenkel, Alexander},
	month = dec,
	year = {2024},
	note = {arXiv:2412.07318 [math-ph]},
	keywords = {High Energy Physics - Theory, Mathematical Physics, Mathematics - Algebraic Topology, Mathematics - Mathematical Physics, Mathematics - Quantum Algebra},
	annote = {Comment: 42 pages},
	file = {Preprint PDF:C\:\\Users\\pangu\\Zotero\\storage\\TZSKUCSP\\Benini et al. - 2024 - On the equivalence of AQFTs and prefactorization a.pdf:application/pdf;Snapshot:C\:\\Users\\pangu\\Zotero\\storage\\MYCBPKAG\\2412.html:text/html},
}

@misc{carmona_enveloping_2024,
	title = {Enveloping operads and applications},
	url = {http://arxiv.org/abs/2407.18190},
	doi = {10.48550/arXiv.2407.18190},
	abstract = {This work addresses the homotopical analysis of enveloping operads in a general cofibrantly generated symmetric monoidal model category. We show the potential of this analysis by obtaining, in a uniform way, several central results regarding the homotopy theory of operadic algebras.},
	urldate = {2025-02-14},
	publisher = {arXiv},
	author = {Carmona, Victor},
	month = jul,
	year = {2024},
	note = {arXiv:2407.18190 [math]},
	keywords = {Mathematics - Algebraic Topology, Mathematics - Category Theory, Mathematics - K-Theory and Homology},
	annote = {Comment: Comments are welcome!},
	file = {Preprint PDF:C\:\\Users\\pangu\\Zotero\\storage\\ZBB4HZXC\\Carmona - 2024 - Enveloping operads and applications.pdf:application/pdf;Snapshot:C\:\\Users\\pangu\\Zotero\\storage\\SUMN5SGK\\2407.html:text/html},
}

@article {markl_Koszulness,
	AUTHOR = {Markl, Martin and Remm, Elisabeth},
	TITLE = {({N}on-){K}oszulness of operads for {$n$}-ary algebras,
	galgalim and other curiosities},
	JOURNAL = {J. Homotopy Relat. Struct.},
	FJOURNAL = {Journal of Homotopy and Related Structures},
	VOLUME = {10},
	YEAR = {2015},
	NUMBER = {4},
	PAGES = {939--969},
	ISSN = {2193-8407,1512-2891},
	MRCLASS = {18D50},
	MRNUMBER = {3423079},
	MRREVIEWER = {Anita\ Naolekar},
	DOI = {10.1007/s40062-014-0090-7},
	URL = {https://doi.org/10.1007/s40062-014-0090-7},
}

@article{benini_homotopy_2019,
	title = {Homotopy theory of algebraic quantum field theories},
	volume = {109},
	issn = {0377-9017},
	url = {https://mathscinet.ams.org/mathscinet-getitem?mr=3961369},
	doi = {10.1007/s11005-018-01151-x},
	number = {7},
	urldate = {2020-12-08},
	journal = {Letters in Mathematical Physics},
	author = {Benini, Marco and Schenkel, Alexander and Woike, Lukas},
	year = {2019},
	mrnumber = {3961369},
	pages = {1487--1532},
	file = {Full Text:C\:\\Users\\pangu\\Zotero\\storage\\34D7JW5F\\Benini et al. - 2019 - Homotopy theory of algebraic quantum field theorie.pdf:application/pdf;MathSciNet Snapshot:C\:\\Users\\pangu\\Zotero\\storage\\N4SNZLS6\\publdoc.html:text/html;MathSciNet Snapshot:C\:\\Users\\pangu\\Zotero\\storage\\6PGJ3EWQ\\publdoc.html:text/html;Texto completo:C\:\\Users\\pangu\\Zotero\\storage\\NDU4EWAS\\Benini et al. - 2019 - Homotopy theory of algebraic quantum field theorie.pdf:application/pdf},
}

@article{yau_homotopical_2019,
	title = {Homotopical {Quantum} {Field} {Theory}},
	url = {http://arxiv.org/abs/1802.08101},
	doi = {10.1142/11626},
	abstract = {Algebraic quantum ﬁeld theory and prefactorization algebra are two mathematical approaches to quantum ﬁeld theory. In this monograph, using a new coend deﬁnition of the Boardman-Vogt construction of a colored operad, we deﬁne homotopy algebraic quantum ﬁeld theories and homotopy prefactorization algebras and investigate their homotopy coherent structures. Homotopy coherent diagrams, homotopy inverses, A∞-algebras, E∞-algebras, and E∞-modules arise naturally in this context. In particular, each homotopy algebraic quantum ﬁeld theory has the structure of a homotopy coherent diagram of A∞-algebras and satisﬁes a homotopy coherent version of the causality axiom. When the time-slice axiom is deﬁned for algebraic quantum ﬁeld theory, a homotopy coherent version of the time-slice axiom is satisﬁed by each homotopy algebraic quantum ﬁeld theory. Over each topological space, every homotopy prefactorization algebra has the structure of a homotopy coherent diagram of E∞-modules over an E∞-algebra. To compare the two approaches, we construct a comparison morphism from the colored operad for (homotopy) prefactorization algebras to the colored operad for (homotopy) algebraic quantum ﬁeld theories and study the induced adjunctions on algebras.},
	language = {en},
	urldate = {2021-03-10},
	journal = {arXiv:1802.08101 [math-ph]},
	author = {Yau, Donald},
	month = nov,
	year = {2019},
	note = {arXiv: 1802.08101},
	keywords = {Mathematics - Algebraic Topology, Mathematics - Category Theory, Mathematical Physics, 18A25, 18A40, 18D10, 18D50, 18G55, 81T05},
	annote = {Comment: 302 pages},
	file = {bv-book1.pdf - OneDrive:C\:\\Users\\pangu\\Zotero\\storage\\N7Q62Q34\\onedrive.live.com.html:text/html;Yau - 2019 - Homotopical Quantum Field Theory.pdf:C\:\\Users\\pangu\\Zotero\\storage\\MXK8TERK\\Yau - 2019 - Homotopical Quantum Field Theory.pdf:application/pdf},
}

@article{fewster_algebraic_2015,
	title = {Algebraic quantum field theory in curved spacetimes},
	url = {http://arxiv.org/abs/1504.00586},
	abstract = {This article sets out the framework of algebraic quantum ﬁeld theory in curved spacetimes, based on the idea of local covariance. In this framework, a quantum ﬁeld theory is modelled by a functor from a category of spacetimes to a category of (C∗)-algebras obeying supplementary conditions. Among other things: (a) the key idea of relative Cauchy evolution is described in detail, and related to the stress-energy tensor; (b) a systematic ‘rigidity argument’ is used to generalise results from ﬂat to curved spacetimes; (c) a detailed discussion of the issue of selection of physical states is given, linking notions of stability at microscopic, mesoscopic and macroscopic scales; (d) the notion of subtheories and global gauge transformations are formalised; (e) it is shown that the general framework excludes the possibility of there being a single preferred state in each spacetime, if the choice of states is local and covariant. Many of the ideas are illustrated by the example of the free Klein–Gordon theory, which is given a new ‘universal deﬁnition’.},
	language = {en},
	urldate = {2021-03-18},
	journal = {arXiv:1504.00586 [math-ph]},
	author = {Fewster, Christopher J. and Verch, Rainer},
	month = apr,
	year = {2015},
	note = {arXiv: 1504.00586},
	keywords = {Mathematical Physics},
	annote = {Comment: 62pp. Expository article with some new results},
	file = {Fewster y Verch - 2015 - Algebraic quantum field theory in curved spacetime.pdf:C\:\\Users\\pangu\\Zotero\\storage\\UREVDNWT\\Fewster y Verch - 2015 - Algebraic quantum field theory in curved spacetime.pdf:application/pdf},
}

@article{benini_operads_2021,
	title = {Operads for algebraic quantum field theory},
	volume = {23},
	issn = {0219-1997},
	url = {https://mathscinet.ams.org/mathscinet-getitem?mr=4201026},
	doi = {10.1142/S0219199720500078},
	number = {2},
	urldate = {2021-04-28},
	journal = {Communications in Contemporary Mathematics},
	author = {Benini, Marco and Schenkel, Alexander and Woike, Lukas},
	year = {2021},
	mrnumber = {4201026},
	pages = {2050007, 39},
	file = {MathSciNet Snapshot:C\:\\Users\\pangu\\Zotero\\storage\\TLSS6WEW\\publdoc.html:text/html;Versión aceptada:C\:\\Users\\pangu\\Zotero\\storage\\43R2UIVT\\Benini et al. - 2021 - Operads for algebraic quantum field theory.pdf:application/pdf},
}

@book{rejzner_perturbative_2016,
	series = {Mathematical {Physics} {Studies}},
	title = {Perturbative algebraic quantum field theory},
	isbn = {978-3-319-25899-7 978-3-319-25901-7},
	url = {https://mathscinet.ams.org/mathscinet-getitem?mr=3469848},
	urldate = {2021-06-21},
	publisher = {Springer, Cham},
	author = {Rejzner, Kasia},
	year = {2016},
	mrnumber = {3469848},
	doi = {10.1007/978-3-319-25901-7},
	annote = {An introduction for mathematicians},
	file = {MathSciNet Snapshot:C\:\\Users\\pangu\\Zotero\\storage\\VBW72BYP\\publdoc.html:text/html;Versión enviada:C\:\\Users\\pangu\\Zotero\\storage\\XZZZ3324\\Rejzner - 2016 - Perturbative algebraic quantum field theory.pdf:application/pdf},
}

@incollection{lurie_classification_2009,
	title = {On the classification of topological field theories},
	url = {https://mathscinet.ams.org/mathscinet-getitem?mr=2555928},
	urldate = {2021-06-21},
	booktitle = {Current developments in mathematics, 2008},
	publisher = {Int. Press, Somerville, MA},
	author = {Lurie, Jacob},
	year = {2009},
	mrnumber = {2555928},
	pages = {129--280},
}

@book{costello_factorization_2017,
	series = {New {Mathematical} {Monographs}},
	title = {Factorization algebras in quantum field theory. {Vol}. 1},
	volume = {31},
	isbn = {978-1-107-16310-2},
	url = {https://mathscinet.ams.org/mathscinet-getitem?mr=3586504},
	urldate = {2019-03-07},
	publisher = {Cambridge University Press, Cambridge},
	author = {Costello, Kevin and Gwilliam, Owen},
	year = {2017},
	mrnumber = {3586504},
	doi = {10.1017/9781316678626},
}

@article{bruinsma_relative_2021,
	title = {Relative {Cauchy} evolution for linear homotopy {AQFTs}},
	url = {http://arxiv.org/abs/2108.10592},
	abstract = {This paper develops a concept of relative Cauchy evolution for the class of homotopy algebraic quantum field theories (AQFTs) that are obtained by canonical commutation relation quantization of Poisson chain complexes. The key element of the construction is a rectification theorem proving that the homotopy time-slice axiom, which is a higher categorical relaxation of the time-slice axiom of AQFT, can be strictified for theories in this class. The general concept is illustrated through a detailed study of the relative Cauchy evolution for the homotopy AQFT associated with linear Yang-Mills theory, for which the usual stress-energy tensor is recovered.},
	urldate = {2021-09-06},
	journal = {arXiv:2108.10592 [hep-th, physics:math-ph]},
	author = {Bruinsma, Simen and Fewster, Christopher J. and Schenkel, Alexander},
	month = aug,
	year = {2021},
	note = {arXiv: 2108.10592},
	keywords = {Mathematical Physics, High Energy Physics - Theory, 81Txx, 18N40, ⛔ No DOI found},
	annote = {Comment: 32 pages},
	file = {arXiv Fulltext PDF:C\:\\Users\\pangu\\Zotero\\storage\\EN3JU7ZP\\Bruinsma et al. - 2021 - Relative Cauchy evolution for linear homotopy AQFT.pdf:application/pdf;arXiv.org Snapshot:C\:\\Users\\pangu\\Zotero\\storage\\882XGB8V\\2108.html:text/html},
}

@article{carmona_factorization_2021,
	title = {A model structure for locally constant factorization algebras},
	url = {http://arxiv.org/abs/2107.14174},
	abstract = {A model structure presenting the homotopy theory of locally constant factorization algebras is constructed. This answers a question raised by D.Calaque in his habilitation thesis.},
	urldate = {2021-09-06},
	journal = {arXiv:2107.14174 [math]},
	author = {Carmona, Victor and Flores, Ramon and Muro, Fernando},
	month = jul,
	year = {2021},
	note = {arXiv: 2107.14174},
	keywords = {Mathematics - Algebraic Topology, Mathematics - Category Theory, ⛔ No DOI found},
	annote = {Comment: Comments are welcome!},
	file = {arXiv Fulltext PDF:C\:\\Users\\pangu\\Zotero\\storage\\XCPJ5YTD\\Carmona et al. - 2021 - A model structure for locally constant factorizati.pdf:application/pdf;arXiv.org Snapshot:C\:\\Users\\pangu\\Zotero\\storage\\RC3GIG99\\2107.html:text/html},
}

@incollection{fredenhagen_quantum_2007,
	series = {Lecture {Notes} in {Phys}.},
	title = {Quantum field theory: where we are},
	volume = {721},
	shorttitle = {Quantum field theory},
	url = {https://mathscinet.ams.org/mathscinet-getitem?mr=2397983},
	urldate = {2021-09-07},
	booktitle = {Approaches to fundamental physics},
	publisher = {Springer, Berlin},
	author = {Fredenhagen, K. and Rehren, K.-H. and Seiler, E.},
	year = {2007},
	mrnumber = {2397983},
	doi = {10.1007/978-3-540-71117-9_4},
	pages = {61--87},
	file = {MathSciNet Snapshot:C\:\\Users\\pangu\\Zotero\\storage\\ZYS55ILI\\publdoc.html:text/html;Versión enviada:C\:\\Users\\pangu\\Zotero\\storage\\LHF6HBJW\\Fredenhagen et al. - 2007 - Quantum field theory where we are.pdf:application/pdf},
}

@incollection{fredenhagen_global_1993,
	series = {Math. {Phys}. {Stud}.},
	title = {Global observables in local quantum physics},
	volume = {16},
	url = {https://mathscinet.ams.org/mathscinet-getitem?mr=1276280},
	urldate = {2021-09-07},
	booktitle = {Quantum and non-commutative analysis ({Kyoto}, 1992)},
	publisher = {Kluwer Acad. Publ., Dordrecht},
	author = {Fredenhagen, Klaus},
	year = {1993},
	mrnumber = {1276280},
	pages = {41--51},
	file = {MathSciNet Snapshot:C\:\\Users\\pangu\\Zotero\\storage\\UQF7KNVQ\\publdoc.html:text/html},
}

@article{beke_sheafifiable_2000,
	title = {Sheafifiable homotopy model categories},
	volume = {129},
	issn = {0305-0041},
	url = {https://mathscinet.ams.org/mathscinet-getitem?mr=1780498},
	doi = {10.1017/S0305004100004722},
	number = {3},
	urldate = {2021-12-20},
	journal = {Mathematical Proceedings of the Cambridge Philosophical Society},
	author = {Beke, Tibor},
	year = {2000},
	mrnumber = {1780498},
	pages = {447--475},
	file = {MathSciNet Snapshot:C\:\\Users\\pangu\\Zotero\\storage\\88HLY9XD\\publdoc.html:text/html;Versión enviada:C\:\\Users\\pangu\\Zotero\\storage\\GHV27X9I\\Beke - 2000 - Sheafifiable homotopy model categories.pdf:application/pdf},
}

@book{hovey_model_1999,
	series = {Mathematical {Surveys} and {Monographs}},
	title = {Model categories},
	volume = {63},
	isbn = {978-0-8218-1359-1},
	url = {https://mathscinet.ams.org/mathscinet-getitem?mr=1650134},
	urldate = {2022-02-15},
	publisher = {American Mathematical Society, Providence, RI},
	author = {Hovey, Mark},
	year = {1999},
	mrnumber = {1650134},
	file = {MathSciNet Snapshot:C\:\\Users\\pangu\\Zotero\\storage\\U4EFKY66\\publdoc.html:text/html},
}

@incollection{fredenhagen_generalizations_1990,
	title = {Generalizations of the theory of superselection sectors},
	url = {https://mathscinet.ams.org/mathscinet-getitem?mr=1147469},
	urldate = {2022-03-29},
	booktitle = {The algebraic theory of superselection sectors ({Palermo}, 1989)},
	publisher = {World Sci. Publ., River Edge, NJ},
	author = {Fredenhagen, Klaus},
	year = {1990},
	mrnumber = {1147469},
	pages = {379--387},
	file = {MathSciNet Snapshot:C\:\\Users\\pangu\\Zotero\\storage\\7LTUMFW5\\publdoc.html:text/html},
}

@article{carmona_Bousfield_2022,
	title = {When {Bousfield} localizations and homotopy idempotent functors meet again},
	url = {http://arxiv.org/abs/2203.15849},
	abstract = {We adopt semimodel categories to extend fundamental results related to Bousfield localizations of model categories. More specifically, we generalize Bousfield-Friedlander Theorem and Hirschhorn Localization Theorem of cellular model categories to settings where their classical formulation does not apply. We use such results to answer, in the world of semimodel categories, an open problem posed by May-Ponto about the existence of Bousfield localizations for Hurewicz and mixed type model structures.},
	urldate = {2022-04-04},
	journal = {arXiv:2203.15849 [math]},
	author = {Carmona, Victor},
	month = mar,
	year = {2022},
	note = {arXiv: 2203.15849},
	keywords = {Mathematics - Algebraic Topology, Mathematics - Category Theory},
	annote = {Comment: Comments are welcome},
	file = {arXiv Fulltext PDF:C\:\\Users\\pangu\\Zotero\\storage\\EGWEGGUP\\Victor - 2022 - When Bousfield localizations and homotopy idempote.pdf:application/pdf;arXiv.org Snapshot:C\:\\Users\\pangu\\Zotero\\storage\\FT97D83D\\2203.html:text/html},
}

@misc{benini_strictification_2022,
	title = {Strictification theorems for the homotopy time-slice axiom},
	url = {http://arxiv.org/abs/2208.04344},
	doi = {10.48550/arXiv.2208.04344},
	abstract = {It is proven that the homotopy time-slice axiom for many types of algebraic quantum field theories (AQFTs) taking values in chain complexes can be strictified. This includes the cases of Haag-Kastler-type AQFTs on a fixed globally hyperbolic Lorentzian manifold (with or without time-like boundary), locally covariant conformal AQFTs in two spacetime dimensions, locally covariant AQFTs in one spacetime dimension, and the relative Cauchy evolution. The strictification theorems established in this paper prove that, under suitable hypotheses that hold true for the examples listed above, there exists a Quillen equivalence between the model category of AQFTs satisfying the homotopy time-slice axiom and the model category of AQFTs satisfying the usual strict time-slice axiom.},
	urldate = {2022-08-17},
	publisher = {arXiv},
	author = {Benini, Marco and Carmona, Victor and Schenkel, Alexander},
	month = aug,
	year = {2022},
	note = {arXiv:2208.04344 [hep-th, physics:math-ph]},
	keywords = {81T05, 81T70, 18N40, 18N55, High Energy Physics - Theory, Mathematical Physics, Mathematics - Algebraic Topology},
	annote = {Comment: 20 pages},
	file = {arXiv Fulltext PDF:C\:\\Users\\pangu\\Zotero\\storage\\HMMDYVUV\\Benini et al. - 2022 - Strictification theorems for the homotopy time-sli.pdf:application/pdf;arXiv.org Snapshot:C\:\\Users\\pangu\\Zotero\\storage\\I5WGKSPC\\2208.html:text/html},
}

@article{benini_skeletal_2022,
	title = {A skeletal model for 2d conformal {AQFTs}},
	volume = {395},
	issn = {0010-3616},
	url = {https://mathscinet.ams.org/mathscinet-getitem?mr=4483019},
	doi = {10.1007/s00220-022-04428-4},
	number = {1},
	urldate = {2023-01-03},
	journal = {Communications in Mathematical Physics},
	author = {Benini, Marco and Giorgetti, Luca and Schenkel, Alexander},
	year = {2022},
	mrnumber = {4483019},
	pages = {269--298},
	file = {MathSciNet Snapshot:C\:\\Users\\pangu\\Zotero\\storage\\LFIQHCQF\\publdoc.html:text/html;Texto completo:C\:\\Users\\pangu\\Zotero\\storage\\MNXLP395\\Benini et al. - 2022 - A skeletal model for 2d conformal AQFTs.pdf:application/pdf},
}

@article{haag_algebraic_1964,
	title = {An algebraic approach to quantum field theory},
	volume = {5},
	issn = {0022-2488},
	url = {https://mathscinet.ams.org/mathscinet-getitem?mr=165864},
	doi = {10.1063/1.1704187},
	urldate = {2021-09-07},
	journal = {Journal of Mathematical Physics},
	author = {Haag, Rudolf and Kastler, Daniel},
	year = {1964},
	mrnumber = {165864},
	pages = {848--861},
	file = {MathSciNet Snapshot:C\:\\Users\\pangu\\Zotero\\storage\\54DAXTS4\\publdoc.html:text/html},
}

\end{document}